\newtheorem{theorem}{Theorem}[section]
\newtheorem{lemma}[theorem]{Lemma}
\newtheorem{corollary}[theorem]{Corollary}
\newtheorem{definition}[theorem]{Definition}
\newtheorem{construction}[theorem]{Construction}
\newtheorem{remark}[theorem]{Remark}
\newtheorem{claim}[theorem]{Claim}
\newcommand{\Gaussbinom}{\genfrac{[}{]}{0pt}{}}
\begin{document}

\title{Covering Grassmannian Codes: Bounds and Constructions}

\author{Bingchen Qian, Xin Wang, Chengfei Xie and Gennian Ge
\thanks{The research of G. Ge was supported by National Key Research and Development Program of China under Grant Nos. 2020YFA0712100 and 2018YFA0704703, the National Natural Science Foundation of China under Grant No. 11971325, and Beijing Scholars Program.}
\thanks{B. Qian is with the School of Mathematical Sciences, Capital Normal University, Beijing 100048, China. He is also  with the School of Mathematical Sciences, Zhejiang University, Hangzhou 310027, Zhejiang, China (email: qianbingchen@zju.edu.cn).}
\thanks{X. Wang is with the Department of Mathematics, Soochow University, Suzhou 215005, Jiangsu, China (email: xinw@suda.edu.cn).}
\thanks{C. Xie is with the School of Mathematical Sciences, Capital Normal University, Beijing 100048, China (email: qydgl@qq.com).}
\thanks{G. Ge is with the School of Mathematical Sciences, Capital Normal University,
Beijing 100048, China (e-mail: gnge@zju.edu.cn).}
}
\maketitle

\smallskip
\begin{abstract}
Grassmannian $\mathcal{G}_q(n,k)$ is the set of all $k$-dimensional subspaces of the vector space $\mathbb{F}_q^n.$ Recently, Etzion and Zhang introduced a new notion called covering Grassmannian code which can be used in network coding solutions for generalized combination networks. An $\alpha$-$(n,k,\delta)_q^c$ covering Grassmannian code $\mathcal{C}$ is a subset of $\mathcal{G}_q(n,k)$ such that every set of $\alpha$ codewords of $\mathcal{C}$ spans a subspace of dimension at least $\delta +k$ in $\mathbb{F}_q^n.$ In this paper, we derive new upper and lower bounds on the size of covering Grassmannian codes. These bounds improve and extend the parameter range of known bounds.

\medskip
\noindent{\it Keywords: Covering Grassmannian codes, Generalized combination networks, Hypergraph}

\end{abstract}
\section{Introduction}
Network coding has attracted increasing attention in the last twenty years because of the seminal work of Ahlswede et al. \cite{MR1768542} and Li, Yeung and Cai \cite{LYC}, which showed that the throughput can be increased significantly by not just forwarding packets but also performing linear combinations of them.
A multicast network is a directed acyclic graph with one source node which has $h$ messages and $N$ receivers each of them demands all the $h$ messages of the source node to be transmitted in one round of a network use. For more details on network coding for multicast networks, we refer the survey \cite{MR3440232}.

The random network coding proposed in \cite{MR2300827} was an important step in the development of network coding.
 K\"{o}tter and Kschischang \cite{MR2451015} introduced an algebraic framework for error-correction in random network coding. Their model established a new notion of error-correcting codes, so-called constant-dimension codes. These are sets of $k$-dimensional subspaces of a vector space over a finite field, $k$-subspaces for short. For their purpose, they defined the subspace distance as $d_s(U,V)=\dim(U+V)-\dim(U)-\dim(V)=\dim(U)+\dim(V)-2\dim(U\cap V).$

 In \cite{MR3964845}, Etzion and Zhang generalized the subspace distance and introduced the covering Grassmannian codes for generalized combination networks. They also proved that optimal covering Grassmannian codes have minimal requirements for network coding solutions of some generalized combination networks. This family of networks was used in \cite{MR3782267}  and \cite{MR4306350} to show that vector network coding outperforms scalar linear network coding on multicast networks. An $\alpha$-$(n,k,\delta)_q^c$ covering Grassmannian code $\mathcal{C}$ ($\alpha$-$(n,k,\delta)_q^c$ code for short) consists of a set of $k$-subspaces of $\mathbb{F}_q^n$ such that every set of $\alpha$ codewords spans a subspace of dimension at least $k+\delta.$ When $\alpha=2$, the covering Grassmannian codes are equivalent to the constant-dimension codes. Denote $B_q(n, k, \delta; \alpha)$ as the maximum size of an $\alpha$-$(n,k,\delta)_q^c$ code in which there are no repeated codewords.

 The dual subspaces and the related structures were also considered in \cite{MR3964845}. A subspace packing (multiple Grassmannian code) $t$-$(n,k,\lambda)_q$ is a collection $\mathcal{C}$ of $k$-subspaces of $\mathbb{F}_q^n$ such that each $t$-subspace of $\mathbb{F}_q^n$ is contained in at most $\lambda$ blocks. Denote $A_q(n,k,t;\lambda)$ as the maximum number of $k$-subspaces in a $t$-$(n,k,\lambda)_q$ subspace packing without repeated blocks. The special case $\lambda=1$ corresponds to constant-dimension codes.  Subspace packings and covering Grassmannian codes are equivalent as it was proved in \cite{MR3964845} that
\begin{equation*}
  B_q(n,k,\delta;\alpha)=A_q(n,n-k,n-k-\delta+1;\alpha-1).
\end{equation*}

In this paper, we give some new upper and lower bounds for covering Grassmannian codes. Our main contributions are the following:

\begin{itemize}
  \item a new general upper bound via counting argument improving the previous results when $\delta>\alpha-1$ (Theorem \ref{subspace general upper bound}) and modified upper bounds by considering more configurations (Theorems \ref{37} and \ref{39}),
  \item the connection between covering Grassmanian codes and hypergraphs (Theorems \ref{upper bound by subspace cycle free}, \ref{upper bound by berge path free}, \ref{223} and \ref{last}),
  \item the constructions for all parameters of $3$-$(n,k,\delta)_2^c$ codes when $\delta>k$ (Theorems \ref{lower bound for B2(n,k,k+1;3)}, \ref{lower bound for B2(n,k,k+2;3)}, \ref{lower bound for B2(n,k,2k-1;3)} and \ref{lower bound for B2(n,k,2k;3)}),
  \item two lower bounds based on spreads (Theorem \ref{51}) and probabilistic methods (Theorems \ref{basepm} and \ref{55}) which determine the magnitude of $B_q(n,k,\delta;\alpha)$ when $\alpha-1\mid\delta-1$ (Corollary \ref{54}).
\end{itemize}

The rest of this paper is organized as follows. In Section \ref{Pre}, we introduce some basic definitions and some known results about hypergraph, and list known upper and lower bounds for covering Grassmannian codes. Two kinds of upper bounds for covering Grassmannian codes are given in Section \ref{Upp}. In Section \ref{Con}, we give several constructions for all parameters of $3$-$(n,k,k+\gamma)_2^c$ codes and one construction for $q>2$. In Section \ref{Ext}, two kinds of lower bounds are introduced, one is based on the existence of spreads and the other is based on the probabilistic methods. We conclude in the last section.

\section{Preliminaries}\label{Pre}
Let $\mathbb{F}_q$ be the finite field with $q$ elements and $\mathbb{F}_{q^m}, m\ge1$ be the extension field with $q^m$ elements. We use $\mathbb{F}_q^{m\times n}$ to denote the set of all $m\times n$ matrices over $\mathbb{F}_q,$ and $\mathbb{F}_{q^m}^n$ to denote the set of all row vectors of length $n$ over $\mathbb{F}_{q^m}.$ The rank of a matrix $A\in\mathbb{F}_q^{m\times n}$ is denoted by $rank(A).$ Denote $I_k$ and $O_k$ (or sometimes $O$ with no confusion) as the $k\times k$ identity and zero matrices, respectively. Let $A$ be a $k\times l$ matrix over $\mathbb{F}_q$, then the matrix $[I_k~A]$ can be viewed as a generator matrix of a $k$-subspace of $\mathbb{F}_q^{k+l},$ and it is called the lifting of $A$ \cite{MR2450762}.

The Grassmannian $\mathcal{G}_q(n,k)$ is the set of all $k$-subspaces of the vector space $\mathbb{F}_q^n.$ The cardinality of $\mathcal{G}_q(n,k)$ is the well-known $q$-binomial coefficient (also known as the Gaussian coefficient):
\begin{equation*}
  |\mathcal{G}_q(n,k)|=\Gaussbinom{n}{k}_q:=\prod_{i=0}^{k-1}\frac{q^n-q^i}{q^k-q^i}.
\end{equation*}
A good approximation of the $q$-binomial coefficient can be found in \cite{MR2451015}:
\[
q^{k(n-k)}\le\Gaussbinom{n}{k}_q<\tau \cdot q^{k(n-k)},
\]
where $\tau\approx3.48.$

A $k$-spread $\mathcal{S}$ is a collection of $k$-subspaces of the vector space $\mathbb{F}_q^n$, such that the subspaces in $\mathcal{S}$ intersect only trivially, and their union is $\mathbb{F}_q^n.$ It is known that the number of subspaces in a $k$-spread is $|\mathcal{S}|=\frac{q^n-1}{q^k-1}$ and $k$-spreads exist when $t\mid n$ \cite{MR63056}.

The upper bounds on the size of covering Grassmannian codes have been investigated with different objects, such as generalized combination network \cite{MR3964845,MR4306350}, subspace packing \cite{MR4149371} and independent configuration \cite{MR4173608}. For convenience, we restate the known upper bounds via the notation of covering Grassmannian codes.

In \cite{MR3964845}, Etzion and Zhang generalised the classic upper bound from a packing argument to get the following results. Some improvements which are related to this bound can be found in \cite{MR3964845,MR4149371}.
\begin{theorem} [\cite{MR3964845}]\label{upper bound of Etzion}
  If $n,k,\delta,$ and $\alpha$ are positive integers such that $1<k<n,$ $1\le\delta\le n-k$ and $2\le\alpha\le\Gaussbinom{k+\delta-1}{k}_q+1,$ then
  \begin{equation*}
    B_q(n,k,\delta;\alpha)\le\left\lfloor(\alpha-1)\frac{\Gaussbinom{n}{\delta+k-1}_q}{\Gaussbinom{n-k}{\delta-1}_q}\right\rfloor.
  \end{equation*}
\end{theorem}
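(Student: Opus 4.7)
The plan is to prove this upper bound by a double-counting (packing) argument on incidences between codewords and $(k+\delta-1)$-subspaces of $\mathbb{F}_q^n$. Set $w=k+\delta-1$ and let $\mathcal{C}$ be any $\alpha$-$(n,k,\delta)_q^c$ code.

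The key structural observation will be that no $w$-subspace $U\subseteq\mathbb{F}_q^n$ can contain $\alpha$ distinct codewords of $\mathcal{C}$. Indeed, if $V_1,\dots,V_\alpha\in\mathcal{C}$ all lie inside $U$, then
\[
\dim(V_1+\cdots+V_\alpha)\le\dim U = w = k+\delta-1,
\]
which contradicts the defining property that any $\alpha$ codewords span a subspace of dimension at least $k+\delta$. The hypothesis $\alpha\le\Gaussbinom{w}{k}_q+1$ ensures only that this forbidden configuration is arithmetically possible, so the bound is not vacuous.

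Next I would double-count the incidence set
\[
\mathcal{F}=\{(V,U):V\in\mathcal{C},\ U\in\mathcal{G}_q(n,w),\ V\subseteq U\}.
\]
Fixing $V\in\mathcal{C}$, the $w$-subspaces containing $V$ correspond bijectively to $(\delta-1)$-subspaces of the quotient $\mathbb{F}_q^n/V$, giving $\Gaussbinom{n-k}{\delta-1}_q$ choices. Fixing $U\in\mathcal{G}_q(n,w)$, the structural observation yields at most $\alpha-1$ codewords contained in $U$. Hence
\[
|\mathcal{C}|\cdot\Gaussbinom{n-k}{\delta-1}_q=|\mathcal{F}|\le(\alpha-1)\cdot\Gaussbinom{n}{w}_q,
\]
and dividing through, together with the fact that $|\mathcal{C}|$ is an integer, produces the claimed floor bound. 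The condition $1\le\delta\le n-k$ ensures that $\Gaussbinom{n-k}{\delta-1}_q$ is a positive integer, so the division is well defined.

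No real obstacle is anticipated: the argument is a direct generalization of the classical packing (anticode) bound for constant-dimension codes, the only new ingredient being that each container $w$-subspace is permitted to carry up to $\alpha-1$ codewords rather than just one. The main interpretive point worth emphasizing in the writeup is why the upper bound $\alpha-1$ on codewords per $w$-subspace is tight in spirit, namely that $\alpha$ codewords inside any $w$-space would immediately violate the covering inequality.
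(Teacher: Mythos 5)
Your proof is correct. The paper cites this theorem from \cite{MR3964845} without reproducing a proof, but it explicitly describes the result as a generalization of the classic packing argument, which is exactly your double-counting of incidences between codewords and $(k+\delta-1)$-subspaces; the observation that a $(k+\delta-1)$-subspace can contain at most $\alpha-1$ codewords, and the counts $\Gaussbinom{n-k}{\delta-1}_q$ and $\Gaussbinom{n}{k+\delta-1}_q$, are all accurate.
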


In \cite{MR4306350}, Liu et al. gave the following theorem.
\begin{theorem}[\cite{MR4306350}]\label{Liu}
  If $n,k,\delta,$ and $\alpha$ are positive integers such that $1<k<n,$ $1\le\delta\le\min\{k,n-k\},$ and $2\le\alpha\le\Gaussbinom{k+\delta-1}{k}_q+1,$ then
  \[
  B_q(n,k,\delta;\alpha)\le\Gaussbinom{n-\delta}{k}_q\left(\left(\alpha-\left\lfloor\frac{k+\delta}{k}\right\rfloor+1\right)\frac{q^k-1}{q-1}-1\right)
 +\left\lfloor\frac{k+\delta}{k}\right\rfloor -1.
  \]
\end{theorem}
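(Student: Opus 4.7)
My approach is a double-counting argument keyed to a family of witness subspaces. Write $\beta := \lfloor(k+\delta)/k\rfloor$; since the hypothesis $\delta\le k$ forces $\beta\in\{1,2\}$ (with $\beta=2$ precisely when $\delta=k$), the bound decomposes into a generic case ($\delta<k$, $\beta=1$) and a boundary case ($\delta=k$, $\beta=2$). I would prove the generic case first and then reduce the boundary case to it.

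For the generic case, the natural family with which to double-count has cardinality $\Gaussbinom{n-\delta}{k}_q$; the candidate I would try first is the collection $\mathcal{F}$ of $(k+\delta)$-subspaces of $\mathbb{F}_q^n$ containing a fixed $\delta$-subspace $D$, since $|\mathcal{F}|=\Gaussbinom{n-\delta}{k}_q$ exactly. The key per-block estimate I would aim for is: each block $W\in\mathcal{F}$ contains at most $\alpha\cdot\frac{q^k-1}{q-1}-1$ codewords. This should follow from applying the $\alpha$-covering property inside $W$: any $\alpha$ codewords contained in $W$ must span $W$ entirely (because $\dim W=k+\delta$), so dualising within $W$ yields a family of $\delta$-subspaces of the dual $W^*$ such that every $\alpha$ of them intersect only in $\{0\}$; a point-incidence count in $W^*$ controls the size of such a family, and a refinement exploiting the distinguished role of $D$ inside $W$ (equivalently, of its dual $k$-subspace inside $W^*$) sharpens the naive count by one. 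Summing over $W\in\mathcal{F}$, using that each codeword $V$ with $V\cap D=\{0\}$ has $V+D$ as its unique enclosing block and that the remaining codewords contribute a subleading amount absorbed by the ``$-1$'', then rearranging via standard $q$-binomial identities gives the claimed bound.

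The boundary case $\delta=k$ (where $\beta=2$) is then handled by first extracting one codeword $V_0\in\mathcal{C}$ — this accounts for the additive term $+(\beta-1)=+1$ — projecting $\mathcal{C}\setminus\{V_0\}$ into $\mathbb{F}_q^n/V_0\cong\mathbb{F}_q^{n-k}$, and applying the generic-case bound to the resulting $(\alpha-1)$-covering configuration after the parameter shift $(n,\alpha)\to(n-k,\alpha-1)$, which exactly produces the coefficient $\alpha-\beta+1=\alpha-1$ in the formula.

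The step I expect to be the main obstacle is obtaining the sharp per-block constant with the precise ``$-1$'' correction. A naive point-incidence count inside $W$ gives only the weaker bound $\frac{(\alpha-1)(q^{k+\delta}-1)}{q^\delta-1}$, and the improvement to $\alpha\frac{q^k-1}{q-1}-1$ requires identifying and excluding the specific degenerate configuration of dual $\delta$-subspaces that the coarse argument over-counts. Getting this sharpening right, and then reconciling it with the global double-count so that the answer lands exactly on the stated form, is the delicate core of the argument.
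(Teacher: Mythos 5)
This theorem is not proved in the present paper; it is quoted verbatim from \cite{MR4306350}, so there is no in-paper argument to compare against. Judged on its own terms, your plan has two serious unresolved gaps, one of which looks fatal.

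The central per-block estimate --- that a $(k+\delta)$-subspace $W$ can contain at most $\alpha\,\frac{q^k-1}{q-1}-1$ codewords --- is exactly what you flag as ``the delicate core'' and do not establish, and it does not appear to be true. Take $\alpha=2$ and $\delta=1$ (allowed by the hypotheses, with any $k\ge 2$). Inside a $(k+1)$-dimensional $W$, any two distinct $k$-subspaces automatically span $W$, so the covering condition imposes no restriction at all inside $W$ and one may place all $\Gaussbinom{k+1}{k}_q=\frac{q^{k+1}-1}{q-1}$ of its $k$-subspaces in the code. For every $q\ge 3$ this exceeds $2\frac{q^k-1}{q-1}-1$, so the improvement you seek over the naive incidence count $(\alpha-1)\frac{q^{k+\delta}-1}{q^{\delta}-1}$ is simply unavailable: the ``degenerate configuration'' you hope to exclude is unavoidable. (This also suggests the statement as transcribed here may not match \cite{MR4306350} exactly; it is worth rechecking against the source before investing more effort.)

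The global accounting step is also incomplete. Summing a per-block bound over the $\Gaussbinom{n-\delta}{k}_q$ blocks $W\supseteq D$ only controls codewords $V$ with $V\cap D=\{0\}$, since those are precisely the codewords lying in some $W\in\mathcal{F}$ (namely $W=V+D$). You dismiss the codewords meeting $D$ as a ``subleading amount absorbed by the $-1$,'' but that $-1$ is a per-block correction totalling exactly $\Gaussbinom{n-\delta}{k}_q$, and no argument is given for why the codewords meeting a single fixed $\delta$-subspace $D$ number at most that many; in general they need not. Some averaging over $D$, or a separate induction on $\delta$, would be needed here. Finally, the $\delta=k$ reduction by quotienting out a single codeword $V_0$ is not obviously coherent: images of codewords in $\mathbb{F}_q^n/V_0$ need not be $k$-dimensional, and the span condition on $\alpha-1$ codewords together with $V_0$ only yields $\dim(\overline{V_1}+\dots+\overline{V_{\alpha-1}})\ge k$ in the quotient, i.e.\ effective deficit $\delta'=0$, not the $\delta'=k$ needed to invoke the generic case with the same $k,\delta$.
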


In \cite{MR4173608}, Cai et al. proposed a combinatorial structure called independent configuration which is equivalent to an $\alpha$-$(n,k,(\alpha-1)k)_q^c$ code and derived an upper bound.

\begin{theorem}[\cite{MR4173608}]\label{upper bound for IC}
If $n,k$ and $\alpha$ are positive integers such that $k\mid n$, then $B_q(n,k,(\alpha-1)k;\alpha)\le\frac{q^{n-\alpha k+2k}-1}{q^k-1}+\alpha-2.$
\end{theorem}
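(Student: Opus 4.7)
The idea is to reduce the problem to bounding a partial $k$-spread in a quotient space of dimension divisible by $k$, which is then handled by the classical spread theorem recalled in Section~\ref{Pre}. Let $\mathcal{C}$ be an $\alpha$-$(n,k,(\alpha-1)k)_q^c$ code. I may assume $|\mathcal{C}|\ge\alpha$, for otherwise the claimed inequality is immediate (the right-hand side is already at least $\alpha-1$ whenever $n\ge(\alpha-1)k$). I would first fix any $\alpha-2$ codewords $V_1,\ldots,V_{\alpha-2}\in\mathcal{C}$ and set $U:=V_1+\cdots+V_{\alpha-2}$. Since each codeword has dimension $k$ and every $\alpha$ codewords must span a subspace of dimension at least (hence exactly) $\alpha k$, any subset of codewords forms a direct sum; in particular $\dim U=(\alpha-2)k$.

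Next, I would pass to the quotient $\mathbb{F}_q^n/U$, which has dimension $n-(\alpha-2)k=n-\alpha k+2k$, a positive multiple of $k$ by the hypothesis $k\mid n$. For two distinct codewords $V,W\in\mathcal{C}\setminus\{V_1,\ldots,V_{\alpha-2}\}$, the enlarged set $\{V,W,V_1,\ldots,V_{\alpha-2}\}$ contains exactly $\alpha$ codewords, so the defining property forces $\dim(V+W+U)=\alpha k=\dim V+\dim W+\dim U$. This direct-sum equality implies that the projections $\bar V:=(V+U)/U$ and $\bar W:=(W+U)/U$ are both $k$-dimensional and that $\bar V\cap\bar W=0$. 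Hence $\{\bar V:V\in\mathcal{C}\setminus\{V_1,\ldots,V_{\alpha-2}\}\}$ is a partial $k$-spread of $\mathbb{F}_q^n/U$.

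To finish, since the ambient dimension $n-\alpha k+2k$ is a multiple of $k$, the classical spread bound yields
\[
|\mathcal{C}|-(\alpha-2)\;\le\;\frac{q^{n-\alpha k+2k}-1}{q^k-1},
\]
which rearranges to the stated inequality. The crux of the argument is the decision to fix precisely $\alpha-2$ pivots rather than $\alpha-1$: this ensures that adjoining any two further codewords still invokes the hypothesis on an $\alpha$-element set, which is the only way the covering-Grassmannian condition directly yields trivial pairwise intersection of projections. I expect this choice, together with the verification that $(V+U)\cap(W+U)=U$, to be the main conceptual step; once it is in place, the divisibility $k\mid n$ makes the spread bound tight in the quotient and the inequality falls out.
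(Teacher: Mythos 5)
Since the paper states Theorem~\ref{upper bound for IC} as a cited result from \cite{MR4173608} and does not reproduce a proof, there is no in-paper argument to compare against; I can only assess your proposal on its own terms. Your argument is correct and clean: fixing $\alpha-2$ codewords, observing that the covering condition on $\alpha$-element subsets forces any $\le\alpha$ codewords into direct sum, and projecting the remaining codewords modulo $U=V_1\oplus\cdots\oplus V_{\alpha-2}$ gives pairwise trivially intersecting $k$-dimensional images (the verification that $(V+U)\cap(W+U)=U$ is exactly the needed step, and it also gives injectivity of the projection map). The quotient has dimension $n-(\alpha-2)k$, a multiple of $k$, so the classical spread bound gives $|\mathcal{C}|-(\alpha-2)\le\frac{q^{n-\alpha k+2k}-1}{q^k-1}$ as required. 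One caveat you should make explicit: the main argument needs $|\mathcal{C}|\ge\alpha$, which forces $n\ge\alpha k$, and your fallback for $|\mathcal{C}|<\alpha$ needs $n\ge(\alpha-1)k$; for $k\mid n$ with $n\le(\alpha-2)k$ the right-hand side drops to at most $\alpha-2$ while the code can still trivially attain $|\mathcal{C}|=\alpha-1$, so the stated inequality actually requires the implicit hypothesis $n\ge(\alpha-1)k$ (or $n\ge\alpha k$) that is presumably part of the scope of the cited theorem. With that hypothesis granted, your proof is complete.
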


Based on the second-order Bonferroni Inequality, Etzion et al. in \cite{MR4149371} derived an upper bound for $B_q(n,2,2;3)$.
\begin{theorem}[\cite{MR4149371}]
If $2(q+1)m>{\Gaussbinom{n-2}{1}_q}$ for a positive integer $m$ and $n\geq 3$, then
$$
B_q(n,2,2;3)\leq\left\lfloor{\Gaussbinom{n}{1}_q}\frac{m(m+1)}{2(q+1)m-{\Gaussbinom{n-2}{1}_q}}\right\rfloor.
$$
\end{theorem}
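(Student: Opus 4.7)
The plan is to study the point-codeword incidence structure, extract from the covering condition an upper bound on the second moment $\sum_P d_P^2$, and combine it with a one-parameter convexity estimate to isolate $C := |\mathcal{C}|$. Set $N := \Gaussbinom{n}{1}_q$, $L := \Gaussbinom{n-2}{1}_q$, and for each $1$-subspace $P \subset \mathbb{F}_q^n$ let $d_P := |\{U \in \mathcal{C} : P \subset U\}|$. Since every codeword is a $2$-subspace containing $q+1$ points, double counting yields the first-moment identity $\sum_P d_P = C(q+1)$.

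To bound $\sum_P d_P^2$ from above, I would reformulate the $3$-$(n,2,2)_q^c$ condition as: no $3$-subspace of $\mathbb{F}_q^n$ contains three codewords. Each codeword $U$ lies in exactly $L$ distinct $3$-subspaces of $\mathbb{F}_q^n$, and each such $3$-subspace contains at most one codeword besides $U$. Hence the number of $V \in \mathcal{C}$ with $V \neq U$ and $\dim(U \cap V) = 1$ is at most $L$. Summing over $U$, and noting that each unordered pair of codewords meeting in dimension $1$ is counted exactly once at its unique common point, one obtains $2 \sum_P \binom{d_P}{2} \le CL$, equivalently $\sum_P d_P^2 \le C(L + q + 1)$.

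The final step is a pointwise convexity trick. For each non-negative integer $d$, the quadratic $f(d) := d(2m+1-d)$ attains its integer maximum at $d \in \{m,m+1\}$, where $f(d) = m(m+1)$, so $f(d) \le m(m+1)$ in general. Summing over the $N$ points gives $\sum_P d_P(2m+1-d_P) \le Nm(m+1)$. On the other hand, expanding the left-hand side as $(2m+1)\sum_P d_P - \sum_P d_P^2$ and substituting the two identities above yields $\sum_P d_P(2m+1-d_P) \ge (2m+1)C(q+1) - C(L+q+1) = C\bigl(2(q+1)m - L\bigr)$. Combining the two estimates produces $C\bigl(2(q+1)m - L\bigr) \le Nm(m+1)$, and the hypothesis $2(q+1)m > L$ permits dividing to conclude $C \le Nm(m+1)/\bigl(2(q+1)m - L\bigr)$; taking the floor (since $C$ is an integer) yields the claimed inequality.

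The main obstacle is recognising the correct interpolating quadratic $f(d) = d(2m+1-d)$. Using only the first-order Bonferroni-type lower bound $\sum_P \binom{d_P}{2} \ge C(q+1) - N$ together with the incidence estimate $\sum_P \binom{d_P}{2} \le CL/2$ recovers only the $m=1$ specialisation $C \le 2N/(2(q+1)-L)$. The free parameter $m$ hidden in $f$ is precisely the device that interpolates between the first and second moments of the degree sequence $(d_P)$ and thereby produces the full family of bounds in the theorem.
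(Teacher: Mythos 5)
Your proof is correct and is essentially the same argument the cited reference uses: the pointwise inequality $d(2m+1-d)\le m(m+1)$ for nonnegative integers $d$ is precisely the second-order (Dawson--Sankoff) Bonferroni inequality applied to the point--codeword incidence counts $d_P$, combined with the double-counting identity $\sum_P d_P = C(q+1)$ and the observation that each codeword lies in exactly $\Gaussbinom{n-2}{1}_q$ three-dimensional subspaces, each of which carries at most one other codeword.
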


In \cite{MR4149371}, Etzion et al. also gave some constructions on covering Grassmannian codes and proposed the following lower bounds on $B_q(n,k,\delta;\alpha)$ for $\delta\leq k$.
\begin{theorem}[\cite{MR4149371}]
  Let $1\le\delta\le k, k+\delta\le n$ and $2\le\alpha\le q^k+1$ be integers.
  \begin{itemize}
    \item If $n<k+2\delta,$ then
    \begin{equation*}
      B_q(n,k,\delta;\alpha)\ge(\alpha-1)q^{\max\{k,n-k\}(\min\{k,n-k\}-\delta+1)}.
    \end{equation*}
    \item If $n\ge k+2\delta,$ then for each $t$ such that $\delta\le t\le n-k-\delta,$ we have
    \begin{itemize}
      \item If $t<k$, then
      \begin{equation*}
        B_q(n,k,\delta;\alpha)\ge(\alpha-1)q^{k(t-\delta+1)}B_q(n-t,k,\delta;\alpha).
      \end{equation*}
      \item If $t\ge k,$ then
      \begin{equation*}
        B_q(n,k,\delta;\alpha)\ge(\alpha-1)q^{t(k-\delta+1)}B_q(n-t,k,\delta;\alpha)+B_q(t+k-\delta,k,\delta;\alpha).
      \end{equation*}
    \end{itemize}
  \end{itemize}
\end{theorem}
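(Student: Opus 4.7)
The plan is to prove both bounds by explicit construction, combining the lifted MRD paradigm with an inductive decomposition of $\mathbb F_q^n$.

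\textbf{First bound ($n<k+2\delta$).} I would start from an MRD code $\mathcal M\subseteq\mathbb F_q^{k\times(n-k)}$ of minimum rank distance $\delta$ and size $q^{\max\{k,n-k\}(\min\{k,n-k\}-\delta+1)}$, which exists by Gabidulin's construction. The hypothesis $\alpha\le q^k+1$ ensures that the index $[\mathbb F_q^{k\times(n-k)}:\mathcal M]$ is at least $\alpha-1$, so I can pick $\alpha-1$ distinct cosets $\mathcal M+N_1,\ldots,\mathcal M+N_{\alpha-1}$. Lifting each matrix $A$ in their union via $U_A=\mathrm{rowspace}[I_k\mid A]$ yields $(\alpha-1)|\mathcal M|$ pairwise distinct $k$-subspaces of $\mathbb F_q^n$. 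Given any $\alpha$ of them, pigeonhole forces a pair $U_{A_i},U_{A_j}$ to share a coset; then $A_i-A_j$ is a nonzero element of $\mathcal M$ of rank at least $\delta$, and a direct row-reduction shows $\dim(U_{A_i}+U_{A_j})=k+\mathrm{rank}(A_i-A_j)\ge k+\delta$, so the full $\alpha$-span reaches the covering threshold.

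\textbf{Second bound ($n\ge k+2\delta$).} I would decompose $\mathbb F_q^n=U\oplus W$ with $\dim U=n-t$ and $\dim W=t$, and assemble the code in two layers (or three for $t\ge k$). The main layer consists of the subspaces $U_{V,B}=\mathrm{rowspace}[G_V\mid B]$, where $V$ ranges over an inductively obtained $\alpha$-$(n-t,k,\delta)_q^c$ code $\mathcal C_0$ in $U$, $G_V$ is a fixed generator matrix of $V$, and $B$ ranges over $\alpha-1$ cosets of an MRD code in $\mathbb F_q^{k\times t}$ of rank distance $\delta$. This layer contributes $(\alpha-1)q^{\max\{k,t\}(\min\{k,t\}-\delta+1)}B_q(n-t,k,\delta;\alpha)$ codewords, matching the first term of the claimed bound in both subcases $t<k$ and $t\ge k$. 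For $t\ge k$, I would additionally embed an inductively constructed $\alpha$-$(t+k-\delta,k,\delta)_q^c$ code in a $(t+k-\delta)$-dimensional subspace $W'\subseteq\mathbb F_q^n$ chosen so that $\dim(W'\cap U)=k-\delta$; this contributes the additive term.

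\textbf{Verification and main obstacle.} If the $V_i$ of $\alpha$ main-layer codewords are pairwise distinct, the $\alpha$-covering of $\mathcal C_0$ gives $\dim\sum V_i\ge k+\delta$ in $U$, which lifts to the full span. When some $V_i$ coincide, pigeonhole on the $\alpha-1$ MRD cosets produces a pair with matching $V$ and $B$-difference of rank at least $\delta$, again supplying the missing dimensions. The delicate step—and the one I expect to demand the most careful bookkeeping—is the mixed regime in which the $V_i$ only partially repeat and the corresponding $B_i$ split across different cosets, where neither the raw $\alpha$-covering nor the raw pigeonhole alone succeeds. Here one has to row-reduce the combined block matrix and show that the off-diagonal contributions from a within-coset pair together with the discrepancies among the distinct $V_i$ push the span past $k+\delta$. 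The analogous bookkeeping, plus controlling cross-layer intersections via the choice $\dim(W'\cap U)=k-\delta$, handles interactions between the main layer and the $W'$-layer.
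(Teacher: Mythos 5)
The theorem you are proving is cited from \cite{MR4149371}; the paper at hand does not reprove it, so I can only assess your proposal on its own merits.

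Your argument for the first bound ($n<k+2\delta$) is sound in spirit: lift a union of $\alpha-1$ cosets of a Gabidulin MRD code of rank distance $\delta$, and pigeonhole any $\alpha$ codewords into a coset to find a pair whose difference has rank at least $\delta$. One should note, however, that the index $[\mathbb F_q^{k\times(n-k)}:\mathcal M]$ equals $q^{\max\{k,n-k\}(\delta-1)}$, which is at least $\alpha-1\le q^k$ when $\delta\ge 2$ but degenerates to $1$ when $\delta=1$; so the coset-picking step needs that caveat.

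Your construction for the second bound ($n\ge k+2\delta$) has a genuine gap, and it is exactly the ``mixed regime'' you flag but then wave away. Taking the \emph{full} product $\{U_{V,B}:V\in\mathcal C_0,\ B\in\bigcup_{\ell}(\mathcal M+N_\ell)\}$ does not yield a valid $\alpha$-covering code. Consider $\alpha=3$, $q=2$, $k=\delta=t=3$, $n-t=6$. Take $V_1\ne V_3$ in $\mathcal C_0$ with $\dim(V_1+V_3)=k+1=4$ (a $3$-covering code places no constraint on pairs, so this is possible), and take $B_1=B_3=0\in\mathcal M$ and $B_2=N$ with $\operatorname{rank}(N)=1$ lying in a second coset (such $N$ exists since rank-$1$ matrices cannot belong to a rank-distance-$3$ code). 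Then
\[
U_{V_1,0}+U_{V_1,N}+U_{V_3,0}=(V_1+V_3)\times\operatorname{rowspace}(N),
\]
which has dimension $4+1=5<k+\delta=6$. So row-reducing the block matrix does \emph{not} push the span past $k+\delta$ here: when the equal-$V$ pair sits in different cosets, their $B$-difference can have rank as low as $1$, and the remaining distinct $V$'s may contribute only $k+1$ dimensions, with no cross term to save the day. Moreover, one cannot simply pick the cosets more cleverly so that all cross-coset differences also have rank $\ge\delta$: that would make $\bigcup_\ell(\mathcal M+N_\ell)$ a rank-metric code of distance $\delta$ and size $(\alpha-1)q^{\max\{k,t\}(\min\{k,t\}-\delta+1)}$, violating the Singleton-type bound for $\alpha\ge 3$. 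So the $(\alpha-1)$ multiplicative factor cannot come from blindly crossing the inductive code with a union of MRD cosets; the actual construction must tie the choice of coset to the choice of $V$ (or otherwise restrict which $(V,B)$ pairs are admitted), and that is precisely the missing idea in your proposal.
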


In \cite{MR4306350}, Liu et al. improved the theorem above by removing some conditions and got the following result.
\begin{theorem}[\cite{MR4306350}]
  Let $n,k,\delta$ and $\alpha$ be positive integers such that $1\le\delta\le k$, $\delta+k\le n$ and $\alpha\ge2.$ Then
  \[
  B_q(n,k,\delta;\alpha)\ge(\alpha -1)q^{\max\{k,n-k\}\left(\min\{k,n-k\}-\delta+1\right)}.
  \]
\end{theorem}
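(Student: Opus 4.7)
The strategy is to combine the standard lifted MRD construction with a coset-union trick, producing $\alpha-1$ disjoint ``parallel classes'' of constant-dimension codes whose union has the desired $\alpha$-covering property. Write $m=\min\{k,n-k\}$ and $M=\max\{k,n-k\}$. Since $\delta\le k$ together with $\delta+k\le n$ forces $\delta\le m$, a Delsarte--Gabidulin MRD code $\mathcal{A}_0\subseteq\mathbb{F}_q^{k\times(n-k)}$ with minimum rank distance $\delta$ and size $q^{M(m-\delta+1)}$ exists.

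Next I would choose $\alpha-1$ representatives $X_1=0,X_2,\ldots,X_{\alpha-1}$ of distinct cosets of $\mathcal{A}_0$ in $\mathbb{F}_q^{k\times(n-k)}$. A direct count shows there are $q^{k(n-k)-M(m-\delta+1)}=q^{M(\delta-1)}$ such cosets, which accommodates $\alpha-1$ in the regime where the claimed bound is meaningful. Set $\mathcal{M}=\bigsqcup_{i=1}^{\alpha-1}(\mathcal{A}_0+X_i)$ and define $\mathcal{C}=\{\,\text{rowspace}([I_k~A]):A\in\mathcal{M}\,\}\subseteq\mathcal{G}_q(n,k)$. Distinct matrices in $\mathcal{M}$ yield distinct lifted $k$-subspaces, so $|\mathcal{C}|=(\alpha-1)q^{M(m-\delta+1)}$, matching the target.

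To verify the $\alpha$-covering property I use the identity, obtained by Gauss-eliminating the stacked generator matrix,
\[
\dim\bigl(U_{A_1}+\cdots+U_{A_\alpha}\bigr)=k+\mathrm{rank}\begin{pmatrix} A_2-A_1\\ \vdots\\ A_\alpha-A_1\end{pmatrix}.
\]
Given any $\alpha$ distinct codewords with defining matrices $A_1,\ldots,A_\alpha\in\mathcal{M}$, pigeonhole across the $\alpha-1$ cosets forces two of them, say $A_i$ and $A_j$, to lie in a common coset; hence $A_i-A_j\in\mathcal{A}_0\setminus\{0\}$, which by the minimum-distance property of $\mathcal{A}_0$ has rank at least $\delta$. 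Up to an elementary row operation this difference appears as one of the blocks on the right, so the stacked matrix has rank at least $\delta$ and the span has dimension at least $k+\delta$.

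The main obstacle is really just the coset bookkeeping: one must package the pigeonhole step so that it produces a nonzero MRD codeword rather than merely an element of some coset, and one must check that the ambient space $\mathbb{F}_q^{k\times(n-k)}$ supplies enough cosets in the range of $\alpha$ under consideration. Once this is handled, the rest is a routine application of the Delsarte--Gabidulin existence result together with the standard lifting formula, and the condition $n<k+2\delta$ that appeared in the earlier version is simply no longer needed since the construction is insensitive to the relative sizes of $k$ and $n-k$ beyond the use of $m$ and $M$.
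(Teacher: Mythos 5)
Your coset-of-lifted-MRD construction is the right one for this bound, and the component steps you invoke are sound: the Delsarte--Gabidulin code of size $q^{M(m-\delta+1)}$ exists since $\delta\le k$ and $\delta+k\le n$ together give $\delta\le m=\min\{k,n-k\}$; the lifting map is injective, so $|\mathcal{C}|=|\mathcal{M}|$; the stacked-difference rank identity is standard; and pigeonhole over $\alpha-1$ cosets forces two of the $\alpha$ chosen matrices into a common coset, whose difference is a nonzero codeword of $\mathcal{A}_0$ sitting inside the row span of the stacked matrix and hence certifying rank at least $\delta$.

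The genuine gap is the one you flag and then wave away. The number of cosets is $q^{M(\delta-1)}$, and the claim that this ``accommodates $\alpha-1$ in the regime where the claimed bound is meaningful'' is an assertion, not a verification, and it is in fact false. When $\delta=1$ there is exactly one coset, so $\mathcal{M}=\mathbb{F}_q^{k\times(n-k)}$ and the construction produces only $q^{k(n-k)}$ subspaces, strictly less than $(\alpha-1)q^{k(n-k)}$ for every $\alpha\ge 3$. Worse, the theorem as printed (with only $\alpha\ge 2$ assumed) is actually false at such parameters: take $q=2$, $k=\delta=1$, $n=2$, $\alpha=3$; the claimed lower bound is $2\cdot 2=4$, yet $\Gaussbinom{2}{1}_2=3$. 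Your construction needs the explicit hypothesis $\alpha-1\le q^{M(\delta-1)}$; the precursor result's condition $2\le\alpha\le q^k+1$ supplies it whenever $\delta\ge 2$ (since $M\ge k$), and is almost certainly the constraint intended here, so you should state and use it rather than defer to ``meaningfulness.''
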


However, when $\delta>k,$ the constructions above do not work. In this paper,  we will give some constructions to derive new lower bounds for $B_q(n,k,\delta;\alpha)$ when $\delta>k.$

Finally in this section we introduce some results about hypergraph which are used in this paper. An $r$-uniform hypergraph $\mathcal{H}$ is a family of $r$-element subsets of a finite vertex set.  We usually denote its vertex set and edge set by $V(\mathcal{H})$ and $E(\mathcal{H})$, respectively.
\begin{definition}
   For a graph $F$ with vertex set $\{v_1,\ldots, v_p\}$ and edge set $\{e_1,\ldots , e_q\},$ a hypergraph
$\mathcal{H}$ contains a Berge $F$ if there exists a set of distinct vertices $\{w_1, \ldots, w_p\} \subseteq V(\mathcal{H})$ and distinct edges $\{f_1,\ldots , f_q\} \subseteq E(\mathcal{H}),$ such that if $e_i = v_\alpha v_\beta,$ then $\{w_\alpha, w_\beta\} \subseteq f_i.$ The vertices $\{w_1, \ldots, w_p\}$ are called the base vertices of the Berge $F.$
\end{definition}
We denote the Tur\'{a}n number $ex(n,BF)$ as the maximum number of edges of an $r$-uniform hypergraph on $n$ vertices which does not contain a Berge $F$. The case when $F$ is a cycle $C_k$ or a path $P_k$ with $k$ edges has been investigated in \cite{MR2900058} and \cite{MR3530631}, respectively.

\begin{theorem}[\cite{MR2900058}\label{hypergraph odd cycle free}]
For all $r,k\geq 3$, there exists a constant $c_{r,k}>0$, depending on $r$ and $k$, such that
$$
ex(n,BC_k)\leq c_{r,k}n^{1+\frac{1}{\lfloor k/2\rfloor}}.
$$
\end{theorem}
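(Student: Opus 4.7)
My plan is to reduce to the classical Bondy--Simonovits bound on the Tur\'{a}n number of even cycles in simple graphs, namely $\mathrm{ex}(n, C_{2\ell}) = O(n^{1 + 1/\ell})$. Let $\mathcal{H}$ be an $r$-uniform hypergraph on $n$ vertices with no Berge $C_k$, and set $\ell = \lfloor k/2 \rfloor$. I would manufacture from $\mathcal{H}$ an auxiliary simple graph $G$ on $V(\mathcal{H})$ whose number of edges is a constant fraction (depending only on $r$) of $|E(\mathcal{H})|$, and argue that $G$ itself cannot contain a $C_{2\ell}$. Applying Bondy--Simonovits to $G$ then gives $|E(\mathcal{H})| \le c_{r,k}\, n^{1 + 1/\ell}$.

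First I would perform a standard cleaning step, iteratively deleting vertices of below-average degree, to pass to a subhypergraph $\mathcal{H}' \subseteq \mathcal{H}$ of minimum degree $\Omega(m/n)$ (where $m = |E(\mathcal{H})|$) while retaining $\Omega(m)$ edges. Next, to each hyperedge $e \in E(\mathcal{H}')$ I would assign a representative pair $\{u_e, v_e\} \subseteq e$. The assignment is made greedily along an ordering obtained from a breadth-first search of $\mathcal{H}'$, ensuring both that no pair of vertices is used as a representative too many times and that at least a constant fraction of the hyperedges contribute distinct pairs. These pairs, viewed as edges, form the auxiliary simple graph $G$.

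The heart of the proof is then to show that a cycle $C_{2\ell}$ in $G$ lifts to a Berge $C_k$ in $\mathcal{H}$. For even $k = 2\ell$ the lift is immediate: the $2\ell$ hyperedges of $\mathcal{H}'$ witnessing the cycle edges of $G$, together with the cyclic vertex sequence $v_1, v_2, \ldots, v_{2\ell}$, directly realize a Berge $C_{2\ell}$. For odd $k = 2\ell + 1$, I would exploit the minimum-degree condition coming from cleaning to locate an additional hyperedge incident to some $v_i$ that contains a non-adjacent vertex $v_j$ of the cycle; this extra hyperedge is spliced in to turn the even Berge cycle into a Berge $C_{2\ell + 1}$. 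Either way, the no-Berge-$C_k$ hypothesis forces $G$ to be $C_{2\ell}$-free, and Bondy--Simonovits closes the argument.

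The main obstacle is ensuring that the lift in step three produces $k$ \emph{pairwise distinct} hyperedges. This can fail when several hyperedges of $\mathcal{H}'$ share the same representative pair, which is precisely what the greedy assignment in step two is engineered to control; the multiplicity bound on representative pairs is the technical heart of the argument. A secondary delicate point is the odd case, where the closing hyperedge must be chosen consistently with the BFS tree so as to be disjoint from the $2\ell$ cycle hyperedges. These two technicalities jointly account for the full dependence of the constant $c_{r,k}$ on both $r$ and $k$.
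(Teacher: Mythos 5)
This statement is not proved in the paper at all: it is Theorem~\ref{hypergraph odd cycle free}, quoted verbatim from the cited reference \cite{MR2900058} (Gy\H{o}ri and Lemons, \emph{Hypergraphs with no cycle of a given length}), and the paper uses it as a black box in Theorem~\ref{upper bound by subspace cycle free}. So there is no ``paper's own proof'' to compare against; what you have written is an attempted reconstruction of the Gy\H{o}ri--Lemons argument.

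Your high-level plan --- reduce a Berge-$C_k$-free $r$-uniform hypergraph to a $C_{2\lfloor k/2\rfloor}$-free simple graph and invoke Bondy--Simonovits --- is indeed the right strategy and is the one Gy\H{o}ri and Lemons pursue. However, as written your sketch has two genuine gaps. First, the assignment step is hand-waved at exactly the point where the real work lies: you assert that a BFS-ordered greedy rule lets you pick a representative pair $\{u_e,v_e\}\subseteq e$ for each hyperedge so that a constant fraction of hyperedges receive \emph{distinct} pairs, but you give no reason this should hold, and it is not automatic. Many hyperedges of a Berge-$C_k$-free hypergraph can share the same two vertices (this creates only Berge multi-edges, not Berge cycles of length $\geq 3$), so controlling multiplicity in the reduced graph requires a substantive argument; in the Gy\H{o}ri--Lemons proof this is handled by a careful structural/counting argument, not a one-line greedy claim, and a different but fully worked-out route is given in the Gerbner--Palmer line of results on Berge-$F$ Tur\'an numbers. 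Second, your odd case $k=2\ell+1$ is incorrect as stated. You propose to splice in a hyperedge containing $\{v_i,v_j\}$ for two \emph{non-adjacent} cycle vertices $v_i,v_j$; but such a hyperedge is a chord of the Berge $C_{2\ell}$, and combining a chord with an arc of the cycle produces a Berge cycle of length \emph{strictly less} than $2\ell$ (of length $|j-i|+1$ or $2\ell-|j-i|+1$, both at most $2\ell$), never one of length $2\ell+1$. To lengthen the cycle by one you would instead need to replace one cycle hyperedge $f_i$ (containing $\{v_i,v_{i+1}\}$) by two hyperedges passing through a \emph{new} vertex $w\in f_i\setminus\{v_1,\dots,v_{2\ell}\}$ (here $r\geq 3$ is essential), i.e.\ keep $f_i$ as the $\{v_i,w\}$-edge and find a fresh hyperedge $g\notin\{f_1,\dots,f_{2\ell}\}$ with $\{w,v_{i+1}\}\subseteq g$; establishing the existence of such a $g$ is again nontrivial and ties back to the same multiplicity/degree bookkeeping you skipped in step two. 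As it stands, the proposal conveys the correct reduction target but does not constitute a proof.
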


\begin{theorem}[\cite{MR3530631}]\label{hypergraph berge path free}
  Let $r\ge k\ge3,$ then
  \[
  ex(n,BP_k)\le\frac{(k-1)n}{r+1}.
  \]
\end{theorem}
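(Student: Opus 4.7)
The plan is to leverage an extremal argument based on a Berge path of maximum length in $\mathcal{H}$, combined with a weighted edge-counting that accounts for the denominator $r+1$. Let $\mathcal{H}$ be an $r$-uniform $BP_k$-free hypergraph on $n$ vertices, and let $P = v_0 e_1 v_1 \cdots e_s v_s$ be a Berge path of maximum length in $\mathcal{H}$; by hypothesis $s \le k-1$, so $|V(P)| = s + 1 \le k \le r$.

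The first key observation is structural: every edge $f \notin \{e_1, \ldots, e_s\}$ containing the endpoint $v_s$ must satisfy $f \subseteq V(P)$. Indeed, otherwise any $w \in f \setminus V(P)$ would allow extending $P$ by the pair $(f, w)$ to a Berge path of length $s+1$, contradicting maximality. Since $|V(P)| \le r = |f|$, the containment $f \subseteq V(P)$ is only possible when $|V(P)| = r$, i.e., when $s + 1 = k = r$; in all other regimes, $v_s$ lies only in path edges, giving the degree bound $d(v_s) \le s \le k - 1$. The symmetric statement holds for $v_0$. A Pósa-type rotation argument, in which a suffix $v_j e_{j+1} v_{j+1} \cdots e_s v_s$ of $P$ is rerouted using an alternative edge through $v_s$ to produce a new maximum Berge path with a different endpoint, then shows that many vertices of $V(P)$ serve as endpoints of some maximum Berge path, so the structural bound applies to all of them simultaneously.

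The third and most delicate step is to convert these local restrictions into the global bound $|E(\mathcal{H})| \le (k-1) n / (r+1)$. A natural approach is induction on $n$: remove an endpoint-like vertex $v$, apply the bound to $\mathcal{H} - v$, and absorb the removed edges. However, the naive degree bound $d(v) \le k - 1$ combined with $\sum_v d(v) = r|E|$ only yields $|E| \le (k-1)n/r$, weaker than the claimed $(k-1)n/(r+1)$. The sharper factor $r+1$ — my main obstacle — should arise from a fractional charging scheme that assigns each edge $r + 1$ units of weight distributed among vertices, where the extra unit beyond the $r$ vertices of the edge is charged to a distinguished path-extension partner furnished by the rotation argument. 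Calibrating such a weighting so that every vertex collects at most $k-1$ total units, while all edges are fully charged, is the heart of the argument; it is precisely here that the hypothesis $r \ge k$ is essential, as it guarantees $|V(P)| \le r$ for every maximum Berge path and hence forces the path to have an ample "extension budget" to absorb the extra unit per edge.
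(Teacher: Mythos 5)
The paper does not actually prove this theorem; it is quoted verbatim from \cite{MR3530631} and used as a black box, so there is no in-paper proof to compare your argument against.

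Evaluating your attempt on its own terms: it is not a proof, and you essentially say so yourself. The structural observation about the endpoint of a longest Berge path is fine as far as it goes (if $v_s$ lies in a non-path edge $f$ with a vertex outside $V(P)$, the path extends, and $f\subseteq V(P)$ forces $|V(P)|=r$, $s=k-1$, $r=k$), and it yields $d(v_s)\le s\le k-1$. But the step that actually carries the theorem --- replacing the denominator $r$ coming from $\sum_v d(v)=r|E|$ by $r+1$ --- is exactly the point at which you stop, announcing a hoped-for ``fractional charging scheme'' without constructing it, verifying that each edge receives a full $r+1$ units, or showing that the charge collected at each vertex stays below $k-1$. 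That calibration is the entire content of the theorem; absent it you have only the weaker $|E|\le (k-1)n/r$. In addition, the intermediate claim that a P\'{o}sa-type rotation produces many maximum Berge paths with distinct endpoints is asserted, not proved; rotation for Berge paths is not an automatic transfer from the graph case, because a single hyperedge can touch many base vertices and rerouting a suffix through an alternative edge must be shown to again yield a Berge path of the same length with all the distinctness conditions intact. Until both the rotation lemma and the charging scheme are supplied and checked against each other, this is an outline with a named gap, not a proof.
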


A hypergraph is linear if every two edges intersect in at most one vertex.  Given a linear $r$-graph $F$ and a positive integer $n$, the linear Tur\'{a}n number $ex_L(n, F)$ is the maximum number of edges in a linear $r$-graph $\mathcal{H}$ that does not contain $F$ as a subgraph.

\begin{definition}
A linear cycle of length $k$ is a hypergraph with edges $\{e_1,e_2,\ldots,e_k\}$ such that $\forall i\in [k-1], |e_i\cap e_{i+1}|=1, |e_k\cap e_1|=1$ and $e_i\cap e_j=\emptyset$ for all other pairs $\{i,j\}, i\ne j$. We denote an $r$-uniform linear cycle of length $k$ by $C_k^r$.
\end{definition}

Extending the classic results of Bondy and Simonovits \cite{BS}, Collier-Cartaino, Graber and Jiang \cite{CGJ} gave the following results.

\begin{theorem}[\cite{CGJ}]\label{linear}
For all $r,k\geq 3$, there exists a constant $d_{r,k}>0$, depending on $r$ and $k$, such that
$$
ex_L(n,C_k^r)\leq d_{r,k}n^{1+\frac{1}{\lfloor k/2\rfloor}}.
$$
\end{theorem}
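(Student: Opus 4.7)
The plan is to adapt the classical Bondy--Simonovits even-cycle argument to the linear $r$-uniform setting, following the scheme of Collier-Cartaino, Graber and Jiang. Suppose, for contradiction, that $\mathcal{H}$ is a linear $r$-uniform hypergraph on $n$ vertices with $e(\mathcal{H})>d_{r,k}\,n^{1+1/\lfloor k/2\rfloor}$ edges and no linear cycle $C_k^r$; the constant $d_{r,k}$ will be chosen large at the end. The first step is a standard cleaning: iteratively delete vertices of degree below half the current average. This produces a subhypergraph $\mathcal{H}^*$ that retains a constant fraction of the edges and satisfies $\delta(\mathcal{H}^*)\ge c\,n^{1/\lfloor k/2\rfloor}$ for a constant $c=c(r,k)$.

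I would then perform a breadth-first search from a root $v_0\in V(\mathcal{H}^*)$, defining layers $L_0=\{v_0\},L_1,L_2,\ldots$ where $L_{i+1}$ consists of the vertices freshly reached through a previously unused hyperedge meeting $L_i$. Linearity ensures every hyperedge is consumed at most once in the exploration, so each vertex of $L_i$ on average contributes on the order of $\delta$ new vertices to $L_{i+1}$. The core growth lemma asserts $|L_{i+1}|\ge \rho\,\delta\,|L_i|$ for some $\rho=\rho(r)>0$, provided no forbidden short linear cycle has surfaced. Iterating $t=\lfloor k/2\rfloor$ times gives $|L_t|\ge \rho^{\,t}\delta^{\,t}>n$ once $d_{r,k}$ is large enough, contradicting $|L_t|\le n$. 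The contradiction must therefore arise from the appearance of a linear cycle of length at most $k$ during the BFS; pairing two first-colliding branches and making a careful choice of indices produces a linear cycle of length exactly $k$.

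The main obstacle, and the reason the proof is strictly harder than the graph Bondy--Simonovits argument, is ensuring that the cycle extracted from colliding BFS branches is genuinely linear, i.e.\ every pair of non-consecutive hyperedges is vertex-disjoint. Unlike the graph case, two hyperedges of a Berge cycle may collide off their intended attachment points, and such extraneous intersections would destroy linearity. The Collier-Cartaino--Graber--Jiang fix is a probabilistic pre-processing: partition $V(\mathcal{H}^*)$ into $k$ color classes uniformly at random, and retain only the hyperedges whose vertices are spread across the classes in a prescribed pattern compatible with BFS depth. With positive probability a constant fraction of the edges survive and the BFS restricted to the retained sub-hypergraph visits distinct color classes at distinct depths; this forces any cycle detected by collision to be linear automatically. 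Combining cleaning, sparsification, growth and extraction yields the bound of Theorem~\ref{linear}, with $d_{r,k}$ absorbing the constant losses at each stage.
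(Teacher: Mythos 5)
Theorem~\ref{linear} is quoted without proof from Collier-Cartaino, Graber and Jiang~\cite{CGJ}, so there is no in-paper argument for me to compare your sketch against; I can only assess it on its own.

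The scaffolding is reasonable: cleaning to large minimum degree, a BFS growth lemma, and contradiction via expansion are standard Bondy--Simonovits machinery, and you correctly identify that the hard point is forcing the extracted cycle to be \emph{linear} rather than merely Berge. But the random coloring you propose does not close that gap. Coloring vertices by BFS depth only separates hyperedges whose depth intervals are far apart; it does nothing to control two hyperedges that both occupy the depth pair $(i,i+1)$, or the adjacent pairs $(i,i+1)$ and $(i+1,i+2)$. In the cycle obtained by splicing two colliding BFS branches, most pairs of non-consecutive edges are of exactly this form (one edge from each branch at the same depth), and the coloring leaves them free to share a vertex at depth $i$ or $i+1$. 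Linearity of $\mathcal{H}$ does bound any such intersection by a single vertex, but one shared vertex between non-consecutive edges already destroys linearity of the cycle. What is missing is a mechanism, not supplied by the coloring, to keep the two BFS branches internally vertex-disjoint and to control where the closing edge attaches to each branch. That is precisely the technical heart of the cited paper, which, as I recall it, builds a deterministic, carefully-pruned ``skeleton'' and proves an expansion lemma for it rather than using random sparsification. As written, your sketch hand-waves exactly this step (``a careful choice of indices''), so it does not yet constitute a proof of the theorem.
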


Let $f_r(n,v,e)$ denote the maximum number of edges in an $r$-uniform hypergraph on $n$
vertices, which does not contain $e$ edges spanned by $v$ vertices. When $e=r=3$ and $v=6$, the problem of determining $f_3(n,6,3)$ became later known as the $(6,3)$-problem. In 1976, Rusza and Szemer\'{e}di \cite{RS} resolved this problem by the celebrated Regularity Lemma \cite{S}. In 1986, Erd\H{o}s, Frankl and R\"{o}dl \cite{EFR} extended the result of \cite{RS} to arbitrary fixed $r$.
\begin{theorem}[\cite{EFR}]\label{63}
Let $r\geq 3$, then $n^{2-o(1)}<f_r(n,3r-3,3)=o(n^2)$.
\end{theorem}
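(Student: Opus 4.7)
The plan is to prove the two inequalities separately, extending the classical Ruzsa-Szemer\'edi $(6,3)$-theorem (the $r=3$ case), which I will take as known.

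For the upper bound $f_r(n,3r-3,3)=o(n^2)$, I would reduce to the $r=3$ case. Given an $r$-uniform $(3r-3,3)$-free hypergraph $\mathcal{H}$ on $n$ vertices, the cleanest approach goes via the hypergraph removal lemma together with a supersaturation argument: the forbidden $(3r-3,3)$-configuration has only finitely many ``intersection types'' (patterns of slot coincidences among three $r$-edges), and if $|E(\mathcal{H})|\ge \epsilon n^2$ then a counting argument forces $\Omega(n^{3r-3})$ copies of some such type, contradicting freeness. A more hands-on alternative is to pick a random triple $T(e)\subseteq e$ from each $r$-edge $e$ and show that, with positive probability, $\mathcal{H}' = \{T(e) : e \in E(\mathcal{H})\}$ inherits $(6,3)$-freeness; Ruzsa-Szemer\'edi applied to $\mathcal{H}'$ then gives $|E(\mathcal{H})|=O(|E(\mathcal{H}')|)=o(n^2)$.

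For the lower bound $n^{2-o(1)}<f_r(n,3r-3,3)$, I would extend the Behrend-Ruzsa-Szemer\'edi construction. Fix via Behrend's theorem a set $A\subseteq[N]$ of size $|A|=N^{1-o(1)}$ containing no nontrivial $3$-term arithmetic progression. Build an $r$-partite $r$-uniform hypergraph on $\Theta(N)$ vertices whose edges are indexed by pairs $(x,a)\in[N]\times A$: the $(x,a)$-edge consists of one vertex per class built from the arithmetic progression $\{x+ia : 0 \le i \le r-1\}$, possibly composed with a ``spreading'' function to ensure that the partite structure aligns with the desired arithmetic. The number of edges is $N|A|=N^{2-o(1)}=n^{2-o(1)}$.

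The crux is to verify that no three distinct edges span at most $3r-3$ vertices. Parameterizing three edges by $(x_i,a_i)$ and enumerating the possible patterns of slot coincidences, one obtains a system of linear equations of the form $(x_i-x_j)+(k-1)(a_i-a_j)=0$ indexed by the classes $k$ in which coincidences occur. The construction must be calibrated so that the only nondegenerate solutions force $a_1,a_2,a_3$ into a nontrivial $3$-AP in $A$, which is then ruled out by the choice of $A$. The main obstacle is handling ``spurious'' coincidence patterns that do not arise in the $r=3$ case, namely coincidences occurring in classes whose indices are not in arithmetic position; ruling these out is precisely the technical step where the Erd\H{o}s-Frankl-R\"odl argument goes beyond the Ruzsa-Szemer\'edi case.
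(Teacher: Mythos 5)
The paper only cites this theorem from \cite{EFR} and gives no proof of its own, so your sketch must stand by itself. Your upper bound is fine, and in fact the random selection is unnecessary: for \emph{any} choice of $3$-subsets $T(e)\subseteq e$, the deficiency $\sum_{v}(\deg(v)-1)$ of a triple of chosen subsets is at most the deficiency of the three parent $r$-edges, because each vertex's degree is nonincreasing under shrinking the edges. Since $(3r-3,3)$-freeness says every triple of $r$-edges has deficiency at most $2$, every triple of $T$-edges spans at least $7$ vertices. Distinctness of the $T(e)$'s follows from $|e\cap e'|\le 2$, which in turn follows from $(3r-3,3)$-freeness once $|E(\mathcal{H})|\ge 3$. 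Ruzsa--Szemer\'edi then gives $|E(\mathcal{H})|=|E(\mathcal{H}')|=o(n^2)$; no removal lemma or supersaturation is needed for the reduction.

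The lower bound has a genuine gap: $3$-AP-freeness of $A$ is not the right hypothesis, and the ``calibration'' you gesture at cannot rescue it without a stronger assumption on $A$. If three edges $e_{x_1,a_1},e_{x_2,a_2},e_{x_3,a_3}$ meet pairwise in distinct classes $\ell_{12},\ell_{13},\ell_{23}$, the slot equations force
\[
(\ell_{13}-\ell_{12})(a_2-a_1)=(\ell_{23}-\ell_{13})(a_3-a_2),
\]
which is a $3$-AP relation only when the $\ell$'s are themselves in arithmetic position. In general it is a linear relation $-pa_1+(p+q)a_2-qa_3=0$ with nonzero integers $p=\ell_{13}-\ell_{12}$, $q=\ell_{23}-\ell_{13}$, $p+q=\ell_{23}-\ell_{12}\ne 0$, $|p|,|q|<r$. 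A $3$-AP-free set can easily contain solutions of such an equation: $\{0,1,3\}$ has no $3$-AP but satisfies $3a_2=a_1+2a_3$ with $(a_1,a_2,a_3)=(3,1,0)$. What actually saves the construction is the stronger \emph{geometric} property of Behrend's set: it lies on a sphere in $\mathbb{Z}^m$ and therefore contains no three collinear points, and each relation above (after dividing by $p+q$ and reindexing by sign) writes one $a_i$ as a nontrivial rational convex combination of the other two, i.e.\ exhibits three collinear points. So you must replace ``no $3$-AP'' by ``no nontrivial solution to any $c_1a_1+c_2a_2+c_3a_3=0$ with $c_1+c_2+c_3=0$ and $0<|c_i|<r$'', which Behrend's sphere supplies provided the base $d$ of the encoding is chosen large relative to $r$ so that these relations lift from $\mathbb{Z}$ to $\mathbb{Z}^m$ without carries. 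As written, your argument closes only the $3$-AP case and leaves the spurious patterns unhandled, so the lower bound is not yet proved.
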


An independent set of a hypergraph is a set of vertices containing no edges. The independence number of a hypergraph is the size of its largest
independent set. In \cite{MR1370956}, Duke, Lefmann and R\"{o}dl proved the following results about independence number.

\begin{lemma}[\cite{MR1370956}]\label{hypergraph-independent-number}
For all fixed $r\ge3$ there exists a constant $c>0$ depending only on $r$ such that every linear $r$-graph on $n$ vertices with average degree at most $d$ has an independent set of size at least $cn(\frac{\log d}{d})^{\frac{1}{r-1}}.$
\end{lemma}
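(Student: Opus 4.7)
The plan is to prove this lemma by the semi-random (R\"odl nibble) method, which is the standard tool for extracting large independent sets from hypergraphs with bounded codegree; linearity gives the strongest possible codegree bound, namely $1$.

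First I would preprocess $\mathcal{H}$: deleting the at most $n/2$ vertices of degree exceeding $2d$ leaves a sub-hypergraph in which every vertex has degree $O(d)$, and this loses only a constant factor in the final bound. Fix a small absolute constant $\varepsilon = \varepsilon(r) > 0$, set $p = \varepsilon/d^{1/(r-1)}$, and repeat the following nibble step on the surviving hypergraph. Mark each surviving vertex independently with probability $p$. Call an edge \emph{activated} if all $r$ of its vertices are marked, and remove one vertex per activated edge from the marked set. Add the remaining marked vertices to the running independent set $I$, and then delete from the host hypergraph all marked vertices together with every edge that meets $I$.

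A first-moment calculation, using linearity to control the second moment, shows that after one step the expected number of surviving vertices is roughly $n(1-p)$, the expected number of activated edges is at most $(nd/r)p^r = o(pn)$, and the expected degree of a surviving vertex is roughly $d(1-p)^{r-1}$. Hence after $T = \Theta(\varepsilon^{-(r-1)}\log d)$ iterations the average degree falls below $1$, and a final greedy step extracts a linear-sized independent set from the residue. Summing the per-step contributions yields
\[
|I| \;\ge\; c\,n\left(\frac{\log d}{d}\right)^{1/(r-1)}
\]
for a constant $c = c(r) > 0$, as claimed.

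The main obstacle is concentration of the degree sequence across the $\Theta(\log d)$ nibble rounds: the expectations above only hold in mean, and one must argue that the actual surviving vertex count and residual degrees stay close to their expectations throughout, otherwise the analysis degenerates after a few rounds. This is where linearity is used decisively, because it bounds the Lipschitz coefficient of each degree random variable with respect to flipping a single vertex's mark by $O(d)$ rather than $O(d^2)$; Azuma-type martingale inequalities (as formalized in the Molloy-Reed framework, or in Kahn's original treatment) then supply the concentration required to iterate the analysis to completion. Without linearity one only recovers the weaker bound $\Omega(n/d^{1/(r-1)})$ coming from the single-shot first moment argument, so the logarithmic improvement is genuinely extracted by the codegree condition.
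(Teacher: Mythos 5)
This statement is cited from Duke, Lefmann and R\"odl \cite{MR1370956}; the paper supplies no proof of its own, so there is nothing in the source to compare against. Judged on its own terms, your sketch has a substantive gap, and it is not in the place you flag.

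You describe an iterated nibble with a fixed marking probability $p=\varepsilon/d^{1/(r-1)}$, suggesting that each of $T=\Theta(\log d)$ rounds contributes roughly $pn$ to the independent set. But the step ``delete every edge that meets $I$'' loses exactly the information needed to keep $I$ independent across rounds: in later rounds you must refuse every vertex $u$ for which some edge $e\ni u$ has $e\setminus\{u\}\subseteq I$. A back-of-the-envelope count of such \emph{blocked} vertices once $|I|\approx\lambda n$ gives on the order of $nd\cdot\lambda^{r-1}$, which becomes comparable to $n$ as soon as $\lambda\approx d^{-1/(r-1)}$. That is, the naive nibble stalls precisely when $|I|$ reaches the trivial bound $\Theta(n/d^{1/(r-1)})$, and never earns the logarithmic factor. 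Indeed, taken at face value your per-round accounting would yield $\Omega\left(n\log d/d^{1/(r-1)}\right)$, which overshoots the true order $n(\log d)^{1/(r-1)}/d^{1/(r-1)}$ and cannot be right. The concentration issue you identify is real but secondary; even assuming perfect concentration, the blocked-vertex attrition you omit is what kills the iteration.

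The actual Ajtai--Koml\'os--Pintz--Spencer--Szemer\'edi / Duke--Lefmann--R\"odl argument is organized differently: it is a recursion in which one shows that, after a semi-random selection, the hypergraph induced on the surviving (unblocked) vertices is again \emph{uncrowded} with strictly improved parameters, and unwinding this recursion is what accumulates the $(\log d)^{1/(r-1)}$ gain. That structural step --- inheritance of the uncrowded/linear property along the recursion --- is the heart of the proof and is absent from your outline. Finally, the references you reach for (Molloy--Reed, Kahn) address nibble arguments for matchings and colorings; they are not the relevant machinery here, and pointing to them would not supply the missing recursive step.
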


\section{Upper bounds on the size of covering Grassmannian codes}\label{Upp}

\subsection{Upper bounds by counting}
In this section, we give some new upper bounds for $\alpha$-$(n,k,\delta)_q^c$ codes by counting argument.

\begin{theorem} \label{subspace general upper bound}
If $n,k,\delta$ and $\alpha$ are positive integers such that $\alpha\ge2$ and $\delta\le(\alpha-1)k$, then
$$B_q(n,k,\delta;\alpha)\le(\alpha-1)\frac{\Gaussbinom{n}{h}_q}
    {\Gaussbinom{k}{h}_q},$$
where $h=\lfloor k+1-\frac{\delta}{\alpha-1}\rfloor.$
\end{theorem}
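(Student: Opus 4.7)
The plan is a standard double-counting argument built around a single threshold dimension $h$. The key observation is that if $\alpha$ codewords $U_1,\ldots,U_\alpha$ all contain a common $h$-subspace $W$, then quotienting by $W$ gives $(k-h)$-subspaces of $\mathbb{F}_q^n/W$, whence
$$\dim\!\Bigl(\sum_{i=1}^{\alpha}U_i\Bigr)=h+\dim\!\Bigl(\sum_{i=1}^\alpha U_i/W\Bigr)\le h+\alpha(k-h)=\alpha k-(\alpha-1)h.$$
Thus, to force a contradiction with the $\alpha$-$(n,k,\delta)_q^c$ property, I just need $h$ large enough that $\alpha k-(\alpha-1)h<k+\delta$, i.e.\ $h>k-\tfrac{\delta}{\alpha-1}$.

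First I would fix $h=\lfloor k+1-\tfrac{\delta}{\alpha-1}\rfloor$ and verify two things: (i) since $\lfloor x+1\rfloor>x$ for every real $x$, this $h$ satisfies $h>k-\tfrac{\delta}{\alpha-1}$, which is exactly the required strict inequality above, and (ii) the hypothesis $\delta\le(\alpha-1)k$ gives $k+1-\tfrac{\delta}{\alpha-1}\ge 1$, while $\delta\ge 1$ gives $k+1-\tfrac{\delta}{\alpha-1}<k+1$; hence $1\le h\le k$ so that $h$-subspaces sit meaningfully inside any codeword. Combining (i) with the displayed dimension estimate gives the key lemma: \emph{every $h$-subspace of $\mathbb{F}_q^n$ lies in at most $\alpha-1$ codewords of $\mathcal{C}$}.

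Next I would double count the pairs $(W,U)$ with $W$ an $h$-subspace of $U$ and $U\in\mathcal{C}$. Counting by $U$ first, every codeword contributes $\Gaussbinom{k}{h}_q$ such $h$-subspaces, so the total is $|\mathcal{C}|\Gaussbinom{k}{h}_q$. Counting by $W$ first, there are $\Gaussbinom{n}{h}_q$ choices of $W$, each appearing in at most $\alpha-1$ pairs by the key lemma, giving an upper bound of $(\alpha-1)\Gaussbinom{n}{h}_q$. Comparing yields
$$|\mathcal{C}|\le(\alpha-1)\frac{\Gaussbinom{n}{h}_q}{\Gaussbinom{k}{h}_q},$$
as claimed.

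The only place that demands care is the verification of the strict inequality $\alpha k-(\alpha-1)h<k+\delta$ for the floor value of $h$: one must separate out the case where $(\alpha-1)\mid\delta$ (so $k-\tfrac{\delta}{\alpha-1}$ is already an integer and $h=k+1-\tfrac{\delta}{\alpha-1}$) from the case where it does not divide (so the floor strictly decreases $k+1-\tfrac{\delta}{\alpha-1}$), and check that both cases give $h>k-\tfrac{\delta}{\alpha-1}$. Once that arithmetic is in hand, the rest of the argument is purely mechanical, which is why I expect this to be the most delicate, though still quite short, step.
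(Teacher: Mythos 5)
Your proof is correct and follows essentially the same double-counting argument as the paper: both fix $h=\lfloor k+1-\delta/(\alpha-1)\rfloor$, show that a common $h$-subspace in $\alpha$ codewords would force $\dim(\sum U_i)\le \alpha k-(\alpha-1)h<k+\delta$, and then count pairs $(W,U)$ two ways. The paper phrases the count through the auxiliary quantities $b_i$ (the number of $h$-subspaces lying in exactly $i$ codewords), but this is only a notational difference from your direct incidence count, and your one-line check $\lfloor x+1\rfloor>x$ already disposes of the case split you flag at the end.
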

\begin{proof}
  Let $\mathcal{C}$ be an $\alpha$-$(n,k,\delta)_q^c$ code. Let $h=\lfloor k+1-\frac{\delta}{\alpha-1}\rfloor$, which is a positive integer by the choices of $\alpha$ and $\delta.$ Denote $b_i$ the number of $h$-subspaces which are contained in exactly $i$ members of $\mathcal{C}$ for some nonnegative integer $i.$ We claim that $b_j=0$ for $j\ge\alpha.$ Indeed, we only need to show $b_\alpha=0.$ If not, assume $\{C_1,C_2,\dots,C_\alpha\}$ contain some common $h$-subspace, then it is easy to check that $\dim(span\{C_1,C_2,\dots,C_\alpha\})\le h+\alpha(k-h)<k+\delta,$ a contradiction to the definition of $\mathcal{C}.$ Now we have that
  \[
    \left\{
    \begin{array}{l}
            b_1+2b_2+\dots+(\alpha-1)b_{\alpha-1}=|\mathcal{C}|\Gaussbinom{k}{h}_q, \\
            b_0+b_1+\dots+b_{\alpha-1}=\Gaussbinom{n}{h}_q,
        \end{array}
\right.
  \]
  where the first equation is obtained by double counting the pairs $(H, C)$ with $C\in\mathcal{C}$ and $H$ being an $h$-subspace contained in $C,$ and the second equation is followed by counting the $h$-subspaces of $\mathbb{F}_q^n.$ Thus,
  \[
    |\mathcal{C}|=\frac{1}{\Gaussbinom{k}{h}_q}(b_1+2b_2+\dots+(\alpha-1)b_{\alpha-1})\le\frac{\alpha-1}{\Gaussbinom{k}{h}_q}(b_1+b_2+\dots+b_{\alpha-1})\le(\alpha-1)\frac{\Gaussbinom{n}{h}_q}
    {\Gaussbinom{k}{h}_q}.
  \]
\end{proof}

\begin{remark}
  When $\alpha,k,\delta$ and $q$ are fixed  and  $n$ tends to infinity, one can see that the upper bounds in Theorem \ref{upper bound of Etzion} and Theorem \ref{Liu} are both of the form $O(q^{kn})$. The new upper bounds in Theorem \ref{subspace general upper bound} is of the form $O(q^{hn})$. Thus when $\delta>\alpha-1$, our result improves the previous ones. When $1\leq \delta\leq \alpha-1$, the theoretical comparison between the three upper bounds is hard due to their different forms.
\end{remark}

Actually, we could do better than Theorem \ref{subspace general upper bound} by considering more configurations. We will illustrate our idea with a simple case.

\begin{theorem}\label{37}
$B_q(n,3,3;3)\leq (1+\frac{1}{2\Gaussbinom{3}{2}_q-1})\frac{\Gaussbinom{n}{2}_q}{\Gaussbinom{3}{2}_q}$.
\end{theorem}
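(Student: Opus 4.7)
The plan is to refine the counting argument of Theorem~\ref{subspace general upper bound} in the case $\alpha=k=\delta=3$, where $h=\lfloor 3+1-3/2\rfloor=2$, by exploiting an extra structural constraint beyond $b_3=0$. Let $\mathcal{C}$ be a $3$-$(n,3,3)_q^c$ code and, as in Theorem~\ref{subspace general upper bound}, let $b_i$ be the number of $2$-subspaces of $\mathbb{F}_q^n$ contained in exactly $i$ codewords. The covering condition gives $b_i=0$ for $i\ge 3$, and the usual two double-counting identities
\begin{equation*}
b_1+2b_2=|\mathcal{C}|\Gaussbinom{3}{2}_q,\qquad b_0+b_1+b_2=\Gaussbinom{n}{2}_q,
\end{equation*}
yield $|\mathcal{C}|\Gaussbinom{3}{2}_q\le \Gaussbinom{n}{2}_q+b_2$. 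The previous theorem only used the trivial bound $b_2\le b_1+b_2$; here the improvement will come from a direct upper bound on $b_2$.

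The key structural claim I would establish is: for every codeword $C\in\mathcal{C}$, there is \emph{at most one} other codeword $C'\in\mathcal{C}$ with $\dim(C\cap C')=2$. Suppose otherwise that $C_1,C_2\in\mathcal{C}\setminus\{C\}$ were distinct with $\dim(C\cap C_i)=2$ for $i=1,2$. Then $\dim(C+C_1)=4$, and $C\cap C_2$ is a $2$-subspace contained in both $C\subseteq C+C_1$ and $C_2$, so $\dim((C+C_1)\cap C_2)\ge 2$. The dimension formula then gives $\dim(C+C_1+C_2)\le 4+3-2=5<6=k+\delta$, contradicting the $3$-$(n,3,3)_q^c$ property. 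Note that when $C,C'$ share a $2$-subspace, their intersection $C\cap C'$ is itself that $2$-subspace and is the unique $2$-subspace contained in both; hence $b_2$ equals the number of unordered pairs $\{C,C'\}$ of codewords with a common $2$-subspace. Combined with the uniqueness of a partner, these pairs form a matching on $\mathcal{C}$, so $b_2\le |\mathcal{C}|/2$.

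Plugging this into $|\mathcal{C}|\Gaussbinom{3}{2}_q\le \Gaussbinom{n}{2}_q+b_2$ gives $|\mathcal{C}|(\Gaussbinom{3}{2}_q-\tfrac{1}{2})\le \Gaussbinom{n}{2}_q$, i.e., $|\mathcal{C}|\le \frac{2\Gaussbinom{n}{2}_q}{2\Gaussbinom{3}{2}_q-1}=\bigl(1+\tfrac{1}{2\Gaussbinom{3}{2}_q-1}\bigr)\frac{\Gaussbinom{n}{2}_q}{\Gaussbinom{3}{2}_q}$, as required. The main obstacle is the structural lemma on partners, which is what the remark calls a "more refined configuration" beyond the simple $b_\alpha=0$ observation; once it is in hand, the counting reduces to a one-line linear programming argument. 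Everything else is routine linear algebra via the dimension formula and the two double-counting equations.
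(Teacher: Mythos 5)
Your proof is correct. You use the same framework as the paper (double counting over $2$-subspaces via $b_0,b_1,b_2$ with $b_{\geq 3}=0$), but you package the key geometric observation differently. The paper proves the inequality $b_1\ge 2\bigl(\Gaussbinom{3}{2}_q-1\bigr)b_2$ by charging, to each $V\in\mathcal{C}_2$ with $V\subset U_1,U_2$, the $2\bigl(\Gaussbinom{3}{2}_q-1\bigr)$ remaining $2$-subspaces of $U_1$ and $U_2$ and arguing that these all lie in $\mathcal{C}_1$ and are charged uniquely. You instead prove that the relation ``$\dim(C\cap C')=2$'' is a matching on codewords (each codeword has at most one such partner, by the same dimension-count contradiction) and conclude $b_2\le |\mathcal{C}|/2$. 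Given the identity $b_1+2b_2=|\mathcal{C}|\Gaussbinom{3}{2}_q$, your inequality and theirs are equivalent, so the two routes land on the same linear programming step; your matching lemma is arguably the cleaner statement and makes the source of the improvement more transparent, while the paper's charging argument dovetails more directly with the $b_i$ bookkeeping set up for Theorem~\ref{subspace general upper bound}. Both arguments rest on the same underlying fact: if two codewords meet in a plane, no third codeword can meet their sum in a plane without dropping the span below dimension $6$.
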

\begin{proof}
Let $\mathcal{C}$ be a $3$-$(n,3,3)_q^c$ code and $h=2$. Denote $b_i$ as the number of $2$-subspaces which are contained in exactly $i$ members of $\mathcal{C}$ for some nonnegative integer $i.$ It was proved in Theorem \ref{subspace general upper bound} that $b_j=0$ for $j\geq 3$. For $i\in\{1,2\}$, let $\mathcal{C}_i$ be the family of $2$-subspaces which are contained in exactly $i$ members of $\mathcal{C}$. The key observation is that $b_1\geq 2(\Gaussbinom{3}{2}_q-1)b_2$. Indeed, for any $2$-subspace $V$ in $\mathcal{C}_2$, there are distinct $U_1,U_2\in\mathcal{C}$ such that $V\subset U_1$ and $V\subset U_2$. Note that, except for $V$, none of the $2$-subspaces in $U_1+U_2$ can be contained in another codeword in $\mathcal{C}$. Thus, except for $V$, all $2$-subspaces in $U_i$ are contained in $\mathcal{C}_1$ and none of these subspaces is obtained in the same way starting from another $2$-subspace in $\mathcal{C}_2$. Now we have that
  \[
    \left\{
    \begin{array}{l}
            b_1+2b_2=|\mathcal{C}|\Gaussbinom{3}{2}_q, \\
            b_0+b_1+b_2=\Gaussbinom{n}{2}_q,\\
            b_1\geq 2(\Gaussbinom{3}{2}_q-1)b_2.\\
        \end{array}
\right.
  \]
Let $a=2(\Gaussbinom{3}{2}_q-1)$, we have
\begin{equation*}
|\mathcal{C}|=\frac{1}{\Gaussbinom{3}{2}_q}(b_1+2b_2)=\frac{1}{\Gaussbinom{3}{2}_q}(\frac{a+2}{a+1}(b_1+b_2)+\frac{ab_2-b_1}{a+1})\\
   \leq \frac{\frac{a+2}{a+1}}{\Gaussbinom{3}{2}_q}(b_1+b_2)\leq (1+\frac{1}{2\Gaussbinom{3}{2}_q-1})\frac{\Gaussbinom{n}{2}_q}{\Gaussbinom{3}{2}_q}.
\end{equation*}
\end{proof}

\begin{remark}
When $q$ tends to infinity, the coefficient of the main term $\frac{\Gaussbinom{n}{2}_q}{\Gaussbinom{3}{2}_q}$ of this upper bound tends to $1$, which may provide some light for obtaining tight bounds in the asymptotic case.
\end{remark}

By the same idea, we could get the general upper bound for $B_q(n,k,\delta;3)$.
\begin{theorem}\label{39}
$B_q(n,k,\delta;3)\leq \left(1+\frac{1}{2\Gaussbinom{k}{h}_q-1}\right)\frac{\Gaussbinom{n}{h}_q}
    {\Gaussbinom{k}{h}_q},$ where $h=\lfloor k+1-\frac{\delta}{2}\rfloor.$
\end{theorem}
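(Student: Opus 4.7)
The plan is to mirror the proof of Theorem~\ref{37} with $h=\lfloor k+1-\delta/2\rfloor$ in place of $h=2$. For a $3$-$(n,k,\delta)_q^c$ code $\mathcal{C}$, let $b_i$ denote the number of $h$-subspaces of $\mathbb{F}_q^n$ contained in exactly $i$ members of $\mathcal{C}$, and write $\mathcal{C}_i$ for the corresponding family. Because the chosen $h$ satisfies $2k-2h<\delta$, the dimension count used in the proof of Theorem~\ref{subspace general upper bound} immediately yields $b_j=0$ for all $j\ge 3$: three codewords sharing a common $h$-subspace span at most $h+3(k-h)=3k-2h<k+\delta$ dimensions, contradicting the covering property. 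This gives the identities
\[
b_1+2b_2=|\mathcal{C}|\Gaussbinom{k}{h}_q,\qquad b_0+b_1+b_2=\Gaussbinom{n}{h}_q.
\]

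The key step is to prove $b_1\ge 2\bigl(\Gaussbinom{k}{h}_q-1\bigr)b_2$ by a disjoint-association argument. For each $V\in\mathcal{C}_2$ with $V\subset U_1\cap U_2$ (the unique pair of codewords containing $V$), let $S_V$ be the set of $h$-subspaces of $U_1\cup U_2$ other than $V$. First, every $W\in S_V$ belongs to $\mathcal{C}_1$: $W$ lies in $U_1$ or $U_2$, hence in at least one codeword, and if it were also in a third codeword $U_3$ then $(U_1+U_2)\cap U_3\supseteq W$ has dimension at least $h$, so $\dim(U_1+U_2+U_3)\le(2k-h)+k-h=3k-2h<k+\delta$, contradiction. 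Second, the sets $S_V\cap\mathcal{C}_1$ are pairwise disjoint over $V\in\mathcal{C}_2$: if some $W\in\mathcal{C}_1$ lay in both $S_V$ and $S_{V'}$ for distinct $V,V'\in\mathcal{C}_2$, the unique codeword containing $W$ would belong to both the pair for $V$ and the pair for $V'$, and the same three-codeword dimension inequality applied to the three distinct codewords involved would again contradict the covering property. Combined, these observations yield $|S_V\cap\mathcal{C}_1|=2(\Gaussbinom{k}{h}_q-1)$ and the announced inequality.

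Once $b_1\ge a\,b_2$ is in hand with $a=2(\Gaussbinom{k}{h}_q-1)$, the conclusion is pure algebra as in Theorem~\ref{37}: one rewrites
\[
b_1+2b_2=\frac{a+2}{a+1}(b_1+b_2)+\frac{ab_2-b_1}{a+1}\le\frac{a+2}{a+1}\Gaussbinom{n}{h}_q,
\]
and since $\frac{a+2}{a+1}=1+\frac{1}{2\Gaussbinom{k}{h}_q-1}$, dividing by $\Gaussbinom{k}{h}_q$ gives the stated bound on $|\mathcal{C}|$. The step I expect to require the most care is the size/disjointness claim for $S_V$: the clean count $|S_V\cap\mathcal{C}_1|=2(\Gaussbinom{k}{h}_q-1)$ implicitly uses $U_1\cap U_2=V$, whereas for $k>3$ one can a priori have $\dim(U_1\cap U_2)>h$, producing several $\mathcal{C}_2$-members sharing a common pair. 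Excluding or accounting for this ``fat intersection'' case via yet another three-codeword dimension count against an arbitrary additional codeword appears to be the main technical obstacle.
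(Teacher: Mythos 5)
Your worry about the fat-intersection case is well placed, and it is not merely a matter of ``requiring care'': it is a genuine gap that the outlined argument does not close, and the paper itself sidesteps it by proving only Theorem \ref{37} and stating Theorem \ref{39} with the remark ``by the same idea.'' The inequality $b_1\ge 2\bigl(\Gaussbinom{k}{h}_q-1\bigr)b_2$, which is the crux of the whole proof, can actually \emph{fail} once $\dim(U_1\cap U_2)>h$ is allowed. Concretely, two distinct $V,V'\in\mathcal{C}_2$ inside the same fat intersection $U_1\cap U_2$ have $S_V=S_{V'}$, so the disjointness you invoke collapses; the per-pair budget is only $2\bigl(\Gaussbinom{k}{h}_q-\Gaussbinom{d}{h}_q\bigr)$ subspaces in $\mathcal{C}_1$ against $\Gaussbinom{d}{h}_q$ subspaces in $\mathcal{C}_2$, where $d=\dim(U_1\cap U_2)$, and for $d>h$ that ratio is strictly below $2\bigl(\Gaussbinom{k}{h}_q-1\bigr)$.

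A tiny explicit counterexample to the intermediate inequality: take $q=2$, $k=5$, $\delta=5$, so $h=3$, and the $3$-$(n,5,5)_2^c$ code $\mathcal{C}=\{U_1,U_2,U_3\}$ with $\dim(U_1\cap U_2)=4$ and $U_3\cap(U_1+U_2)=\{0\}$ (possible once $n\ge 11$). The only triple spans $6+5=11\ge k+\delta=10$, so $\mathcal{C}$ is a valid code. Here $b_2=\Gaussbinom{4}{3}_2=15$ and $b_1=3\Gaussbinom{5}{3}_2-2\Gaussbinom{4}{3}_2=435$, while $2\bigl(\Gaussbinom{5}{3}_2-1\bigr)b_2=4620$, so the inequality is badly violated. (This does not by itself refute the theorem for large $n$, but it refutes the lemma your proof rests on.) The issue arises precisely when $h\le k-2$, equivalently $\delta\ge 5$; for $\delta\le 4$ one has $h\ge k-1$, any two distinct $k$-subspaces through a common $h$-subspace meet in exactly $h$ dimensions, and your argument then reduces to the paper's Theorem \ref{37} argument with $h$ in place of $2$ and goes through. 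To handle $\delta\ge 5$ you would need either a stronger dimension constraint exploiting how restrictive fat-intersection pairs are (any third codeword can meet $U_1+U_2$ in dimension less than $h$), or a different weighting of pairs by intersection dimension; the naive disjoint-association count does not suffice.
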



\subsection{Upper bounds via the connection between covering Grassmannian codes and hypergraphs}

In this subsection, we establish some connections between covering Grassmannian codes and hypergraphs.

 Let $\mathcal{C}$ be an $\alpha$-$(n,k,\delta)^c_q$ code. We consider $\mathcal{C}$ as a $\Gaussbinom{k}{1}_q$-uniform hypergraph $\mathcal{H_C}$. The vertex set $V(\mathcal{H_C})$ is the set of all $1$-subspaces of $\mathbb{F}_q^n$. A $k$-subspace in $\mathcal{C}$ corresponds to a hyperedge of size $\Gaussbinom{k}{1}_q$ whose vertices are the $1$-subspaces contained in this $k$-subspace. Thus we have $|V(\mathcal{H_C})|=\Gaussbinom{n}{1}_q$ and $|E(\mathcal{H_C})|=|\mathcal{C}|$.

\begin{theorem}\label{upper bound by subspace cycle free}
If $n,k,\delta$ and $\alpha$ are positive integers such that $\alpha\geq 3$ and $\delta\ge(\alpha-1)(k-1)$, then
  \[
    B_q(n,k,\delta;\alpha)=O\left(q^{\left(1+\frac{1}{\lfloor\alpha/2\rfloor}\right)n}\right).
  \]
\end{theorem}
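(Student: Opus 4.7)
The plan is to show that the hypergraph $\mathcal{H}_{\mathcal{C}}$ associated to any $\alpha$-$(n,k,\delta)_q^c$ code $\mathcal{C}$ contains no Berge cycle $C_\alpha$, and then invoke Theorem \ref{hypergraph odd cycle free} directly. The setup is already prepared just before the theorem statement: $\mathcal{H}_{\mathcal{C}}$ is $\Gaussbinom{k}{1}_q$-uniform on $\Gaussbinom{n}{1}_q$ vertices, and the edge count equals $|\mathcal{C}|$. The hypotheses $k\ge 2$ and $\alpha\ge 3$ supply exactly the uniformity $\Gaussbinom{k}{1}_q\ge 3$ and cycle length $\ge 3$ required to apply Theorem \ref{hypergraph odd cycle free}.

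The heart of the argument is the forbidden-configuration claim. I would suppose, for contradiction, that $\mathcal{H}_{\mathcal{C}}$ contains a Berge $C_\alpha$ with distinct base vertices (1-subspaces) $w_1,\ldots,w_\alpha$ and distinct edges (codewords) $C_1,\ldots,C_\alpha$ satisfying $w_i, w_{i+1}\subseteq C_i$ for all $i$ (indices mod $\alpha$). Set $W=\mathrm{span}\{w_1,\ldots,w_\alpha\}$, so $\dim W\le\alpha$. Because $w_i\ne w_{i+1}$, each intersection $C_i\cap W$ contains the $2$-dimensional subspace $\mathrm{span}\{w_i,w_{i+1}\}$, hence $\dim((C_i+W)/W)\le k-2$. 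Since $W\subseteq\mathrm{span}\{C_1,\ldots,C_\alpha\}$, summing over $i$ gives
\[
\dim\mathrm{span}\{C_1,\ldots,C_\alpha\}\;\le\;\dim W+\sum_{i=1}^{\alpha}\dim\bigl((C_i+W)/W\bigr)\;\le\;\alpha+\alpha(k-2)\;=\;\alpha(k-1).
\]
The hypothesis $\delta\ge(\alpha-1)(k-1)$ is calibrated precisely to force $\alpha(k-1)=(\alpha-1)(k-1)+(k-1)\le \delta+k-1<k+\delta$, contradicting the defining property of an $\alpha$-$(n,k,\delta)_q^c$ code. Thus $\mathcal{H}_{\mathcal{C}}$ is Berge-$C_\alpha$-free.

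Applying Theorem \ref{hypergraph odd cycle free} with $r=\Gaussbinom{k}{1}_q$ and cycle length $\alpha$ then yields
\[
|\mathcal{C}|\;=\;|E(\mathcal{H}_{\mathcal{C}})|\;\le\;c_{r,\alpha}\,\Gaussbinom{n}{1}_q^{\,1+1/\lfloor\alpha/2\rfloor}\;=\;O\!\left(q^{\left(1+1/\lfloor\alpha/2\rfloor\right)n}\right),
\]
using $\Gaussbinom{n}{1}_q<q^n/(q-1)$ and regarding $q,k,\alpha$ as fixed. The only delicate step is the dimension count tying the Berge cycle structure to the code property; the exact threshold $\delta\ge(\alpha-1)(k-1)$ falls out of the arithmetic $\alpha+\alpha(k-2)=\alpha(k-1)$, and once this is in hand the result follows by quoting the Berge-cycle Tur\'an bound.
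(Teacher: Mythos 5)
Your proposal is correct and takes essentially the same approach as the paper: build the $\Gaussbinom{k}{1}_q$-uniform hypergraph on $1$-subspaces, show a Berge $C_\alpha$ would force $\dim\mathrm{span}\{C_1,\ldots,C_\alpha\}\le \alpha(k-1)=\alpha k-\alpha<k+\delta$, and invoke Theorem \ref{hypergraph odd cycle free}. The only difference is cosmetic — you justify the $\alpha k-\alpha$ bound via the quotient by $W=\mathrm{span}\{w_1,\ldots,w_\alpha\}$, whereas the paper states it directly from the sequential-union count — and you helpfully flag the implicit $k\ge 2$ needed for the uniformity $\Gaussbinom{k}{1}_q\ge 3$, which the paper leaves unstated.
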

\begin{proof}
   Let $\mathcal{C}$ be an $\alpha$-$(n,k,\delta)^c_q$ code. If the hypergraph $\mathcal{H_C}$ contains a Berge cycle of length $\alpha$, then denote the corresponding $k$-subspaces as $\{A_1,A_2,\dots,A_\alpha\}$. By the definition of the Berge cycle, we have $dim(span\{A_1,A_2,\ldots, A_\alpha\})\le \alpha k-\alpha<k+\delta$, which is a contradiction. So $\mathcal{H_C}$ is $BC_\alpha$-free and by Theorem \ref{hypergraph odd cycle free}, we have that
  \[
    |\mathcal{C}|=|E(\mathcal{H_C})|\le ex(|V(\mathcal{H_C})|, BC_{\alpha})= O\left(|V(\mathcal{H_C})|^{1+\frac{1}{\lfloor\alpha/2\rfloor}}\right)=O\left(q^{\left(1+\frac{1}{\lfloor\alpha/2\rfloor}\right)n}\right).
  \]
\end{proof}

\begin{theorem}\label{upper bound by berge path free}
If $n,k,\delta$ and $\alpha$ are positive integers such that $3\leq\alpha\leq \Gaussbinom{k}{1}_q$ and $\delta\ge(\alpha-1)(k-1)+1$, then
  \[
    B_q(n,k,\delta;\alpha)\leq(\alpha-1)\frac{\Gaussbinom{n}{1}_q}{\Gaussbinom{k}{1}_q+1}.
  \]
\end{theorem}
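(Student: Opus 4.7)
The plan is to mirror the hypergraph approach used just above in Theorem \ref{upper bound by subspace cycle free}, but with Berge paths instead of Berge cycles. Given an $\alpha$-$(n,k,\delta)_q^c$ code $\mathcal{C}$, I would construct the same $\Gaussbinom{k}{1}_q$-uniform hypergraph $\mathcal{H_C}$: the vertices are the $1$-subspaces of $\mathbb{F}_q^n$, and every $k$-subspace $A\in\mathcal{C}$ becomes the hyperedge consisting of the $\Gaussbinom{k}{1}_q$ one-subspaces lying in $A$. Then $|V(\mathcal{H_C})|=\Gaussbinom{n}{1}_q$ and $|E(\mathcal{H_C})|=|\mathcal{C}|$.

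The first and only substantive step is to show that $\mathcal{H_C}$ is Berge $P_\alpha$-free. Suppose to the contrary that $\mathcal{H_C}$ contains a Berge path with $\alpha$ edges, given by distinct base vertices $w_0,w_1,\ldots,w_\alpha$ (which are $1$-subspaces) and distinct hyperedges $f_1,\ldots,f_\alpha$, corresponding to distinct codewords $A_1,\ldots,A_\alpha\in\mathcal{C}$ with $\{w_{i-1},w_i\}\subseteq A_i$ for each $i$. For each $i\ge 2$, the 1-subspace $w_{i-1}$ lies in both $A_{i-1}$ and $A_i$, hence
\[
\dim\bigl(A_i\cap(A_1+\cdots+A_{i-1})\bigr)\ge 1,
\]
so adding $A_i$ to $A_1+\cdots+A_{i-1}$ raises the dimension by at most $k-1$. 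Iterating gives
\[
\dim(A_1+A_2+\cdots+A_\alpha)\le k+(\alpha-1)(k-1).
\]
The hypothesis $\delta\ge(\alpha-1)(k-1)+1$ then yields $\dim(A_1+\cdots+A_\alpha)<k+\delta$, contradicting the defining property of $\mathcal{C}$.

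Once $\mathcal{H_C}$ is known to be Berge $P_\alpha$-free, I would finish by invoking Theorem \ref{hypergraph berge path free} with uniformity $r=\Gaussbinom{k}{1}_q$ and path length $\alpha$; the required condition $r\ge \alpha\ge 3$ is exactly the assumption $3\le\alpha\le\Gaussbinom{k}{1}_q$. This gives
\[
|\mathcal{C}|=|E(\mathcal{H_C})|\le ex\bigl(|V(\mathcal{H_C})|,BP_\alpha\bigr)\le \frac{(\alpha-1)\Gaussbinom{n}{1}_q}{\Gaussbinom{k}{1}_q+1},
\]
which is the claimed bound. The argument is essentially routine; the only point that requires care is the dimension computation, whose upper bound $k+(\alpha-1)(k-1)$ matches the threshold in the hypothesis exactly, so no slack is available and the choice of Berge $P_\alpha$ (rather than a larger substructure) is the right one.
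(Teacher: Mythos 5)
Your proposal is correct and follows exactly the paper's argument: construct the $\Gaussbinom{k}{1}_q$-uniform hypergraph on $1$-subspaces, show it is Berge-$P_\alpha$-free via the dimension bound $\dim(A_1+\cdots+A_\alpha)\le\alpha k-\alpha+1<k+\delta$, and apply Theorem~\ref{hypergraph berge path free}. The only difference is cosmetic—you spell out the iterative intersection argument behind the dimension bound, which the paper simply asserts.
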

\begin{proof}
Let $\mathcal{C}$ be an $\alpha$-$(n,k,\delta)^c_q$ code. If the hypergraph $\mathcal{H_C}$ contains a Berge path of length $\alpha$, then denote the corresponding $k$-subspaces as $\{A_1,A_2,\dots,A_\alpha\}$. By the definition of the Berge path, we have $dim(span\{A_1,A_2,\ldots, A_\alpha\})\le \alpha k-\alpha+1<k+\delta$, which is a contradiction. So $\mathcal{H_C}$ is $BP_\alpha$-free and by Theorem \ref{hypergraph berge path free}, we have that
  \[
    |\mathcal{C}|=|E(\mathcal{H_C})|\le ex(|V(\mathcal{H_C})|, BP_{\alpha})\le (\alpha-1)\frac{|V(\mathcal{H_C})|}{\Gaussbinom{k}{1}_q+1}=(\alpha-1)\frac{\Gaussbinom{n}{1}_q}{\Gaussbinom{k}{1}_q+1}.
  \]
\end{proof}

Next we study a special case $B_q(n,2,2;3)$ which has been considered in \cite{MR4149371} by Bonferroni Inequality. We connect it with the $(6,3)$-problem.

\begin{theorem}\label{223}
$B_q(n,2,2;3)=o(q^{2n})$.
\end{theorem}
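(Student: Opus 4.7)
The plan is to view a $3$-$(n,2,2)_q^c$ code $\mathcal{C}$ as the $(q+1)$-uniform hypergraph $\mathcal{H}_{\mathcal{C}}$ defined at the start of this subsection, and to apply the $(3r-3,3)$-type bound of Erd\H{o}s, Frankl and R\"{o}dl (Theorem \ref{63}) with $r=q+1$. The vertex set of $\mathcal{H}_{\mathcal{C}}$ has size $N=\Gaussbinom{n}{1}_q=(q^n-1)/(q-1)=\Theta(q^{n-1})$, and $|E(\mathcal{H}_{\mathcal{C}})|=|\mathcal{C}|$, so it suffices to show that no three edges of $\mathcal{H}_{\mathcal{C}}$ are supported on at most $3r-3=3q$ vertices: Theorem \ref{63} will then yield
\[
|\mathcal{C}|\le f_{q+1}(N,3q,3)=o(N^2)=o(q^{2n-2})=o(q^{2n}).
\]

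The bulk of the work is the combinatorial translation of the code condition into this $(3r-3,3)$-restriction. For three codewords $A_1,A_2,A_3\in\mathcal{C}$, any two distinct $2$-subspaces meet in a subspace of dimension at most $1$, so inclusion--exclusion at the level of $1$-subspaces gives
\[
|A_1\cup A_2\cup A_3|=3(q+1)-\sum_{1\le i<j\le 3}x_{ij}+x_{123},
\]
with $x_{ij}=|A_i\cap A_j|\in\{0,1\}$ and $x_{123}=|A_1\cap A_2\cap A_3|\in\{0,1\}$. A short case check shows that $|A_1\cup A_2\cup A_3|\le 3q$ forces the ``triangle'' pattern $x_{12}=x_{13}=x_{23}=1$ and $x_{123}=0$: the pairwise intersections are three distinct $1$-subspaces $P_{12},P_{13},P_{23}$ and the triple intersection is trivial. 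In this pattern each $A_i$ is spanned by the two pairwise intersections it contains, whence $A_1+A_2+A_3=P_{12}+P_{13}+P_{23}$; moreover $P_{23}\not\subseteq A_1$ (else $P_{23}\subseteq A_1\cap A_2\cap A_3=\{0\}$), so $\dim(A_1+A_2+A_3)=3<k+\delta=4$, contradicting the hypothesis that $\mathcal{C}$ is a $3$-$(n,2,2)_q^c$ code. Hence $\mathcal{H}_{\mathcal{C}}$ is $(3r-3,3)$-free, and the displayed bound above closes the proof.

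The only real obstacle is the case analysis just sketched. The subtle point is the adjacent configuration $x_{12}=x_{13}=x_{23}=x_{123}=1$, in which three codewords share a common $1$-subspace: it contributes $3(q+1)-3+1=3q+1$ vertices rather than $3q$, and hence lies just outside the $(3r-3,3)$-forbidden threshold. This is harmless, since the code condition already implies the weaker $(3r-3,3)$-free property that Theorem \ref{63} needs, and the ``common point'' configuration need not be excluded from $\mathcal{H}_{\mathcal{C}}$ at all.
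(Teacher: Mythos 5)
Your proof is correct and follows essentially the same route as the paper: represent $\mathcal{C}$ as the $(q+1)$-uniform hypergraph $\mathcal{H}_{\mathcal{C}}$, show the code condition forces any three edges to span at least $3q+1$ vertices, and invoke the Erd\H{o}s--Frankl--R\"odl bound $f_{q+1}(N,3q,3)=o(N^2)$. The paper states the key inequality $|e_1\cup e_2\cup e_3|\ge 3\Gaussbinom{2}{1}_q-2$ without elaboration; your inclusion--exclusion case analysis, including the identification of the forbidden ``triangle'' configuration and the observation that the ``common point'' configuration sits just above the threshold and need not be excluded, supplies exactly the details the paper leaves implicit.
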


\begin{proof}
Let $\mathcal{C}$ be a $3$-$(n,2,2)^c_q$ code. For any $3$ distinct codewords $A_1,A_2,A_3$, we denote the corresponding edges as $\{e_1,e_2,e_3\}$. By the construction of $\mathcal{H_C}$, we have that $|e_i\cap e_j|\leq 1$ for $i\ne j$. Since $\mathcal{C}$ is a $3$-$(n,2,2)^c_q$ code, we have $dim(span\{A_1,A_2,A_3\})\geq 4$. Thus $|e_1\cup e_2\cup e_3|\geq 3{\Gaussbinom{2}{1}_q}-2$ and the hypergraph $\mathcal{H_C}$  does not contain $3$ edges spanned by $3{\Gaussbinom{2}{1}_q}-3$ vertices. By Theorem \ref{63}, we have that
  \[
    |\mathcal{C}|=|E(\mathcal{H_C})|\le f_{\Gaussbinom{2}{1}_q}(|V(\mathcal{H_C})|,3{\Gaussbinom{2}{1}_q}-3,3)=o(|V(\mathcal{H_C})|^2)=o(q^{2n}).
  \]
\end{proof}

Now, we connect the covering Grassmannian code with the linear hypergraph. We consider $\mathcal{C}$ as a $\Gaussbinom{k}{1}_q$-uniform hypergraph $\mathcal{H_C}$. The vertex set $V(\mathcal{H_C})$ is the set of all $(k-1)$-subspaces of $\mathbb{F}_q^n$. A $k$-subspace in $\mathcal{C}$ corresponds to a hyperedge of size $\Gaussbinom{k}{k-1}_q$ whose vertices are the $(k-1)$-subspaces contained in this $k$-subspace. Thus we have $|V(\mathcal{H_C})|=\Gaussbinom{n}{k-1}_q$ and $|E(\mathcal{H_C})|=|\mathcal{C}|$.

\begin{theorem}\label{last}
If $n,k,\delta$ and $\alpha$ are positive integers such that $\alpha\geq 3$ and $\delta>\alpha-2$, then
$$
B_q(n,k,\delta;\alpha)=O\left(q^{(1+\frac{1}{\lfloor\alpha/2\rfloor})(k-1)n}\right).
$$
\end{theorem}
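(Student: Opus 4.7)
The plan is to adapt the approach of Theorem \ref{upper bound by subspace cycle free} to the linear-hypergraph model $\mathcal{H_C}$ set up immediately before the statement, replacing the Berge Tur\'{a}n bound by the linear Tur\'{a}n bound of Theorem \ref{linear}. First I would verify that $\mathcal{H_C}$ is a linear $r$-graph with $r=\Gaussbinom{k}{1}_q$: any two distinct $k$-subspaces $A\neq B$ of $\mathbb{F}_q^n$ satisfy $\dim(A\cap B)\le k-1$, and therefore share at most one $(k-1)$-subspace as a hypergraph vertex. This linearity is exactly what allows us to invoke Theorem \ref{linear} rather than the general Berge-cycle bound.

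The key step is to show that $\mathcal{H_C}$ is $C^r_\alpha$-free. Suppose for contradiction that it contains a linear cycle of length $\alpha$, witnessed by codewords $A_1,\dots,A_\alpha\in\mathcal{C}$ together with distinct vertices $v_1,\dots,v_\alpha$ satisfying $v_i\subseteq A_i\cap A_{i+1}$ (indices cyclic). Each $v_i$ is a $(k-1)$-subspace, so building up the partial sums $S_i:=A_1+\cdots+A_i$, the inclusion $A_i\cap S_{i-1}\supseteq A_i\cap A_{i-1}\supseteq v_{i-1}$ forces $A_i$ to add at most $k-(k-1)=1$ new dimension, yielding $\dim S_{\alpha-1}\le k+\alpha-2$. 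To close the cycle, note that $A_\alpha\cap S_{\alpha-1}$ contains both $v_{\alpha-1}\subseteq A_{\alpha-1}$ and $v_\alpha\subseteq A_1$; since these are two \emph{distinct} $(k-1)$-subspaces of the $k$-space $A_\alpha$, their sum equals all of $A_\alpha$, so $A_\alpha\subseteq S_{\alpha-1}$ and $A_\alpha$ contributes no new dimension at all. Hence $\dim(\mathrm{span}\{A_1,\dots,A_\alpha\})\le k+\alpha-2<k+\delta$, contradicting the covering property of $\mathcal{C}$.

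Given linearity and $C^r_\alpha$-freeness of $\mathcal{H_C}$, Theorem \ref{linear} with $r=\Gaussbinom{k}{1}_q$ immediately gives
\[
|\mathcal{C}|=|E(\mathcal{H_C})|\le d_{r,\alpha}\,\Gaussbinom{n}{k-1}_q^{1+\frac{1}{\lfloor\alpha/2\rfloor}},
\]
and combining this with the standard estimate $\Gaussbinom{n}{k-1}_q = O\bigl(q^{(k-1)(n-k+1)}\bigr)=O\bigl(q^{(k-1)n}\bigr)$ recorded in Section \ref{Pre} yields the desired bound $O\bigl(q^{(1+1/\lfloor\alpha/2\rfloor)(k-1)n}\bigr)$.

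The main obstacle I anticipate is the cycle-closure argument: one must squeeze an extra full dimension of saving out of the last edge of the cycle (beyond what a pure path-telescoping would give), and this saving is exactly what allows the hypothesis to be relaxed from $\delta>\alpha-1$ to $\delta>\alpha-2$. The cleanest way to secure it is to use that the two witnessing vertices $v_{\alpha-1},v_\alpha$ incident to $A_\alpha$ are distinct $(k-1)$-subspaces of the $k$-space $A_\alpha$, forcing $v_{\alpha-1}+v_\alpha=A_\alpha$; once this is isolated, the rest of the proof is a direct translation between subspace parameters and hypergraph quantities.
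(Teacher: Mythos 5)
Your proof is correct and follows the same approach as the paper: model $\mathcal{C}$ as the $\Gaussbinom{k}{1}_q$-uniform hypergraph on $(k-1)$-subspaces, observe that it is linear and linear-$C_\alpha$-free because a linear cycle would force $\dim(\mathrm{span}\{A_1,\dots,A_\alpha\})\le k+\alpha-2<k+\delta$, and then apply Theorem~\ref{linear}. The only difference is that you spell out the dimension count (telescoping over $S_i$ and noting that the two distinct $(k-1)$-dimensional vertices in $A_\alpha$ force $A_\alpha\subseteq S_{\alpha-1}$), which the paper merely asserts; this added detail is correct and makes the step explicit rather than changing the method.
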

\begin{proof}
Let $\mathcal{C}$ be an $\alpha$-$(n,k,\delta)^c_q$ code. If the hypergraph $\mathcal{H_C}$ contains a linear cycle of length $\alpha$, then denote the corresponding $k$-subspaces as $\{A_1,A_2,\dots,A_\alpha\}$. By the definition of the linear cycle, we have $dim(span\{A_1,A_2,\ldots, A_\alpha\})\le k+\alpha-2<k+\delta$, which is a contradiction. So $\mathcal{H_C}$ is  $C_{\alpha}^{\Gaussbinom{k}{1}_q}$-free and by Theorem \ref{linear}, we have that
  \[
    |\mathcal{C}|=|E(\mathcal{H_C})|\le ex_L(|V(\mathcal{H_C})|, C_{\alpha}^{\Gaussbinom{k}{1}_q})= O(|V(\mathcal{H_C})|^{1+\frac{1}{\lfloor\alpha/2\rfloor}})=O(q^{(1+\frac{1}{\lfloor\alpha/2\rfloor})(k-1)n}).
  \]
\end{proof}

In Table \ref{table1},  we list all known upper bounds of covering Grasmannian codes together with our results. In order to consider the asymptotic behavior of these bounds, we fix $k,\delta,\alpha,q$ and let $n$ tend to infinity. Relatively better bounds are in bold form.

\begin{table}[h]
\caption{Comparison of the upper bounds for $B_q(n,k,\delta;\alpha)$}\label{table1}
\begin{center}
\begin{tabular}{|c|c|c|c|c|c| p{5cm}|}
\hline
  {Restrictions on parameters} &  \cite{MR3964845}& \cite{MR4149371} & \cite{MR4306350} & \cite{MR4173608} &Our results \\ \hline
   $\delta>\alpha-1,\alpha\ge2$ & $O(q^{kn})$ & $O(q^{kn})$ &$O(q^{kn})$&- & \bm{$O(q^{hn})$}  [Theorem \ref{subspace general upper bound}] \\ \hline
  $\delta=(\alpha-1)(k-1),\alpha\ge4$ &$O(q^{kn})$ &$O(q^{kn})$&$O(q^{kn})$&-&\bm{$O(q^{(1+\frac{1}{\lfloor\alpha/2\rfloor})n})$} [Theorem \ref{upper bound by subspace cycle free}] \\ \hline
  $\delta\ge(\alpha-1)(k-1)+1,3\leq\alpha\leq \Gaussbinom{k}{1}_q$&$O(q^{kn})$ &$O(q^{kn})$&$O(q^{kn})$&-&\bm{$O(q^n)$} [Theorem \ref{upper bound by berge path free}] \\\hline
  $k\mid n,\delta=(\alpha-1)k$ & $O(q^{kn})$&$O(q^{kn})$&$O(q^{kn})$&\bm{$O(q^{n})$}& $O(q^{n})$ [Theorem \ref{upper bound by berge path free}] \\\hline
  $\alpha\geq 2k,\delta>\alpha-2$&$O(q^{kn})$&$O(q^{kn})$& $O(q^{kn})$&-&\bm{$O(q^{(1+\frac{1}{\lfloor\alpha/2\rfloor})(k-1)n})$} [Theorem \ref{last}]\\\hline
  $k=\delta=2,\alpha=3$&$O(q^{2n})$& $O(q^{2n})$&$O(q^{2n})$&-& \bm{$o(q^{2n})$} [Theorem \ref{223}]\\

   \hline

\end{tabular}
\end{center}
\end{table}

\section{Constructive lower bounds for covering Grassmannian codes}\label{Con}

To the best of our knowledge, the known constructions do not work for $\delta>k$. In this section,  we will give some constructions to derive new lower bounds for $B_q(n,k,\delta;\alpha)$ when $\delta>k.$ To be more precise, we write $\delta=k+\gamma$ where $\gamma$ is a positive integer. When $\alpha=2,$ it was proved in \cite{MR3964845} that a $2$-$(n,k,\delta)_q^c$  code is equivalent to the Grassmannian code with minimum distance $\delta$. Hence, we consider the case of $\alpha=3.$

\subsection{$\alpha=3,q=2$}
We first introduce the basic framework of our construction. We construct a matrix code $\mathcal{M}$ which is closed under subtraction and then lift it. In order to prove that the lifted code is a $3$-$(n,k,k+\gamma)_2^c$ code, it needs to show that for any three distinct matrices $M_1,M_2,M_3\in\mathcal{M},$ we have
\begin{equation*}
    rank\left(
    \begin{array}{cc}
      I_k & M_1 \\
      I_k & M_2 \\
      I_k & M_3
    \end{array}
    \right)= k +
    rank\left(
    \begin{array}{c}
      M_2-M_1  \\
      M_3-M_1
    \end{array}
    \right)\ge 2k+\gamma.
  \end{equation*}
  Since $\mathcal{M}$ is closed under subtraction, it is enough to show that for any two distinct nonzero matrices $M_1',M_2'\in \mathcal{M}$,
\begin{equation*}
    rank\left(
    \begin{array}{c}
      M_1'  \\
      M_2'
    \end{array}
    \right)\ge k+\gamma.
  \end{equation*}

We  start with the simple case of $\gamma=1$.

\begin{construction}\label{base}
Define $E_i=\left(                 
  \begin{array}{ccc}   
    O_{k\times i} & I_k & O_{k\times(n-2k-i)}\\  
  \end{array}
\right) \text{ for }  0\le i\le n-2k$ and let $\mathcal{M}_1=\{\sum\limits_{i=0}^{n-2k}\alpha_i E_i:\alpha_i\in\mathbb{F}_2,  0\le i\le n-2k\}$. It's easy to see that $\mathcal{M}_1$ is a subspace with dimension $n-2k+1$ in $\mathbb{F}_2^{k\times (n-k)}.$ Lifting the matrices in $\mathcal{M}_1$, $2^{n-2k+1}$ different matrices of size $k\times n$ are constructed. Let $\mathbb{C}_1$ be the set of rowspaces of these matrices.
\end{construction}
\begin{theorem}\label{lower bound for B2(n,k,k+1;3)}
Let $n$ and $k$ be positive integers such that $k\geq 2$ and $n\geq 2k+1$. Then $B_2(n,k,k+1;3)\ge 2^{n-2k+1}.$
\end{theorem}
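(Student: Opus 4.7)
The plan is to verify that Construction~\ref{base} yields an $\alpha=3$, $\delta=k+1$ covering code of the claimed size. First I would argue that the size is indeed $2^{n-2k+1}$: the $n-2k+1$ matrices $E_0,\dots,E_{n-2k}$ are $\mathbb{F}_2$-linearly independent (in any fixed row $r$ the $1$'s appear in pairwise distinct columns, namely column $i+r$ for $E_i$), so $\dim \mathcal{M}_1 = n-2k+1$; and since the lifting map $M\mapsto \mathrm{rowspace}([I_k\ M])$ is injective on $k\times(n-k)$ matrices, $|\mathbb{C}_1|=|\mathcal{M}_1|=2^{n-2k+1}$.

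Next I would invoke the framework already set up in the paper: because $\mathcal{M}_1$ is closed under subtraction, it suffices to show that for any two \emph{distinct nonzero} $M_1',M_2'\in\mathcal{M}_1$, the stacked matrix $\binom{M_1'}{M_2'}$ has rank at least $k+\gamma=k+1$. The key observation is a polynomial reformulation of $\mathcal{M}_1$: an element $M=\sum_{i=0}^{n-2k}\alpha_i E_i$ has entry $\alpha_{c-r}$ in row $r$, column $c$ (where $\alpha_j=0$ outside the valid range), so if we set $\alpha(x)=\sum_i \alpha_i x^i \in\mathbb{F}_2[x]$ and identify a row vector with the corresponding polynomial of degree $\le n-k-1$, then row $r$ of $M$ is the coefficient vector of $x^r\alpha(x)$. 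Thus the rowspace of $M$ is
\[
V_\alpha := \langle \alpha(x),\, x\alpha(x),\, \ldots,\, x^{k-1}\alpha(x)\rangle,
\]
which has dimension exactly $k$ when $\alpha\neq 0$ (the generators have distinct leading degrees).

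Writing $M_j'$ as the polynomial $\alpha_j(x)$, the rowspace of $\binom{M_1'}{M_2'}$ is $V_{\alpha_1}+V_{\alpha_2}$, so it remains to show that $V_{\alpha_1}\neq V_{\alpha_2}$ whenever $\alpha_1\neq\alpha_2$ are nonzero polynomials of degree $\le n-2k$. Assume for contradiction $V_{\alpha_1}=V_{\alpha_2}$. Then $\alpha_2\in V_{\alpha_1}$ gives $\alpha_2 = f(x)\alpha_1(x)$ for some $f$ with $\deg f\le k-1$, and symmetrically $\alpha_1 = g(x)\alpha_2(x)$. Substituting yields $\alpha_1 = f(x)g(x)\alpha_1$, so $f(x)g(x)=1$ in the integral domain $\mathbb{F}_2[x]$, forcing $f=g=1$ and hence $\alpha_1=\alpha_2$, a contradiction. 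Therefore $\dim(V_{\alpha_1}+V_{\alpha_2})\ge k+1$, as required.

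I do not foresee a serious obstacle here; the whole argument rests on the single insight that the generators $E_i$ of $\mathcal{M}_1$ encode multiplication-by-$x^r$ on a polynomial $\alpha(x)$, after which the rank question becomes the almost trivial statement that two distinct nonzero polynomials of bounded degree cannot generate the same cyclic $k$-dimensional shift-module over $\mathbb{F}_2[x]$.
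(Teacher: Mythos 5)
Your proposal uses the same Construction~\ref{base} and the same closed-under-subtraction lifting framework as the paper, reducing everything to showing that the stacked matrix of two distinct nonzero $M_1',M_2'\in\mathcal{M}_1$ has rank at least $k+1$; the paper dispatches this final step as ``easy to verify,'' while your polynomial reformulation (rows of $M$ are $\alpha(x),x\alpha(x),\dots,x^{k-1}\alpha(x)$, so equal rowspaces force $f(x)g(x)=1$ in $\mathbb{F}_2[x]$ and hence $\alpha_1=\alpha_2$) supplies a clean and correct proof of exactly that claim. This is essentially the paper's argument with its one omitted verification carried out explicitly.
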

\begin{proof}
 Let $\mathbb{C}_1$ be the set of $k$-subspaces obtained in Construction \ref{base}. We claim that $\mathbb{C}_1$ is a $3$-$(n,k,k+1)^c_2$ code. Given two distinct nonzero matrices $A_1,A_2\in\mathcal{M}_1$, by the construction of $\mathcal{M}_1$, it is easy to verify that $rank\begin{pmatrix}
                                   A_1 \\
                                   A_2
                                 \end{pmatrix}\leq k$ if and only if $A_1=A_2$ or one of them is a zero matrix. It contradicts to the assumption that $A_1$ and $A_2$ are two distinct nonzero matrices. Thus we have shown that $\mathbb{C}_1$ is a $3$-$(n,k,k+1)^c_2$ code. Then the conclusion follows by noting that $|\mathbb{C}_1|=2^{n-2k+1}$.
\end{proof}

Combining Theorem \ref{subspace general upper bound} and Theorem \ref{base}, we get the following result.

\begin{corollary}
$B_2(n,2,3;3)=\Theta(2^n)$.
\end{corollary}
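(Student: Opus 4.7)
The plan is to combine the two theorems cited in the corollary by instantiating them at $k=2$, $\delta=3$, $\alpha=3$, $q=2$, and comparing the resulting growth rates in $n$.

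For the upper bound, I would apply Theorem \ref{subspace general upper bound}. First I must check the hypothesis $\delta\le(\alpha-1)k$: here this reads $3\le 2\cdot 2=4$, which holds. Then I compute
\[
h=\left\lfloor k+1-\frac{\delta}{\alpha-1}\right\rfloor=\left\lfloor 3-\frac{3}{2}\right\rfloor=1,
\]
so the theorem yields
\[
B_2(n,2,3;3)\le (\alpha-1)\frac{\Gaussbinom{n}{1}_2}{\Gaussbinom{2}{1}_2}=2\cdot\frac{2^n-1}{3}=O(2^n).
\]

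For the lower bound, I would invoke Theorem \ref{lower bound for B2(n,k,k+1;3)} with $k=2$, whose hypothesis $n\ge 2k+1=5$ is the trivial range of interest. Since $\delta=k+1=3$ in this specialization, the theorem gives
\[
B_2(n,2,3;3)\ge 2^{n-2k+1}=2^{n-3}=\Omega(2^n).
\]

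Putting the two estimates together immediately yields $B_2(n,2,3;3)=\Theta(2^n)$. There is no real obstacle here; the only thing to be careful about is computing the floor $h=1$ correctly (so that the upper bound is $O(2^n)$ rather than, say, $O(2^{2n})$ with $h=2$) and verifying the hypothesis $\delta\le(\alpha-1)k$ of Theorem \ref{subspace general upper bound} so that its application is legitimate.
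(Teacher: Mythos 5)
Your proposal is correct and takes exactly the route the paper intends: the corollary is stated as a direct consequence of Theorem \ref{subspace general upper bound} (upper bound, computing $h=1$) and Theorem \ref{lower bound for B2(n,k,k+1;3)} (lower bound $2^{n-3}$), and your computations of $h$, the hypothesis check $\delta\le(\alpha-1)k$, and the Gaussian binomials match. The only minor note is that the paper's prose cites ``Theorem \ref{base}'' for the lower bound, which is actually the label of the construction rather than of the theorem; you have correctly identified the intended result.
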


When $\gamma>1$, the situation becomes complicated. For convenience, we define a shift operation to a matrix $A$ as follows. Let $A$ be an $a\times b$ matrix, 
then $sh(A)$ is an $a\times(a+b-1)$ matrix such that
$sh(A)_{i,j+i-1}=A_{i,j}$ if $1\le i\le a,$ $1\le j\le b,$ and $sh(A)_{i,j}=0$ otherwise.
For example, if $A=\begin{pmatrix}
                     a_1 & a_2 \\
                     a_3 & a_4
                   \end{pmatrix}$, then $sh(A)=\begin{pmatrix}
                                                 a_1 & a_2 & 0\\
                                                  0 & a_3 & a_4
                                               \end{pmatrix}$.

We give the following three constructions for the lower bounds of $B_2(n,k,k+\gamma;3)$ based on the relationship between $\gamma$ and $k$.

When $k\ge3\lceil\frac{\gamma}{2}\rceil$, let $m=\lceil\frac{\gamma}{2}\rceil$ and $\ell=\lfloor\frac{k}{m}\rfloor$. Then $2m\geq\gamma$. Define
  \begin{equation*}
    \mathcal{A}=\left\{
    \begin{pmatrix}
        a_{1} & a_{2}  & \cdots & a_{k} & b_1  \\
        a_{1} &  a_2 & \cdots &  a_k & b_2 \\
        \vdots & \vdots & \ddots & \vdots &  \vdots\\
        a_{1} & a_2 & \cdots & a_k & b_k\\
    \end{pmatrix}:\makecell{a_i\in\mathbb{F}_2 \text{ for }1\le i\le k,\\
                 a_i=0 \text{ if }i\not\equiv1 (\bmod{m})  \text{ or }i>1+(\ell-1)m, \\
                  b_i=a_{jm+1}\text{ if }jm+1\leq i\leq\min\{(j+1)m,k\},0\leq j\leq l}
    \right\}.
    \end{equation*}

\begin{construction}
Define the $k\times(t(k+1))$ matrix code $\mathcal{M}_2'=\{(A_1| A_2|\cdots| A_t):A_i\in\mathcal{A}\}$  where $t=\lfloor\frac{n-2k+1}{k+1}\rfloor.$ Let $\mathcal{M}_2=\{sh(M):M\in\mathcal{M}_2'\}.$ It's easy to see that $\mathcal{M}_2$ is a subspace with dimension $\lfloor\frac{n-2k+1}{k+1}\rfloor \lfloor\frac{k}{\lceil\gamma/2\rceil}\rfloor$ in $\mathbb{F}_2^{k\times (t(k+1)+k-1)}.$ Lifting the matrices in $\mathcal{M}_2$, $2^{\lfloor\frac{n-2k+1}{k+1}\rfloor \lfloor\frac{k}{\lceil\gamma/2\rceil}\rfloor}$ different matrices of size $k\times(t(k+1)+2k-1)$ are constructed. Let $\mathbb{C}_2$ be the set of rowspaces of these matrices.
\end{construction}

\begin{theorem}\label{lower bound for B2(n,k,k+2;3)}
Let $n,k$ and $\gamma$ be positive integers such that $k\ge3\lceil\frac{\gamma}{2}\rceil$ and $n\geq 2k+\gamma$.  Then $B_2(n,k,k+\gamma;3)\ge 2^{\lfloor\frac{n-2k+1}{k+1}\rfloor \lfloor\frac{k}{\lceil\gamma/2\rceil}\rfloor}.$
\end{theorem}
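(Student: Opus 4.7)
Following the framework outlined at the start of this section, the task reduces to verifying that for any two distinct nonzero matrices $M_1,M_2\in\mathcal{M}_2$ one has $\mathrm{rank}\begin{pmatrix}M_1\\ M_2\end{pmatrix}\ge k+\gamma$. Since the shift is injective and $\mathcal{M}_2$ has cardinality $2^{t\ell}$ with $t=\lfloor(n-2k+1)/(k+1)\rfloor$ and $\ell=\lfloor k/m\rfloor$, $m=\lceil\gamma/2\rceil$, the announced lower bound on $B_2(n,k,k+\gamma;3)$ will then follow.

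The first step is to establish that every nonzero element of $\mathcal{M}_2$ has full row rank $k$. Writing $M=sh(N)$ for a nonzero $N=(A_1\mid\cdots\mid A_t)\in\mathcal{M}_2'$, the definition of $\mathcal{A}$ forces every row of each block $A_s$ to agree on its first $k$ entries. Hence all $k$ rows of $N$ share a common leftmost nonzero column $p$, and the shift places the leftmost nonzero of row $i$ of $sh(N)$ at column $p+i-1$. These $k$ positions are distinct, so $sh(N)$ is in row echelon form up to reordering and has rank $k$.

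Next, set $D=N_1+N_2\in\mathcal{M}_2'$, which is nonzero because $M_1\ne M_2$; over $\mathbb{F}_2$, elementary row operations yield $\mathrm{rank}\begin{pmatrix}sh(N_1)\\ sh(N_2)\end{pmatrix}=\mathrm{rank}\begin{pmatrix}sh(N_1)\\ sh(D)\end{pmatrix}$. Every leading column of a nonzero element of $\mathcal{M}_2'$ is of the form $(s-1)(k+1)+(j-1)m+1$ with $s\in\{1,\ldots,t\}$ and $j\in\{1,\ldots,\ell\}$, so leading columns within a block are spaced by multiples of $m$, and successive blocks are separated by $k+1$. Let $p_1$ and $p_D$ denote the leading columns of $N_1$ and $D$. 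The leading columns of the $2k$ shifted rows of the joint matrix lie in $\{p_1,\ldots,p_1+k-1\}\cup\{p_D,\ldots,p_D+k-1\}$; when $|p_1-p_D|\ge\gamma$, this union already has at least $k+\gamma$ elements, giving the required rank bound immediately.

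The main obstacle is the remaining case $|p_1-p_D|<\gamma$, which in particular includes $p_1=p_D$. Here I plan a residual analysis: at a column where a row of $sh(N_1)$ and a row of $sh(D)$ share the same leading $1$, cancelling one against the other produces (by $\mathbb{F}_2$-linearity of $sh$) a shifted row of another element of $\mathcal{M}_2'$, whose new leading column must lie at least $m$ further to the right because the next possible spike position is at $(j-1)m+1$ for the next larger $j$. This pushes the pivot set out by at least $m$ in the first round, giving $k+m$ distinct pivot columns; iterating the analysis once more produces a further $m$ pivots, yielding the required $k+2m\ge k+\gamma$. The hypothesis $k\ge 3m$ (equivalently $\ell\ge 3$) is precisely what guarantees that every nonzero block offers at least three consecutive spike positions, exactly the room needed so that the residual process can run twice without stalling, while the inter-block gap of $k+1$ exceeds the shift radius $k-1$ and so different blocks' contributions do not interfere. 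Handling the corner cases in which some individual residual vanishes, and verifying that subsequent rows then supply the missing pivots, constitutes the technically delicate bookkeeping of the proof.
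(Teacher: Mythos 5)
Your high-level framework agrees with the paper's: reduce to pairs $M_1,M_2\in\mathcal{M}_2'$, replace $M_2$ by $M_1+M_2$ so that the leading columns $p_1<p_D$ are distinct, and observe that when $p_D-p_1\ge 2m\ge\gamma$ the first $k$ rows of $sh(M_1)$ together with the last $\gamma$ rows of $sh(M_2)$ already give $k+\gamma$ linearly independent rows. You also correctly note that in the remaining case $p_D-p_1\ge m$, because spike positions inside one block are spaced by $m$ and the gap across consecutive blocks is $k+1-(\ell-1)m\ge m+1$.

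The treatment of the hard case $p_D-p_1<\gamma$ is, however, only a plan, and its central claim is wrong. You assert that when a row of $sh(N_1)$ and a row of $sh(D)$ share a leading column, cancelling them produces ``a shifted row of another element of $\mathcal{M}_2'$'' by linearity of $sh$. But for the leading $1$'s to collide, the two rows must be in different positions of the two shifted matrices (since $p_1\ne p_D$), hence carry different shift amounts; their sum is not a uniform shift of any row, and in particular it is not row $i$ of $sh(M)$ for any $M\in\mathcal{M}_2'$. Moreover, the last-column entries $b_i$ of each block vary from row to row, so the leading position of the residual after cancellation need not advance by a multiple of $m$, and the ``second round'' of your argument is not controlled. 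The paper does not iterate a residual computation at all; instead it also tracks the rightmost nonzero columns $i',j'$ of $M_1,M_2$ and carries out a seven-way case analysis on which of $i',j'$ is larger and on the residues $\beta',\mu'$, naming in each case an explicit set of $k+2m$ rows (some from the top, some from the middle, some from the bottom of $\begin{pmatrix}sh(M_1)\\ sh(M_2)\end{pmatrix}$) that are linearly independent. Your proposal defers exactly this bookkeeping, and the sketch that is supposed to replace it rests on a false structural claim, so the proof has a genuine gap precisely where the work is.
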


\begin{proof}
 We need to show that for every pair of distinct nonzero matrices $M_1, M_2\in \mathcal{M}_2'$, we have $rank\begin{pmatrix}
                                                                                 sh(M_1) \\
                                                                                 sh(M_2)
                                                                               \end{pmatrix}\ge k+\gamma.$ Let $i$ and $j$ be the least integers such that $(M_1)_{1i}$ and $(M_2)_{1j}$ are nonzero, respectively. Since $M_1$ and $M_2$ are nonzero, $i$ and $j$ exist. Without loss of generality, we assume that $i<j.$ Since if $i=j,$ then we can let $M_2:=M_1+M_2$ which makes $i<j.$

    If $j\ge i+2m\geq i+\gamma,$ then the first $k$ rows and the last $\gamma$ rows of $\begin{pmatrix}
                                                                                 sh(M_1) \\
                                                                                 sh(M_2)
                                                                               \end{pmatrix}$ are linearly independent, which implies that $rank\begin{pmatrix}
                                                                                 sh(M_1) \\
                                                                                 sh(M_2)
                                                                               \end{pmatrix}\ge k+\gamma.$

     It remains to prove that $rank\begin{pmatrix}
                                                                                 sh(M_1) \\
                                                                                 sh(M_2)
                                                                               \end{pmatrix}\ge k+\gamma$
     for $i<j<i+2m$. $(M_1)_{1i}\neq0$ implies that $i=\alpha(k+1)+1+\beta m$, where $0\leq\alpha\leq t-1$ and $0\leq\beta\leq \ell-1$. Similarly, $j=\lambda(k+1)+1+\mu m$, where $0\leq\lambda\leq t-1$ and $0\leq\mu\leq \ell-1$. So $j=i+(\lambda-\alpha)(k+1)+(\mu-\beta)m$. Since $i<j<i+2m$, we have that
     \begin{itemize}
       \item $\lambda=\alpha$ and $\mu=\beta+1$,
       \item $\lambda=\alpha+1$, $\mu=0$ and $\beta=\ell-1$.
     \end{itemize}
      If $\lambda=\alpha$ and $\mu=\beta+1$, then $j=i+m$. If $\lambda=\alpha+1$, $\mu=0$ and $\beta=\ell-1$, then $j=i+k+1-(\ell-1)m\geq i+m$.

     Let $i'$ and $j'$ be the largest integers such that the $i'$-th column of $M_1$ and the $j'$-th column of $M_2$ are nonzero, respectively. Since $M_1$ and $M_2$ are nonzero, $i'$ and $j'$ exist. By construction, we can assume that $i'=\alpha'(k+1)+(\beta'+1)m+k$ and $j'=\lambda'(k+1)+(\mu'+1)m+k$, where $0\leq\alpha'\leq t-1$, $0\leq\beta'\leq \ell-1$, $0\leq\lambda'\leq t-1$, and $0\leq\mu'\leq \ell-1$. Similar to $i$ and $j,$ without loss of generality, we may assume $i'\neq j'.$  Note that $j'\geq j+k-1+m\geq 2m+k$.

     \begin{description}
       \item[(1)] If $i'>j'$ and $\beta'>0$, then the first $m$ rows, the $\{\beta'm+1,\beta'm+2,\cdots,(\beta'+1)m\}$-th rows, and the last $k$ rows of $\begin{pmatrix}
                                                                                 sh(M_1) \\
                                                                                 sh(M_2)
                                                                               \end{pmatrix}$ are linearly independent.
       \item[(2)] If $i'>j'$, $\beta'=0$ and $\mu'<\ell-1$, then the first $m$ rows, the $\{(\mu'+1)m+1,(\mu'+1)m+2, \cdots, (\mu'+2)m\}$-th rows, and the last $k$ rows of $\begin{pmatrix}
                                                                                 sh(M_1) \\
                                                                                 sh(M_2)
                                                                               \end{pmatrix}$ are linearly independent.
       \item[(3)] If $i'>j'$, $\beta'=0$ and $\mu'=\ell-1$, then the first $m$ rows and the last $(k+m)$ rows of $\begin{pmatrix}
                                                                                 sh(M_1) \\
                                                                                 sh(M_2)
                                                                               \end{pmatrix}$ are linearly independent.
       \item[(4)] If $j'>i'$ and $\mu'<\ell-1,$ then the first $k$ rows, the $\{k+\mu'm+1,k+\mu'm+2, \cdots, k+(\mu'+1)m\}$-th rows, and the last $m$ rows of $\begin{pmatrix}
                                                                                 sh(M_1) \\
                                                                                 sh(M_2)
                                                                               \end{pmatrix}$ are linearly independent.
       \item[(5)] If $j'>i'$, $\mu'=\ell-1$, $\beta'>0$ and $j=i+m$, then the first $k$ rows, the $\{k+(\beta'-1)m+1,k+(\beta'-1)m+2, \cdots,k+\beta'm\}$-th rows, the $\{k+\mu'm+1, k+\mu'm+2, \cdots,k+(\mu'+1)m\}$-th rows of $\begin{pmatrix}
                                                                                 sh(M_1) \\
                                                                                 sh(M_2)
                                                                               \end{pmatrix}$ are linearly independent.
       \item[(6)]  If $j'>i'$, $\mu'=\ell-1$, $\beta'>0$ and $j=i+k+1-(\ell-1)m$, then the first $k$ rows, the $\{k+(\beta'-1)m+1, k+(\beta'-1)m+2, \cdots, k+(\beta'-1)m+2m+i-j\}$-th rows, and the last $(j-i)$ rows of $\begin{pmatrix}
                                                                                 sh(M_1) \\
                                                                                 sh(M_2)
                                                                               \end{pmatrix}$ are linearly independent.
       \item[(7)] If $j'>i'$, $\mu'=\ell-1$ and $\beta'=0$, then the first $k$ rows and the last $2m$ rows of $\begin{pmatrix}
                                                                                 sh(M_1) \\
                                                                                 sh(M_2)
                                                                               \end{pmatrix}$ are linearly independent.
     \end{description}
      Hence, when $i<j<i+2m$, we have that $rank\begin{pmatrix}
                                                                                 sh(M_1) \\
                                                                                 sh(M_2)
                                                                               \end{pmatrix}\ge k+2m\geq k+\gamma.$
\end{proof}

When $\gamma+1\leq k<3\lceil\frac{\gamma}{2}\rceil$, define a $k\times(k+1)$ matrix $C$ where $(C)_{i1}=1$ for $1\leq i\leq k$, $(C)_{i,k+1}=1$ for $1\leq i\leq k-2$, and $(C)_{ij}=0$ otherwise. Similarly, define a $k\times(k+1)$ matrix $D$ where $(D)_{i2}=1$ for $1\leq i\leq k$, $(D)_{i,k+1}=1$ for $2\leq i\leq k-1$, and $(D)_{ij}=0$ otherwise. Let $\mathcal{B}=\left\{\alpha C+\beta D:\alpha, \beta\in\mathbb{F}_2\right\}.$
\begin{construction}
Define the $k\times(t(k+1))$ matrix code $\mathcal{M}_3'=\{(B_1| B_2|\cdots| B_t):B_i\in\mathcal{B}\}$  where $t=\lfloor\frac{n-2k+1}{k+1}\rfloor.$ Let $\mathcal{M}_3=\{sh(M):M\in\mathcal{M}_3'\}.$ It's easy to see that $\mathcal{M}_3$ is a subspace with dimension $2\lfloor\frac{n-2k+1}{k+1}\rfloor$ in $\mathbb{F}_2^{k\times (t(k+1)+k-1)}.$ Lifting the matrices in $\mathcal{M}_3$, $2^{2\lfloor\frac{n-2k+1}{k+1}\rfloor}$ different matrices of size $k\times(t(k+1)+2k-1)$ are constructed. Let $\mathbb{C}_3$ be the set of rowspaces of these matrices.
\end{construction}

\begin{theorem}\label{lower bound for B2(n,k,2k-1;3)}
Let $n,k$ and $\gamma$ be positive integers such that $\gamma+1\leq k<3\lceil\frac{\gamma}{2}\rceil$ and $n\geq 2k+\gamma$. Then $B_2(n,k,k+\gamma;3)\ge 2^{2\lfloor\frac{n-2k+1}{k+1}\rfloor}.$
\end{theorem}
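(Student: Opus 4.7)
Following the framework preceding Construction \ref{base}, it suffices to show that for every pair of distinct nonzero matrices $M_1, M_2 \in \mathcal{M}_3'$, we have
\[
rank\begin{pmatrix} sh(M_1) \\ sh(M_2) \end{pmatrix} \ge k + \gamma.
\]
The core computation is a single-block lemma: for any two distinct nonzero $B, B' \in \mathcal{B} = \{C, D, C+D\}$, $rank\begin{pmatrix} sh(B) \\ sh(B') \end{pmatrix} \ge k + \gamma$. Since $sh(C+D) = sh(C) + sh(D)$, all three pairs reduce by elementary row operations to $rank\begin{pmatrix} sh(C) \\ sh(D) \end{pmatrix}$. Here $sh(C)$ and $sh(D)$ are each $k\times 2k$ of rank $k$, with the row-$i$ leading entry at column $i$ for $sh(C)$ and at column $i+1$ for $sh(D)$; the $2k$ rows share leading columns pairwise in $\{2,\dots,k\}$, giving at most $k-1$ potential dependences. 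The tail contributions in columns $k+2,\dots,2k-1$ coming from column $k+1$ of $C$ (support $\{1,\dots,k-2\}$) and $D$ (support $\{2,\dots,k-1\}$) supply $k-2$ further pivots, so the stacked rank is $(k+1)+(k-2)=2k-1\ge k+\gamma$ by the assumption $\gamma\le k-1$.

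For the general case, let $p,q$ be the smallest indices with $B_{1,p}\ne 0$ and $B_{2,q}\ne 0$ respectively; WLOG $p\le q$. When $p<q$, the leading column of row $i$ in $sh(M_1)$ lies in $\{(p-1)(k+1)+i,(p-1)(k+1)+i+1\}\subseteq[(p-1)(k+1)+1,\,p(k+1)]$, whereas the leading column of row $i$ in $sh(M_2)$ lies in $[(q-1)(k+1)+1,\,q(k+1)]$. Since $p<q$ implies $p(k+1)<(q-1)(k+1)+1$, these two intervals are disjoint, so all $2k$ leading columns are distinct and the rank is at least $2k\ge k+\gamma$. When $p=q$ and $B_{1,p}=B_{2,p}$, the matrix $M_1-M_2$ is nonzero with first nonzero block at some $p'>p$; since $\begin{pmatrix}sh(M_1)\\ sh(M_2)\end{pmatrix}$ is row-equivalent to $\begin{pmatrix}sh(M_1)\\ sh(M_1-M_2)\end{pmatrix}$, this case reduces to $p<p'$ above.

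The remaining subcase, $p=q$ and $B_{1,p}\ne B_{2,p}$, requires the most care. I would restrict to the column range $[(p-1)(k+1)+1,(p-1)(k+1)+2k-1]$, in which block $p$ of each $M_i$ is present in full. Within this range, the later blocks $B_{i,p+1},B_{i,p+2},\ldots$ can modify entries only at positions $(i,c)$ with $1\le i\le k-2$ and $p(k+1)+i\le c\le p(k+1)+k-2$, i.e.\ in a strict upper-triangular corner. These modifications do not coincide with the leading positions of any of the $k+\gamma$ pivots exhibited by the single-block lemma (which live at the leading columns $(p-1)(k+1)+1,\ldots,(p-1)(k+1)+k+1$ and at certain tail columns $(p-1)(k+1)+k+2,\ldots,(p-1)(k+1)+2k-1$ in rows where no later-block modification occurs, e.g., rows $k-1$ and $k$). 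Hence each pivot survives, possibly with a different numerical value, and the rank remains at least $k+\gamma$.

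The main obstacle is the single-block lemma combined with the triangular-corner bookkeeping of the third subcase: one must verify that the asymmetric supports $\{1,\dots,k-2\}$ and $\{2,\dots,k-1\}$ in column $k+1$ of $C$ and $D$ both (i) produce exactly one unavoidable cancellation in the single-block setting and (ii) confine the later-block modifications to columns that lie strictly below the identified pivots. I would handle this by a careful row reduction of $\begin{pmatrix}sh(C)\\ sh(D)\end{pmatrix}$, explicitly identifying the $k+\gamma$ linearly independent rows, and then arguing that each such pivot row still has its leading column outside the modification corner, so that adding any later-block contribution only affects non-leading entries of the chosen basis.
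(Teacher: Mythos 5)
Your proposal follows essentially the same route as the paper: reduce to the statement that any two distinct nonzero $M_1,M_2\in\mathcal{M}_3'$ give $rank\begin{pmatrix}sh(M_1)\\sh(M_2)\end{pmatrix}\ge k+\gamma$, split by whether the first nonzero blocks of $M_1,M_2$ are separated or coincide, reduce the coinciding case to the pair $(C,D)$ by $\mathbb{F}_2$-linear row operations, and show the stacked matrix there has rank $2k-1\ge k+\gamma$ because the leading (diagonal-shifted) entries plus the column-$(k+1)$ tails of $C$ and $D$ supply enough pivots which survive the addition of later blocks. The paper compresses the last step into the single assertion that the first $k$ and last $k-1$ rows of $\begin{pmatrix}sh(M_1)\\sh(M_2)\end{pmatrix}$ are independent, and is equally terse about why; your version is a slightly more explicit reorganization of the same idea (a separate single-block rank count, then a triangular-corner argument that later blocks only perturb columns strictly to the right of the chosen pivots).

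Two steps in your sketch are asserted rather than argued and would need to be filled in. First, the single-block count ``$(k+1)+(k-2)=2k-1$'' is not an addition of independent contributions: the $k-2$ tail pivots come from the same rows that created the $k-1$ leading-column collisions, so the correct statement is that after clearing each shared leading column one gets $k-1$ reduced rows $d_r$ (supported only in the tail range) which satisfy exactly one $\mathbb{F}_2$-linear relation, hence contribute $k-2$ new pivots at columns $(p-1)(k+1)+k+2,\dots,(p-1)(k+1)+2k-1$; the one unavoidable relation is forced by the asymmetric supports $\{1,\dots,k-2\}$ and $\{2,\dots,k-1\}$. Second, the ``triangular-corner'' claim needs the explicit observation that block $p+1$ first touches row $r$ of $sh(M_i)$ at local column $k+1+r$, while the pivot produced by $d_r$ sits at local column $k+r-1$ and the pivot of $sh(M_2)$'s last row sits at $k+1$, both strictly smaller; once this inequality is recorded the conclusion follows. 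These are gaps in rigor rather than in strategy, and the approach you outline is sound.
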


\begin{proof}
We need to show that for every pair of distinct nonzero matrices $M_1, M_2\in \mathcal{M}_3'$, we have $rank\begin{pmatrix}
                                                                                 sh(M_1) \\
                                                                                 sh(M_2)
                                                                               \end{pmatrix}\ge 2k-1\geq k+\gamma.$ Let $i$ and $j$ be the least integers such that $(M_1)_{1i}$ and $(M_2)_{1j}$ are nonzero, respectively. Since $M_1$ and $M_2$ are nonzero, $i$ and $j$ exist. Without loss of generality, we assume that $i<j.$ Since if $i=j,$ then we can let $M_2:=M_1+M_2$ which makes $i<j.$

    If $j\ge i+k,$ then $rank\begin{pmatrix}
                                                                                 sh(M_1) \\
                                                                                 sh(M_2)
                                                                               \end{pmatrix}=2k\ge 2k-1\geq k+\gamma.$

     It remains to prove the case of $i<j<i+k$, which occurs only when $j=i+1$ by the construction. Without loss of generality, we assume that $M_1=(O|\cdots|O|C|A_\alpha|\cdots|A_t)$ and $M_2=(O|\cdots|O|D|B_\alpha|\cdots|B_t)$, where $A_\alpha, \ldots, A_t, B_\alpha, \ldots, B_t\in\mathcal{B}$. Then the first $k$ rows and the last $k-1$ rows of $\begin{pmatrix}
                                                                                 sh(M_1) \\
                                                                                 sh(M_2)
                                                                               \end{pmatrix}$ are linearly independent, which implies that $rank\begin{pmatrix}
                                                                                 sh(M_1) \\
                                                                                 sh(M_2)
                                                                               \end{pmatrix}\geq 2k-1\geq k+\gamma.$
\end{proof}

\begin{construction}
Let $\mathcal{C}=\{O_{k}, I_{k}\}$ and define the $k\times(tk)$ matrix code $\mathcal{M}_4=\{(C_1| C_2|\cdots| C_t):C_i\in\mathcal{C}\}$, where $t=\lfloor\frac{n-k}{k}\rfloor.$ It's easy to see that $\mathcal{M}_4$ is a subspace with dimension $\lfloor\frac{n-k}{k}\rfloor$ in $\mathbb{F}_2^{k\times (tk)}.$ Lifting the matrices in $\mathcal{M}_4$, $2^{\lfloor\frac{n-k}{k}\rfloor}$ different matrices of size $k\times((t+1)k)$ are constructed. Let $\mathbb{C}_4$ be the set of rowspaces of these matrices.
\end{construction}

By the construction of $\mathcal{M}_4$, for every pair of distinct nonzero matrices $M_1, M_2\in \mathcal{M}_4$, we have $rank\begin{pmatrix}
                                                                                 M_1 \\
                                                                                 M_2
                                                                               \end{pmatrix}=2k.$ Thus, we get the following result.

\begin{theorem}\label{lower bound for B2(n,k,2k;3)}
Let $n$ and $k$ be positive integers such that $n\geq 3k$. Then $B_2(n,k,2k;3)\ge 2^{\lfloor\frac{n-k}{k}\rfloor}.$
\end{theorem}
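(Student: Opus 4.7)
The plan is to use the general framework set up at the beginning of Section~\ref{Con}. Because $\mathcal{M}_4$ is an $\mathbb{F}_2$-subspace, it is closed under subtraction, so for any three distinct lifted codewords $[I_k\,|\,M_i]$ the span has dimension $k+\mathrm{rank}\!\begin{pmatrix} M_2-M_1 \\ M_3-M_1\end{pmatrix}$, and the two differences are themselves distinct nonzero elements of $\mathcal{M}_4$. Since $\delta = 2k$ here, the entire theorem reduces to establishing the single rank identity: for every pair of distinct nonzero $M,M'\in \mathcal{M}_4$ one has $\mathrm{rank}\!\begin{pmatrix} M \\ M'\end{pmatrix}=2k$.

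To handle this, I would index the elements of $\mathcal{M}_4$ by nonempty subsets of $[t]$: write $M_S$ for the matrix whose $j$-th block of $k$ columns equals $I_k$ when $j\in S$ and $O_k$ otherwise. Given two distinct nonempty $S,T\subseteq [t]$, I would test linear independence of the $2k$ rows of $\begin{pmatrix} M_S \\ M_T\end{pmatrix}$ by considering an arbitrary combination $\sum_{i=1}^k a_i r_i + \sum_{i=1}^k b_i r'_i=0$ and reading the coefficient in column $(j-1)k+i$. The block structure gives the uniform-in-$i$ relation $a_i[j\in S]+b_i[j\in T]=0$ within each block $j$.

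The key step is then a short case analysis on the symmetric difference $S\triangle T$. If both $S\setminus T$ and $T\setminus S$ are nonempty, those two block types separately force $a_i=0$ and $b_i=0$. Otherwise one of the sets is strictly contained in the other, say $T\subsetneq S$; then blocks in $S\setminus T$ give $a_i=0$, and because $T$ is nonempty the intersection $S\cap T=T$ is also nonempty, so the relation $a_i+b_i=0$ on common blocks forces $b_i=0$. In every case the rows are independent, so the rank is exactly $2k$. The count is then immediate: $|\mathbb{C}_4|=|\mathcal{M}_4|=2^t=2^{\lfloor(n-k)/k\rfloor}$, since each of the $t$ blocks is independently $I_k$ or $O_k$.

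No step here is genuinely difficult; the main thing to handle carefully is the bookkeeping that makes the column-by-column argument work cleanly, and to note that the hypothesis $n\geq 3k$ guarantees $t\geq 2$ so that three distinct codewords actually exist and the lifted matrices of width $(t+1)k\le n$ can be embedded into $\mathbb{F}_2^n$ by padding with zero columns without affecting pairwise ranks.
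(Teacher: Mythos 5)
Your proof is correct and follows exactly the paper's approach: reduce via the closure-under-subtraction framework to checking that any two distinct nonzero matrices in $\mathcal{M}_4$ stack to rank $2k$, then count. The paper simply asserts the rank fact by inspection; your subset-indexing case analysis is a harmless elaboration of the same observation.
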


Now we have completed the constructions for all parameters of $3$-$(n,k,k+\gamma)_2^c$ codes.

\subsection{$\alpha=3,q>2$}

When $q>2,$ since the matrix code we construct above is no longer closed under subtraction, we will give a construction using nonlinear indicators.

For a positive integer $\gamma$, let $\ell=\lfloor\frac{k}{\gamma}\rfloor$. Define
  \begin{equation*}
    \mathcal{D}=\left\{
    \begin{pmatrix}
        a_{1} & a_{2}  & \cdots & a_{k} & b_1  \\
        a_{1} &  a_2 & \cdots &  a_k & b_2 \\
        \vdots & \vdots & \ddots & \vdots &  \vdots\\
        a_{1} & a_2 & \cdots & a_k & b_k\\
    \end{pmatrix}:\makecell{a_i\in\mathbb{F}_q \text{ for each }1\le i\le k, \\
        a_i=0 \text{ if }i\not\equiv1(\bmod \gamma)\text{ or }i>1+(\ell-1)\gamma, \\
        b_i=a_{j\gamma+1}^2\text{ if }j\gamma+1\leq i\leq\min\{(j+1)\gamma,k\},0\leq j\leq \ell}
    \right\}.
    \end{equation*}
\begin{construction}
Define the $k\times(t(k+1))$ matrix code $\mathcal{M'}=\{(D_1| D_2|\cdots| D_t):D_i\in\mathcal{D}\}$  where $t=\lfloor\frac{n-2k+1}{k+1}\rfloor.$ Let $\mathcal{M}_5=\{sh(M):M\in\mathcal{M'}\}.$ It's easy to see that the size of $\mathcal{M}_5$ is $q^{\lfloor\frac{n-2k+1}{k+1}\rfloor\lfloor\frac{k}{\gamma}\rfloor}$. Lifting the matrices in $\mathcal{M}_5$, $q^{\lfloor\frac{n-2k+1}{k+1}\rfloor\lfloor\frac{k}{\gamma}\rfloor}$ different matrices of size $k\times(t(k+1)+2k-1)$ are constructed. Let $\mathbb{C}_5$ be the set of rowspaces of these matrices.
\end{construction}

\begin{theorem}\label{lower bound for Bq(n,k,k+gamma;3)}
Let $n,k$ and $\gamma$ be positive integers such that $n\geq 2k+\gamma$. Then $B_q(n,k,k+\gamma;3)\ge q^{\lfloor\frac{n-2k+1}{k+1}\rfloor \lfloor\frac{k}{\gamma}\rfloor}.$
\end{theorem}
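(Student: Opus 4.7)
The plan is to show that $\mathbb{C}_5$ is a $3$-$(n,k,k+\gamma)_q^c$ code; combined with the manifest count $|\mathbb{C}_5|=|\mathcal{M}_5|=q^{\lfloor(n-2k+1)/(k+1)\rfloor\lfloor k/\gamma\rfloor}$, this yields the bound. As in the framework set up at the start of Section~\ref{Con}, for any three distinct lifted codewords $[I_k|M_1], [I_k|M_2], [I_k|M_3]$ with $M_i\in\mathcal{M}_5$, the spanning condition $\dim(U_1+U_2+U_3)\ge 2k+\gamma$ reduces to
\[
\text{rank}\begin{pmatrix} M_2-M_1\\ M_3-M_1\end{pmatrix}\ge k+\gamma.
\]
The new difficulty compared with the earlier theorems in this section is that $\mathcal{M}_5$ is not closed under subtraction, because the entries $b_i=a^2$ are nonlinear in $a$; consequently we cannot reduce the problem to a single pair of nonzero matrices in $\mathcal{M}_5$, as was done for $q=2$.

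My approach is a direct three-matrix analysis. Write $M_i=sh(M_i')$ with $M_i'=(D_{i,1}|D_{i,2}|\cdots|D_{i,t})$, each $D_{i,s}\in\mathcal{D}$, and let $a^{(i)}_{s,r}$ denote the free parameters of $D_{i,s}$, where $r\in\{1,\gamma+1,\ldots,(\ell-1)\gamma+1\}$. I first locate the smallest block position $s_0$ at which the triple $(D_{1,s_0},D_{2,s_0},D_{3,s_0})$ is not constant and focus on that block; for $s<s_0$ the three blocks cancel in the differences, so the first nonzero columns of $M_2-M_1$ and $M_3-M_1$ both lie in block $s_0$ or later. Let $i_0,j_0$ denote the smallest column indices within block $s_0$ at which the top rows of $M_2-M_1$ and $M_3-M_1$ are respectively nonzero.

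From here the argument mirrors the case analysis in the proof of Theorem~\ref{lower bound for B2(n,k,k+2;3)}, recast with stride $\gamma$ in place of $\lceil\gamma/2\rceil$. In the generic configurations, where the leading positions $i_0,j_0$ are far apart or lie in separated shifted blocks, the diagonal structure of $sh(\cdot)$ already produces $k+\gamma$ linearly independent rows among the $2k$ rows of the stacked difference matrix, using only the linear $a$-columns. The delicate cases are those where $i_0$ and $j_0$ are close together. Here I exploit the nonlinearity: since the map $a\mapsto a^2$ on $\mathbb{F}_q$ with $q>2$ is not $\mathbb{F}_q$-affine, three distinct scalars $a^{(1)},a^{(2)},a^{(3)}$ satisfy the property that $(a^{(2)})^2-(a^{(1)})^2$ and $(a^{(3)})^2-(a^{(1)})^2$ cannot both be written as $\mathbb{F}_q$-linear combinations of $a^{(2)}-a^{(1)}$ and $a^{(3)}-a^{(1)}$. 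Applied to a free parameter in block $s_0$, this forces the last column of that block to contribute the final $\gamma$ independent rows that the purely linear $a$-columns fail to provide.

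The main obstacle I expect is precisely the bookkeeping in these boundary cases, where a two-matrix rank deficiency would occur under the linearised analogue of the construction and the nonlinear ingredient has to supply exactly the missing rank across three matrices simultaneously. One has to verify that the quadratic contribution is not cancelled by any $\mathbb{F}_q$-linear combination between the two difference-matrices, and that the rows it pivots are disjoint from the pivot columns already used in the generic part of the case analysis. This is a finite but careful verification; once it is settled, the inequality $\text{rank}\ge k+\gamma$ holds uniformly over all triples, and the theorem follows.
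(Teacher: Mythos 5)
Your overall plan is the right one and follows the framework the paper itself uses: reduce the spanning condition to
\[
\operatorname{rank}\begin{pmatrix} sh(M_2-M_1)\\ sh(M_3-M_1)\end{pmatrix}\ge k+\gamma,
\]
locate the leading nonzero columns, treat the well-separated case by the diagonal structure of $sh(\cdot)$, and save the quadratic $b$-column for the tight case. However, the nonlinearity lemma you invoke in the tight case is false. You claim that for distinct $a,b,c\in\mathbb{F}_q$ the values $b^2-a^2$ and $c^2-a^2$ cannot \emph{both} be written as $\mathbb{F}_q$-linear combinations of $b-a$ and $c-a$. But $b^2-a^2=(b+a)(b-a)$ and $c^2-a^2=(c+a)(c-a)$, so each is already a scalar multiple of the corresponding difference; both are trivially in that span. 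As stated, the property you want the squares to provide never holds, so the ``quadratic contribution'' cannot supply the missing rank in the way you describe, and the boundary cases you defer to bookkeeping would not close.

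The property the construction actually exploits is more specific. When the two leading $a$-column positions coincide, one sets $s=(c-a)/(b-a)\in\mathbb{F}_q^*$ and passes to the row combination $N=(M_3-M_1)-s(M_2-M_1)$, whose leading $a$-entry vanishes by construction. If the next nonzero entry of $N$ in the top row is again an $a$-position, the separation argument applies to the pair $(M_2-M_1,N)$. Otherwise one looks at the $b$-column in the same block: there the combined entry equals $c^2-a^2-s(b^2-a^2)=(c-a)(c-b)$, and this is nonzero \emph{precisely because} $s\ne 1$ forces $c\ne b$ (the case $s=1$ forces $M_2=M_3$, a contradiction). In other words, the quadratic term does not live outside $\operatorname{span}\{b-a,c-a\}$; rather, it survives the \emph{particular} cancellation dictated by $c-a=s(b-a)$. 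That distinction is the crux you are missing, and without it your close-position case does not go through.
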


\begin{proof}
We aim to show that for every distinct matrices $M_1, M_2, M_3\in \mathcal{M'}$, we have $rank\begin{pmatrix}
                                                                                 I_k & sh(M_1) \\
                                                                                 I_k & sh(M_2) \\
                                                                                 I_k & sh(M_3)
                                                                               \end{pmatrix}\geq2k+\gamma.$
    It is equivalent to show that for every distinct matrices $M_1, M_2, M_3\in \mathcal{M'}$, we have $rank\begin{pmatrix}
                                                                                 sh(M_2-M_1) \\
                                                                                 sh(M_3-M_1)
                                                                               \end{pmatrix}\geq k+\gamma.$

    Let $i$ and $j$ be the least integers such that $(M_2-M_1)_{1i}$ and $(M_3-M_1)_{1j}$ are nonzero, respectively. Since $M_1, M_2, M_3$ are distinct, $i$ and $j$ exist. Without loss of generality, we assume that $i\leq j.$

    If $j\ge i+\gamma,$ then the first $k$ rows and the last $\gamma$ rows of $\begin{pmatrix}
                                                                                 sh(M_2-M_1) \\
                                                                                 sh(M_3-M_1)
                                                                               \end{pmatrix}$ are linearly independent, which implies that $rank\begin{pmatrix}
                                                                                 sh(M_2-M_1) \\
                                                                                 sh(M_3-M_1)
                                                                               \end{pmatrix}\geq k+\gamma.$

     It remains to prove that $rank\begin{pmatrix}
                                                                                 sh(M_2-M_1) \\
                                                                                 sh(M_3-M_1)
                                                                               \end{pmatrix}\geq k+\gamma$
     for $i\leq j<i+\gamma$. $(M_2-M_1)_{1i}\neq0$ implies that $i=\alpha(k+1)+1+\beta\gamma$, where $0\leq\alpha\leq t-1$ and $0\leq\beta\leq \ell-1$. Similarly, $j=\lambda(k+1)+1+\mu\gamma$, where $0\leq\lambda\leq t-1$ and $0\leq\mu\leq \ell-1$. So $j-i=(\lambda-\alpha)(k+1)+(\mu-\beta)\gamma$. Since $i\leq j<i+\gamma$, we have that $\alpha=\lambda$ and $\beta=\mu$. Without loss of generality, we assume that $M_2-M_1=(O|\cdots|O|B_{\alpha+1}-A_{\alpha+1}|B_{\alpha+2}-A_{\alpha+2}|\cdots|B_t-A_t)$ and $M_3-M_1=(O|\cdots|O|C_{\alpha+1}-A_{\alpha+1}|C_{\alpha+2}-A_{\alpha+2}|\cdots|C_t-A_t)$, where $A_{\alpha+1}, \ldots, A_t, B_{\alpha+1}, \ldots, B_t, C_{\alpha+1}, \ldots, C_t\in\mathcal{D}$, $0\neq(M_2-M_1)_{1i}=(B_{\alpha+1}-A_{\alpha+1})_{1, 1+\beta\gamma}$, and $0\neq(M_3-M_1)_{1i}=(C_{\alpha+1}-A_{\alpha+1})_{1, 1+\beta\gamma}$. For convenience, put $a=(A_{\alpha+1})_{1, 1+\beta\gamma}, b=(B_{\alpha+1})_{1, 1+\beta\gamma}$ and $c=(C_{\alpha+1})_{1, 1+\beta\gamma}$. Then $c-a=s(b-a)$ for some $s\in\mathbb{F}_q^*$, and consider the matrix $(M_3-M_1)-s(M_2-M_1)$. Let $j'$ be the least integer such that $((M_3-M_1)-s(M_2-M_1))_{1j'}$ is nonzero.

     If $j'$ is of the form $\lambda'(k+1)+1+\mu'\gamma$ for some $\lambda<\lambda'\leq t-1$ and $0\leq\mu'\leq \ell-1$, then $j'>j$. Thus $j'\geq i+\gamma$. In this case, the first $k$ rows and the last $\gamma$ rows of $\begin{pmatrix}
                                                                                 sh(M_2-M_1) \\
                                                                                 sh((M_3-M_1)-s(M_2-M_1))
                                                                               \end{pmatrix}$ are linearly independent.

     If $j'$ is not of the form $\lambda'(k+1)+1+\mu'\gamma$ and $s=1$, then $M_3-M_1=M_2-M_1$, which means that $M_3=M_2$, a contradiction.

     If $j'$ is not of the form $\lambda'(k+1)+1+\mu'\gamma$ and $s\neq1$, then consider $(A_{\alpha+1})_{1+\beta\gamma, k+1}, (B_{\alpha+1})_{1+\beta\gamma, k+1}$ and $(C_{\alpha+1})_{1+\beta\gamma, k+1}$. By the construction of $\mathcal{D}$, $(A_{\alpha+1})_{1+\beta\gamma, k+1}=a^2, (B_{\alpha+1})_{1+\beta\gamma, k+1}=b^2$ and $(C_{\alpha+1})_{1+\beta\gamma, k+1}=c^2$. Recall that we have $c-a=s(b-a)\neq0$. Then $c^2-a^2-s(b^2-a^2)=(c-a)(c-b)$. Since $s\neq1$, $c\neq b$, then $(c-a)(c-b)\neq0$. Thus $((M_3-M_1)-s(M_2-M_1))_{m, (\alpha+1)(k+1)}$ is nonzero for $1+\beta\gamma\leq m\leq(\beta+1)\gamma$. Therefore, the first $k$ rows, the $\{k+1+\beta\gamma, k+2+\beta\gamma, \cdots, k+(\beta+1)\gamma\}$-th rows of $\begin{pmatrix}
                                                                                 sh(M_2-M_1) \\
                                                                                 sh((M_3-M_1)-s(M_2-M_1))
                                                                               \end{pmatrix}$ are linearly independent, which implies that $rank\begin{pmatrix}
                                                                                 sh(M_2-M_1) \\
                                                                                 sh(M_3-M_1)
                                                                               \end{pmatrix}\geq k+\gamma.$
\end{proof}

\section{Existence of lower bounds for covering Grassmannian codes}\label{Ext}
In this section, we will derive two kinds of lower bounds for covering Grassmannian codes. The first one is based on the existence of the spread and the second one based on the probabilistic method.
\subsection{Greedy algorithm}
In this subsection, we give a lower bound for $B_q(n,k,(\alpha-1)k;\alpha)$ when $k\mid n$ and $n$ is large enough by a greedy algorithm.

We first describe the algorithm for $B_q(n,k,2k;3)$.
Let $k$ and $n$ be integers such that $k\mid n.$ Let $V$ be a vector space of dimension $n$ over $\mathbb{F}_q.$  Denote the $k$-spread by $\mathcal{S}$ with $|\mathcal{S}|=\frac{q^n-1}{q^k-1}.$ Choose two elements in $\mathcal{S},$ say $V_1$ and $V_2.$ So $V_1$ and $V_2$ span a $2k$-subspace of $V$ which contains $\frac{q^{2k}-1}{q-1}$ $1$-subspaces. Since each $1$-subspace is contained in at most one element of $\mathcal{S},$ we throw away the elements in $\mathcal{S}$ which contain at least one $1$-subspace of $V_1\oplus V_2$ and denote the left by $\mathcal{S}_1.$ If $|\mathcal{S}|>\frac{q^{2k}-1}{q-1}$, then $|S_1|>0$. We can pick one element from $S_1$, say $V_3,$ which together with $V_1$ and $V_2$ spans a $3k$-subspace of $V.$  Then we throw away the elements in $\mathcal{S}_1$ which contain at least one $1$-subspace of $V_1\oplus V_3$ or $V_2\oplus V_3$ and denote the left by $\mathcal{S}_2.$ If $|\mathcal{S}_1|>2\frac{q^{2k}-1}{q-1},$ then $|S_2|>0$. We can pick one element from $S_2$, say $V_4$, which together with $V_1,V_2$ and $V_3$ satisfies the property that any three of them span a $3k$-subspace. Iteratively, if $|\mathcal{S}_{m-3}|>(m-2)\frac{q^{2k}-1}{q-1},$ we can always find a $V_m$ such that $V_1,\ldots,V_m$ form a $3$-$(n,k,2k)_q^c$ code. Now we could maximize $m$ such that $\frac{q^n-1}{q^k-1}-\sum\limits_{i=1}^{m-2}i\frac{q^{2k}-1}{q-1}\ge 1,$ to give a lower bound for $B_q(n,k,2k;3).$

For $\alpha>3,$ we can use a similar strategy. We also start with a spread $\mathcal{S}$ of size $\frac{q^n-1}{q^k-1}.$ Pick two elements $V_1$ and $V_2$ of $\mathcal{S},$ throw away the elements in $\mathcal{S}$ which contain at least one $1$-subspace of $V_1\oplus V_2,$ and denote the left by $\mathcal{S}_1.$ Thus each element in $\mathcal{S}_1$ together with $V_1$ and $V_2$ spans a $3k$-subspace. Pick  one element from $\mathcal{S}_1,$ say $V_3,$ throw away all elements in $\mathcal{S}_1$ which contain at least one $1$-subspace of $V_1\oplus V_2\oplus V_3,$ and denote the left by $\mathcal{S}_2.$ Thus each element in $\mathcal{S}_2$ together with $V_1,V_2$ and $V_3$ spans a $4k$-subspace. Iteratively, if $|\mathcal{S}|>\frac{q^{(\alpha-1)k}-1}{q-1}$, we can pick out $V_1,V_2,\ldots,V_{\alpha-1}$ such that they span an $(\alpha-1)k$-subspace and get $\mathcal{S}_{\alpha-2}$ such that each element in $\mathcal{S}_{\alpha-2}$ together with $V_1,\ldots,V_{\alpha-1}$ spans an $\alpha k$-subspace. Then pick one element from $\mathcal{S}_{\alpha-2},$ say $V_{\alpha},$ and throw away all elements in $\mathcal{S}_{\alpha-2}$ which contain at least one $1$-subspace of $V_1\oplus\cdots\oplus\tilde{V_i}\oplus\cdots\oplus V_{\alpha}$ for $i=1,2,\ldots,\alpha-1,$ where $V_1\oplus\cdots\oplus\tilde{V_i}\oplus\cdots\oplus V_{\alpha}$ means $V_i$ does not occur in the expression. Denote the left by $\mathcal{S}_{\alpha-1}.$ Thus each element in $\mathcal{S}_{\alpha-1}$ together with $V_1,\ldots,V_\alpha$ forms an $\alpha$-$(n,k,(\alpha-1)k)_q^c$ code. Pick one element, say $V_{\alpha+1},$ from $\mathcal{S}_{\alpha-1}$ and throw away all elements in $\mathcal{S}_{\alpha-1}$ which contain at least one $1$-subspace of $V_1\oplus\cdots\oplus\tilde{V_i}\oplus\cdots\oplus\tilde{V_j}\oplus\cdots\oplus V_{\alpha+1},$ where $i\neq j\in\{1,2,\ldots,\alpha\},$ denote the left by $\mathcal{S}_{\alpha}.$  Thus each element in $\mathcal{S}_{\alpha}$ together with $V_1,\ldots,V_{\alpha+1}$ forms an $\alpha$-$(n,k,(\alpha-1)k)_q^c$ code. Iteratively, if $|\mathcal{S}_{m-2}|>|\mathcal{S}_{m-3}|-{m-2\choose \alpha-2}\frac{q^{(\alpha-1)k}-1}{q-1},$ then we can always pick a $V_m$ from $\mathcal{S}_{m-2}$ such that $V_1,\ldots,V_m$ form an $\alpha$-$(n,k,(\alpha-1)k)_q^c$ code. Therefore, we could maximize $m$ such that
$\frac{q^n-1}{q^k-1}-\sum\limits_{i=0}^{m-\alpha}{\alpha-2+i\choose \alpha-2}\frac{q^{(\alpha-1)k}-1}{q-1}\ge 1$, to give a lower bound for $B_q(n,k,(\alpha-1)k;\alpha).$

\begin{theorem}\label{51}
Let $n,k$ and $\alpha$ be positive integers such that $k\mid n$, $\alpha\geq 3$ and $n\geq \alpha k$. Then $B_q(n,k,(\alpha-1)k;\alpha)\geq q^{\frac{n-\alpha k+1}{\alpha-1}}$.
\end{theorem}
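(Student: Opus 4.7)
The construction is exactly the greedy algorithm spelled out just before the statement, which successively extracts codewords $V_1,\dots,V_m$ from a $k$-spread of $\mathbb{F}_q^n$ (which has $N:=(q^n-1)/(q^k-1)$ elements). The procedure produces $m$ codewords as long as the running count of spread elements ruled out never reaches $N$, i.e.\ as long as
\[
N-\sum_{i=0}^{m-\alpha}\binom{\alpha-2+i}{\alpha-2}T\ge 1,\qquad T:=\frac{q^{(\alpha-1)k}-1}{q-1}.
\]
So the task reduces to exhibiting an integer $m\ge q^{(n-\alpha k+1)/(\alpha-1)}$ for which this inequality holds. The first step is to collapse the sum via the hockey-stick identity
\[
\sum_{i=0}^{m-\alpha}\binom{\alpha-2+i}{\alpha-2}=\binom{m-1}{\alpha-1},
\]
reducing the feasibility condition to $\binom{m-1}{\alpha-1}T\le N-1$.

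Next I would set $M:=q^{(n-\alpha k+1)/(\alpha-1)}$, take $m:=\lceil M\rceil$, and verify feasibility by three elementary estimates. Since $m-1<M$ and $M^{\alpha-1}=q^{n-\alpha k+1}$, one has $\binom{m-1}{\alpha-1}<M^{\alpha-1}/(\alpha-1)!=q^{n-\alpha k+1}/(\alpha-1)!$. Second, $T<q^{(\alpha-1)k}/(q-1)$. Third, $N-1=q^k(q^{n-k}-1)/(q^k-1)\ge q^{n-k}$, which uses $n\ge\alpha k\ge 2k$. Multiplying the first two and dividing by the third, feasibility reduces to the clean numerical inequality $(\alpha-1)!(q-1)\ge q$; this holds for all $\alpha\ge 3$ and $q\ge 2$, since $(\alpha-1)!\ge 2$ and $2(q-1)\ge q$ is equivalent to $q\ge 2$. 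Hence the greedy algorithm successfully produces at least $\lceil M\rceil\ge M=q^{(n-\alpha k+1)/(\alpha-1)}$ codewords.

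I expect the only delicate step to be this final constant-factor check, whose extremal case $(q,\alpha)=(2,3)$ gives equality in $(\alpha-1)!(q-1)\ge q$; the argument still goes through because the binomial estimate $\binom{m-1}{\alpha-1}<M^{\alpha-1}/(\alpha-1)!$ is strict (the factors $(M-j)/M$ for $1\le j\le\alpha-1$ provide the slack). Everything else is direct arithmetic bookkeeping: expanding $N-1$ and $T$, applying hockey-stick, and tracking the $q$-power exponents.
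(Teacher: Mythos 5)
Your proof is correct and follows exactly the paper's route: the greedy extraction from a $k$-spread, as described immediately before the theorem statement, followed by verifying that $\binom{m-1}{\alpha-1}\cdot\frac{q^{(\alpha-1)k}-1}{q-1}\le\frac{q^n-1}{q^k-1}-1$ holds for $m=\lceil q^{(n-\alpha k+1)/(\alpha-1)}\rceil$. The paper leaves this final arithmetic implicit; your hockey-stick collapse and the three estimates (including the $n\ge 2k$ check for $N-1\ge q^{n-k}$ and the reduction to $(\alpha-1)!(q-1)\ge q$) supply the missing bookkeeping cleanly.
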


\begin{remark}
  When $k\nmid n,$ one can use the same strategy by using the existence of partial spreads. For simplicity, we omit the details here.
\end{remark}

\subsection{Probabilistic lower bounds}
In this subsection, we will give two probabilistic lower bounds for $B_q(n,k,\delta;\alpha).$ We first give the following easy one.
\begin{theorem}\label{basepm}
Let $k,\delta$ and $\alpha$ be positive integers and $q$ be a prime power. Then there exists a constant $c$, depending only on $q,k,\delta,\alpha$ with the following property. For any sufficiently large $n$, there exists an $\alpha$-$(n,k,\delta)_q^c$ code with size at least $cq^{(k-\frac{\delta-1}{\alpha-1})n}$.
\end{theorem}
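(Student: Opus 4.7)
The plan is to use the standard deletion (alteration) method. Fix a parameter $p \in (0,1)$ to be chosen later, and form a random subfamily $\mathcal{R} \subseteq \mathcal{G}_q(n,k)$ by including each $k$-subspace independently with probability $p$. Let $X = |\mathcal{R}|$ and let $Y$ be the number of unordered $\alpha$-subsets $\{U_1, \ldots, U_\alpha\} \subseteq \mathcal{R}$ with $\dim(\mathrm{span}\{U_1, \ldots, U_\alpha\}) < k+\delta$. Deleting one codeword from each such bad $\alpha$-subset produces an $\alpha$-$(n,k,\delta)_q^c$ code of size at least $X - Y$, so it suffices to choose $p$ so that $E[X] - E[Y]$ is of the claimed order.

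The lower bound $E[X] = p \Gaussbinom{n}{k}_q \ge p\, q^{k(n-k)}$ is immediate. For $E[Y]$, I will use the key observation that any $\alpha$-tuple of $k$-subspaces spanning a subspace of dimension at most $k+\delta-1$ must lie inside some common $(k+\delta-1)$-subspace $W$ of $\mathbb{F}_q^n$. Using the two-sided estimate $q^{a(n-a)} \le \Gaussbinom{n}{a}_q \le \tau\, q^{a(n-a)}$ from the preliminaries, the number of such $W$'s is at most $\tau\, q^{(k+\delta-1)(n-k-\delta+1)}$, and each contains at most $\tau\, q^{k(\delta-1)}$ subspaces of dimension $k$. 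Hence
\[
E[Y] \;\le\; p^\alpha \Gaussbinom{n}{k+\delta-1}_q \binom{\Gaussbinom{k+\delta-1}{k}_q}{\alpha} \;\le\; C_1\, p^\alpha\, q^{(k+\delta-1)n},
\]
where $C_1$ depends only on $q, k, \delta, \alpha$ (absorbing the $n$-independent factors, including $q^{-(k+\delta-1)^2}$ and $q^{k(\delta-1)\alpha}$).

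Now I will balance the two quantities. Setting $p = c_0\, q^{-(\delta-1)n/(\alpha-1)}$ for a sufficiently small constant $c_0 = c_0(q,k,\delta,\alpha) > 0$, a direct computation gives
\[
\frac{E[Y]}{E[X]} \;\le\; \frac{C_1\, p^{\alpha-1}\, q^{(k+\delta-1)n}}{q^{k(n-k)}} \;=\; C_1\, c_0^{\alpha-1}\, q^{k^2},
\]
which can be made at most $1/2$ by choosing $c_0$ small enough. For $n$ large enough we also have $p \le 1$ since the exponent $-(\delta-1)/(\alpha-1)$ is non-positive (and if $\delta = 1$ the result is trivial from Theorem~\ref{upper bound of Etzion}-style constructions, so we may assume $\delta \ge 2$). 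Then $E[X - Y] \ge \tfrac{1}{2} E[X] \ge \tfrac{1}{2} c_0\, q^{k(n-k) - (\delta-1)n/(\alpha-1)} \ge c\, q^{(k - (\delta-1)/(\alpha-1))n}$ for a suitable constant $c$. A standard probabilistic argument shows that some realization of $\mathcal{R}$ achieves $X - Y \ge E[X-Y]$, and the deletion of one codeword per bad $\alpha$-subset yields the desired code.

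The entire argument is routine once the bound on bad $\alpha$-subsets is set up correctly; the only genuine piece of work is the bookkeeping in the geometric counting step, namely realizing that the span condition is equivalent to being contained in a common $(k+\delta-1)$-subspace, and then choosing $p$ as the unique scale at which the first-moment and $\alpha$-th-moment contributions to the expected deficit balance.
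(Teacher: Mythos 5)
Your proposal is correct and follows essentially the same route as the paper: the deletion method with the same probability scaling $p=\Theta\bigl(q^{-(\delta-1)n/(\alpha-1)}\bigr)$, counting bad $\alpha$-tuples via a containing subspace and then deleting one member per bad tuple. The only cosmetic difference is that you bound $\mathbf{E}[Y]$ with a single term using a common $(k+\delta-1)$-dimensional container, whereas the paper sums over intermediate span dimensions $i=k+1,\ldots,k+\delta-1$; both give upper bounds of the same order.
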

\begin{proof}
  Set $p=Cq^{-\frac{\delta-1}{\alpha-1}n}$ for some constant $C=C(q,k,\delta,\alpha)$ and pick each $k$-subspace of an $n$-dimensional space $V$ independently with probability $p.$ Let the random variable $X$ denote the number of subspaces picked, then
  \[
  \mathbf{E}[X]=p\Gaussbinom{n}{k}_q=C_0q^{(k-\frac{\delta-1}{\alpha-1})n},
  \]
   where $C_0=C_0(q,k,\delta,\alpha)$ is independent with $n.$
   We say a bad event happens if some $\alpha$ $k$-subspaces span a subspace of dimension at most $k+\delta-1.$ Let $Y$ denote the number of bad events, we have
   \[
   \mathbf{E}[Y]\leq\sum_{i=k+1}^{k+\delta-1}p^{\alpha}\Gaussbinom{n}{i}_q{\Gaussbinom{i}{k}_q \choose \alpha}.
   \]
   By the evaluation of $p,$ we have $\mathbf{E}[X]\geq 2\mathbf{E}[Y].$ By the deletion method, whenever a bad event happens, we remove one $k$-subspace and finally get the desired result.
\end{proof}

Combining Theorem \ref{subspace general upper bound} and Theorem \ref{basepm}, we get the following result.

\begin{corollary}\label{54}
If $\alpha-1\mid\delta-1$, then $B_q(n,k,\delta;\alpha)=\Theta(q^{(k-\frac{\delta-1}{\alpha-1})n}).$
\end{corollary}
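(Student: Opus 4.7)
The plan is to show that the lower bound from Theorem \ref{basepm} matches (up to a constant depending only on $q,k,\delta,\alpha$) the upper bound from Theorem \ref{subspace general upper bound}, once the hypothesis $(\alpha-1)\mid(\delta-1)$ is used to pin down the floor in the upper bound.

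First I would set $m=\tfrac{\delta-1}{\alpha-1}$, a nonnegative integer by hypothesis, so that $\delta=(\alpha-1)m+1$ and hence $\tfrac{\delta}{\alpha-1}=m+\tfrac{1}{\alpha-1}$. Then the parameter $h=\lfloor k+1-\tfrac{\delta}{\alpha-1}\rfloor$ appearing in Theorem \ref{subspace general upper bound} simplifies: one has $k+1-\tfrac{\delta}{\alpha-1}=(k-m)+\bigl(1-\tfrac{1}{\alpha-1}\bigr)$, and since $1-\tfrac{1}{\alpha-1}\in[0,1)$ for every $\alpha\ge 2$, we get $h=k-m$. Note that this also equals the exponent $k-\tfrac{\delta-1}{\alpha-1}$ appearing in the statement and in the lower bound of Theorem \ref{basepm}, which is exactly the coincidence that makes the corollary work.

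Next I would apply Theorem \ref{basepm} to obtain $B_q(n,k,\delta;\alpha)\ge c\,q^{(k-m)n}$ for some constant $c=c(q,k,\delta,\alpha)>0$ and all sufficiently large $n$, giving the lower half of the $\Theta$. For the matching upper bound I would invoke Theorem \ref{subspace general upper bound} with $h=k-m$ (the hypothesis $\delta\le(\alpha-1)k$ is equivalent to $h\ge 1$, and outside this range the statement is either vacuous or reduces to a trivial case), and then use the standard estimate
\[
q^{h(n-h)}\le\Gaussbinom{n}{h}_q<\tau\,q^{h(n-h)},\qquad \Gaussbinom{k}{h}_q\ge q^{h(k-h)},
\]
recorded in the preliminaries, to conclude
\[
B_q(n,k,\delta;\alpha)\le(\alpha-1)\,\frac{\Gaussbinom{n}{h}_q}{\Gaussbinom{k}{h}_q}\le(\alpha-1)\tau\,q^{h(n-k)}=O\bigl(q^{(k-m)n}\bigr),
\]
since $k$ is a fixed constant. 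Combining the two bounds yields $B_q(n,k,\delta;\alpha)=\Theta\bigl(q^{(k-\frac{\delta-1}{\alpha-1})n}\bigr)$, as claimed.

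Essentially no obstacle arises here: the entire content of the corollary is the arithmetic identity $\lfloor k+1-\tfrac{\delta}{\alpha-1}\rfloor=k-\tfrac{\delta-1}{\alpha-1}$ under the divisibility hypothesis, which aligns the exponent in Theorem \ref{subspace general upper bound} with the exponent in Theorem \ref{basepm}; everything else is a direct quotation of the two theorems and the Gaussian coefficient estimate. The only mild nuisance is handling the degenerate cases $h=0$ or $h>k$, which correspond to parameter regimes where the corollary's claim is either immediate or not interesting, and can be excluded by a one-line argument.
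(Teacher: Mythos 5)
Your proposal is correct and follows exactly the route the paper intends: the paper's entire proof is the one-line remark ``Combining Theorem~\ref{subspace general upper bound} and Theorem~\ref{basepm}, we get the following result,'' and you have simply filled in the two elementary checks that make the combination work, namely the floor identity $\lfloor k+1-\tfrac{\delta}{\alpha-1}\rfloor=k-\tfrac{\delta-1}{\alpha-1}$ under the divisibility hypothesis and the Gaussian-coefficient estimate $\Gaussbinom{n}{h}_q/\Gaussbinom{k}{h}_q=O(q^{h(n-k)})=O(q^{hn})$.
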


When $\alpha-1\nmid\delta-1$, we can  do a little better than the results above. The following theorem is a $q$-analogue of set system in \cite{MR4118679}, and the proof is also similar. For completeness, we include it here.

\begin{theorem}\label{55}
  If $\alpha\geq 3$ and $\gcd(\alpha-1,\delta-1)=1,$ then $
    B_q(n,k,\delta;\alpha)=\Omega\left(q^{(k-\frac{\delta-1}{\alpha-1})n}n^{\frac{1}{\alpha-1}}\right)$.
\end{theorem}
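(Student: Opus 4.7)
The plan is to apply the Duke--Lefmann--R\"odl independent set bound (Lemma \ref{hypergraph-independent-number}) to a carefully prepared linear hypergraph whose independent sets are exactly $\alpha$-$(n,k,\delta)_q^c$ codes, in direct analogy with the set-system argument of \cite{MR4118679}.

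First I would form the auxiliary $\alpha$-uniform hypergraph $\mathcal{H}$ on $V(\mathcal{H}) = \mathcal{G}_q(n,k)$ whose hyperedges are the $\alpha$-subsets $\{V_1,\ldots,V_\alpha\}$ with $\dim(V_1+\cdots+V_\alpha)\le k+\delta-1$; by construction, an independent set in $\mathcal{H}$ is precisely an $\alpha$-$(n,k,\delta)_q^c$ code. Stratifying bad $\alpha$-tuples by the dimension $k+i$ of their span for $0\le i\le \delta-1$ and bounding each stratum by $\Gaussbinom{n}{k+i}_q\binom{\Gaussbinom{k+i}{k}_q}{\alpha}$, one obtains $|V(\mathcal{H})|=\Theta(q^{k(n-k)})$ and $|E(\mathcal{H})|=O(q^{k(n-k)+(\delta-1)n})$, with the dominant contribution coming from $i=\delta-1$, so the average hyperedge-degree of $\mathcal{H}$ is $\Theta(q^{(\delta-1)n})$.

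Next, I would sample each vertex of $\mathcal{H}$ independently with probability $p=c\,q^{-\beta n}$, where $\beta\in(0,\tfrac{\delta-1}{\alpha-1})$ is an exponent to be optimized and $c>0$ is an absolute constant. The expected vertex count of the sampled sub-hypergraph $\mathcal{H}_S$ is $\tilde V=\Theta(q^{(k-\beta)n})$, and its expected average degree is $\tilde d\sim q^{((\delta-1)-\beta(\alpha-1))n}$, so in particular $\log_q\tilde d=\Theta(n)$. To apply Lemma \ref{hypergraph-independent-number} I must make $\mathcal{H}_S$ linear; this I would do by deleting one vertex from each pair of sampled hyperedges sharing two or more common vertices. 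The key technical step is a $q$-analog count of such defect pairs: a pair of bad $\alpha$-tuples sharing $j\ge2$ common $k$-subspaces consists of $2\alpha-j$ distinct subspaces whose joint span is contained in a subspace of controlled dimension ($\le 2(k+\delta-1)$ minus the dimension of the shared span), yielding an explicit upper bound of the form $q^{c_j n}$ on the total defect count.

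Finally, after the linearization deletion, Lemma \ref{hypergraph-independent-number} delivers an independent set in the surviving hypergraph of size
\[
\Omega\!\left(\tilde V\cdot\left(\frac{\log\tilde d}{\tilde d}\right)^{1/(\alpha-1)}\right)
=\Omega\!\left(q^{(k-\frac{\delta-1}{\alpha-1})n}\cdot n^{1/(\alpha-1)}\right),
\]
and a standard concentration/alteration argument extracts a deterministic $\alpha$-$(n,k,\delta)_q^c$ code of this size. The main obstacle is the defect-count analysis in the linearization step: one must verify that a valid $\beta$ exists for which the expected number of $j$-wise overlap pairs is $o(\tilde V)$ for every $j\in\{2,\ldots,\alpha-1\}$, while $\log_q\tilde d$ remains linear in $n$. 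It is precisely in balancing these competing constraints that the arithmetic hypothesis $\gcd(\alpha-1,\delta-1)=1$ is expected to come into play, excluding the divisibility case $\alpha-1\mid\delta-1$ already handled tightly by Corollary~\ref{54}.
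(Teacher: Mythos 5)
Your outline matches the paper's strategy (probabilistic sampling, enforce linearity of an auxiliary $\alpha$-uniform hypergraph, then invoke the Duke--Lefmann--R\"odl independence bound, as a $q$-analogue of the set-system argument in \cite{MR4118679}), but your linearization step has a genuine gap. You propose to ``delete one vertex from each pair of sampled hyperedges sharing two or more common vertices.'' This does not work as stated: the expected number of such pairs is dominated by those whose shared $j$ subspaces span a very low-dimensional subspace, as low as dimension $k+1$. A short calculation shows that for those pairs the expected count is $O\!\left(q^{(k+2\delta-3-\beta(2\alpha-j))n}\right)$, and requiring this to be $o(\tilde V)=o(q^{(k-\beta)n})$ for $j=2$ forces $\beta>\frac{2\delta-3}{2\alpha-3}$, which is compatible with the other requirement $\beta<\frac{\delta-1}{\alpha-1}$ only when $\delta<\alpha$. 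The theorem, however, has no such restriction (e.g.\ $\alpha=3,\delta=4$ satisfies $\gcd(\alpha-1,\delta-1)=1$ but $\delta>\alpha$), so the naive deletion kills the construction.

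The missing idea is a \emph{two-stage} cleaning. Before touching overlap pairs, for each $2\le i\le\alpha-1$ you must delete a representative from every $i$-tuple of sampled codewords whose span has dimension at most $ik-f(i)$, for carefully chosen integer thresholds $f(i)$. After this pass, any surviving $i$-tuple spans dimension at least $ik-f(i)+1$, and only \emph{then} is the overlap-pair count $\mathbf{E}[P_i]=O\!\left(q^{(2(k+\delta-1)-(ik-f(i)+1)+(2\alpha-i)(-\frac{\delta-1}{\alpha-1}+\epsilon))n}\right)$ controllable. The two-sided requirement that both the degenerate-tuple count $\mathbf{E}[Y_i]$ and the overlap-pair count $\mathbf{E}[P_i]$ be $o(\mathbf{E}[X])$ pins $f(i)$ into the open unit interval $\left((i-1)(k-\tfrac{\delta-1}{\alpha-1}),\,(i-1)(k-\tfrac{\delta-1}{\alpha-1})+1\right)$; an integer exists there for every $i\in\{2,\dots,\alpha-1\}$ exactly when $\alpha-1\nmid(i-1)(\delta-1)$ for all such $i$, which is equivalent to $\gcd(\alpha-1,\delta-1)=1$. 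You correctly guessed the GCD hypothesis ``must come into play,'' but without the $\mathcal{Y}_i$-deletion stage there is nothing for it to balance against, and the count of overlap pairs is simply too large.
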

\begin{proof}
  Set $p=\Theta(q^{(-\frac{\delta-1}{\alpha-1}+\epsilon)n}).$ Denote $V=\mathbb{F}_q^n$ and let $\mathcal{C}_0$ be a set of $k$-subspaces by picking each element of $\Gaussbinom{V}{k}_q$ with probability $p.$ Let $X$ denote the number of elements of $\mathcal{C}_0.$ Then
  \begin{equation}\label{E[X]}
  \mathbf{E}[X]=p\Gaussbinom{n}{k}_q=\Theta(q^{(k-\frac{\delta-1}{\alpha-1}+\epsilon)n}).
  \end{equation}
  For $2\le i\le \alpha-1,$ let $\mathcal{Y}_i$ be the collection of all $i$ distinct elements of $\mathcal{C}_0$ which span a subspace of dimension at most $ik-f(i),$ where $f(i)$ will be determined later. Let $Y_i$ denote the size of $\mathcal{Y}_i.$ Then
  \begin{equation}\label{E[Y_i]}
    \mathbf{E}[Y_i]=O\left(p^i\Gaussbinom{n}{ik-f(i)}_q{\Gaussbinom{ik-f(i)}{k}_q\choose i}\right)=O\left(q^{(ik-f(i)-\frac{\delta-1}{\alpha-1}i+\epsilon i)n}\right),
  \end{equation}
  where the first estimation follows that there are at most $O\left(\Gaussbinom{n}{ik-f(i)}_q\right)$ subspaces of dimension at most $ik-f(i)$ and each contains at most $O\left(\Gaussbinom{ik-f(i)}{k}_q\right)$ $k$-subspaces, and for every fixed choice the probability of having exactly $i$ elements is $O(p^i).$

  We say that $\alpha$ distinct elements of $\mathcal{C}_0$ form a bad $\alpha$-system if they span a subspace of dimension at most $k+\delta-1.$ For each $2\le i\le\alpha-1,$ let $\mathcal{P}_i$ be the collection of unordered pairs of bad $\alpha$-systems which share precisely $i$ elements such that these $i$ elements span a subspace of dimension at least $ik-f(i)+1.$ For $\{P,P'\}\in\mathcal{P}_i,$ it is easy to verify that
  \[
   |P\cup P'|=2\alpha-i
  \]
  and
  \[
    \dim(P+P')\le 2(k+\delta-1)-(ik-f(i)+1).
  \]
  Let $P_i$ denote the size of $\mathcal{P}_i,$ then

  \begin{align}\label{E[P_i]}
    \mathbf{E}[P_i]=&O\left(p^{2\alpha-i}\Gaussbinom{n}{2(k+\delta-1)-(ik-f(i)+1)}_q{\Gaussbinom{2(k+\delta-1)-(ik-f(i)+1)}{k}_q\choose 2\alpha-i}\right)\nonumber\\
                    =&O\left(q^{(2(k+\delta-1)-(ik-f(i)+1)+(2\alpha-i)(-\frac{\delta-1}{\alpha-1}+\epsilon))n}\right),
  \end{align}
  where the first estimation follows the fact that  there are at most $O(\Gaussbinom{n}{2(k+\delta-1)-(ik-f(i)+1)}_q$ subspaces of dimension at most $2(k+\delta-1)-(ik-f(i)+1)$ and each contains at most $O(\Gaussbinom{2(k+\delta-1)-(ik-f(i)+1)}{k}_q)$ $k$-subspaces, and for every fixed choice the probability of having exactly $2\alpha-i$ elements is $O(p^{2\alpha-i}).$

  Finally, let $Z$ denote the number of bad $\alpha$-systems in $\mathcal{C}_0,$ similarly
  \begin{equation}\label{E[Z]}
    \mathbf{E}[Z]=O\left(p^{\alpha}\Gaussbinom{n}{k+\delta-1}_q{\Gaussbinom{k+\delta-1}{k}_q\choose \alpha}\right)=O\left(q^{(k+\delta-1+\alpha(-\frac{\delta-1}{\alpha-1}+\epsilon))n}\right).
  \end{equation}

  From (\ref{E[X]}), (\ref{E[Y_i]}) and (\ref{E[P_i]}), it is not hard to see that $\mathbf{E}[Y_i]=o(\mathbf{E}[X])$ and $\mathbf{E}[P_i]=o(\mathbf{E}[X])$ hold if and only if
    \begin{equation}\label{f(i)condition}
    \left\{
    \begin{array}{l}
            f(i)>(i-1)(k-\frac{\delta-1}{\alpha-1})+\epsilon(i-1), \\
            f(i)<(i-1)(k-\frac{\delta-1}{\alpha-1})+1-\epsilon(2\alpha-i+1).
        \end{array}
\right.
  \end{equation}

   Let $a=\min_{2\le i\le\alpha-1}\left\{\frac{1}{i-1}\left(f(i)-\left(k-\frac{\delta-1}{\alpha-1}\right)\right),\frac{1}{2\alpha-i+1}\left((i-1)\left(k-\frac{\delta-1}{\alpha-1}\right)+1-f(i)\right)\right\}$.

    Observe that there is $\epsilon\in(0,a)$ satisfying (\ref{f(i)condition}) if and only if for each $2\le i\le\alpha-1,$ there exists an integer $f(i)$ such that
    \begin{equation}\label{f(i)finalcondition}
      (i-1)(k-\frac{\delta-1}{\alpha-1})<f(i)<(i-1)(k-\frac{\delta-1}{\alpha-1})+1.
    \end{equation}

   By the integrality of $f(i),$ (\ref{f(i)finalcondition}) holds if and only if
   \begin{equation*}
     \alpha-1\nmid(i-1)(\delta-1)
   \end{equation*}
   for each $2\le i\le\alpha-1.$ It is easy to verify that those $\alpha-2$ indivisibility conditions hold simultaneously if and only if
   \begin{equation*}
     \gcd(\alpha-1,\delta-1)=1.
   \end{equation*}
  Under this condition, it suffices to choose $f(i)=\left\lceil(i-1)\left(k-\frac{\delta-1}{\alpha-1}\right)\right\rceil$ for $2\le i\le\alpha-1$
 and an arbitrary $\epsilon\in(0,a).$

 Now we get a subset $\mathcal{C}_1$ from $\mathcal{C}_0$ as follows. For every $2\le i\le\alpha-1,$ delete one element from each member of $\mathcal{Y}_i,$ and one element from $P\cup P'$ for each pair $\{P,P'\}\in\mathcal{P}_i.$ By linearity of expectation, we have the following claim about $\mathcal{C}_1$.
 \begin{claim}
   \begin{enumerate}[(i)]
     \item $\mathbf{E}[|\mathcal{C}_1|]=\Theta(q^{(k-\frac{\delta-1}{\alpha-1}+\epsilon)n})$,
     \item the expected number of bad $\alpha$-systems contained in $\mathcal{C}_1$ is at most $O(q^{(k+\delta-1+\alpha(-\frac{\delta-1}{\alpha-1}+\epsilon))n})$,
     \item for each $2\le i\le\alpha,$ any $i$ distinct members in $\mathcal{C}_1$ span a subspace of dimension at least $ik-f(i)+1$,
     \item any two bad $\alpha$-systems in $\mathcal{C}_1$ can share at most one member.
   \end{enumerate}
 \end{claim}
 \begin{proof}
   Since
   \begin{align*}
     \mathbf{E}[|\mathcal{C}_1|] &=\mathbf{E}[X]-\sum_{i=2}^{\alpha-1}\mathbf{E}[Y_i]-\sum_{i=2}^{\alpha-1}\mathbf{E}[P_i] \\
      &=  \mathbf{E}[X]-o(\mathbf{E}[X])=\Theta(\mathbf{E}[X]),
   \end{align*}
   (i) is proved; (ii) follows from (\ref{E[Z]}) and that deleting members in $\mathcal{C}_0$ does not increase the number of bad $\alpha$-systems; (iii) holds since $\mathcal{C}_1$ does not contain any member of $\mathcal{Y}_i$ for any $2\le i\le\alpha-1;$ As for (iv), assume to the contrary that $\mathcal{C}_1$ contains two bad $\alpha$-systems sharing $i$ members for some $2\le i\le\alpha-1.$ By (iii), these $i$ members span a subspace of dimension at least $ik-f(i)+1,$ so they belong to $\mathcal{P}_i,$ which is a contradiction.
 \end{proof}

 Now we construct an auxiliary $\alpha$-uniform hypergraph $\mathcal{G}\subset{\mathcal{C}_1 \choose \alpha}$ as follows:
  \begin{itemize}
    \item the vertex set of $\mathcal{G}$ is formed by the members of $\mathcal{C}_1;$
    \item $\alpha$ vertices of $\mathcal{G}$ form an edge if and only if the corresponding $\alpha$ members of $\mathcal{C}_1$ form a bad $\alpha$-system.
  \end{itemize}
  It is easy to verify that the following hold:
  \begin{itemize}
    \item $\mathcal{G}$ is linear (by (iv));
    \item $\mathcal{G}$ has at least $\Theta(q^{(k-\frac{\delta-1}{\alpha-1}+\epsilon)n})$ vertices (by (i)) and at most $O(q^{(k+\delta-1+\alpha(-\frac{\delta-1}{\alpha-1}+\epsilon))n})$ edges (by (ii));
    \item  $d(\mathcal{G}),$ the average degree of $\mathcal{G},$ is at most
    \begin{equation*}
      d(\mathcal{G})=O\left(\frac{\alpha|E(\mathcal{G})|}{|V(\mathcal{G})|}\right)=O\left(q^{(\alpha-1)\epsilon n}\right).
    \end{equation*}
  \end{itemize}
  If follows from Lemma \ref{hypergraph-independent-number} that $\mathcal{G}$ has an independent set of size at least
  \begin{equation*}
    \Omega(|V(\mathcal{G})|\left(\frac{\log d(\mathcal{G})}{d(\mathcal{G})}\right)^{\frac{1}{\alpha-1}})=\Omega\left(q^{(k-\frac{\delta-1}{\alpha-1})n}n^{\frac{1}{\alpha-1}}\right).
  \end{equation*}
  Since an independent set in $\mathcal{G}$ corresponds an $\alpha$-$(n,k,\delta)_q^c$ code, the result follows.
  \end{proof}

\section{Conclusions}\label{conclu}
Motivated by an application in generalized combination networks, the covering Grassmannian codes are considered in this paper. First, we give a general upper bound by counting argument which improves the known results when $\delta>\alpha-1$. Then based on a connection between hypergraphs and covering Grassmannian codes, we obtain several new upper bounds. Specially, we establish the connection between covering Grassmannian codes and the $(6,3)$-problem. We also give the constructions for all parameters of $3$-$(n,k,\delta)_2^c$ codes when $\delta>k$ and provide two lower bounds based on spreads and probabilistic methods which determine the magnitude of $B_q(n,k,\delta;\alpha)$ when $\alpha-1\mid\delta-1$.

We conclude our paper by summarizing all known constructive results in Table \ref{table2}. From Table \ref{table2}, it can be obtained that the constructive lower bounds are still far from satisfactory even compared with the probabilistic ones. It would be of interest to narrow the gap by new methods.

\begin{table}[h]
\caption{Constructive Lower Bounds}\label{table2}
\begin{center}
\begin{tabular}{|c|c|c|c| p{5cm}|}
\hline
                    & Constructive lower bounds &\makecell{Probabilistic\\lower bounds} &  Upper bounds [Theorem \ref{subspace general upper bound}] \\ \hline
                    \makecell{$B_q(n,k,\delta;\alpha)$\\$1\leq\delta\leq k$}&$(\alpha-1)q^{(k-\delta+1)(n-k)}$ \cite{MR4306350}&$\Omega(q^{(k-\frac{\delta-1}{\alpha-1})n})$&$O(q^{\lfloor k+1-\frac{\delta}{\alpha-1}\rfloor n})$\\\hline
                    \makecell{$B_q(n,k,\delta;\alpha)$\\$1\leq\delta\leq k$}&$\Omega(q^{(k-\delta+1)n})$ \cite{MR4149371}&$\Omega(q^{(k-\frac{\delta-1}{\alpha-1})n})$&$O(q^{\lfloor k+1-\frac{\delta}{\alpha-1}\rfloor n})$\\\hline
$B_2(n,k,k+1;3)$  &  $2^{n-2k+1}$  [Theorem \ref{lower bound for B2(n,k,k+1;3)}] & $\Omega(2^{\frac{k}{2}n})$ & $O(2^{\lfloor\frac{k+1}{2}\rfloor n})$ \\ \hline
\makecell{$B_2(n,k,k+\gamma;3)$\\$k\ge3\lfloor\frac{\gamma}{2}\rfloor$}  & $2^{\lfloor\frac{n-2k+1}{k+1}\rfloor\lfloor\frac{k}{\lfloor\gamma/2\rfloor}\rfloor}$ [Theorem \ref{lower bound for B2(n,k,k+2;3)}]& $\Omega(2^{\frac{(k+1-\gamma)n}{2}})$ & $O(2^{\lfloor\frac{k-\gamma+2}{2}\rfloor n})$  \\ \hline
\makecell{$B_2(n,k,k+\gamma;3)$\\ $\gamma+1\leq k<3\lfloor\frac{\gamma}{2}\rfloor$} & $2^{2\lfloor\frac{n-2k+1}{k+1}\rfloor}$ [Theorem \ref{lower bound for B2(n,k,2k-1;3)}] &$\Omega(2^{\frac{(k+1-\gamma)n}{2}})$ &$O(2^{\lfloor\frac{k-\gamma+2}{2}\rfloor n})$    \\ \hline
$B_2(n,k,2k;3)$ & $2^{\lfloor\frac{n-k}{k}\rfloor}$ [Theorem \ref{lower bound for B2(n,k,2k;3)}] &  $\Omega(2^{\frac
{n}{2}})$ & $O(2^{n})$   \\ \hline
\makecell{$B_q(n,k,k+\gamma;3)$\\ $q>2$} & $q^{\lfloor\frac{n-2k+1}{k+1}\rfloor\lfloor\frac{k}{\gamma}\rfloor}$ [Theorem \ref{lower bound for Bq(n,k,k+gamma;3)}] & $\Omega(q^{\frac{(k-\gamma+1)n}{2}})$ & $O(q^{\lfloor\frac{k-\gamma+2}{2}\rfloor n})$   \\ \hline
\end{tabular}
\end{center}
\end{table}


\bibliographystyle{abbrv}
\bibliography{REF}

\end{document}